\newtheorem{theorem}{Theorem}
\theoremstyle{plain}
\newtheorem{acknowledgement}{Acknowledgement}
\newtheorem{corollary}{Corollary}
\newtheorem{definition}{Definition}
\newtheorem{example}{Example}
\newtheorem{lemma}{Lemma}
\newtheorem{notation}{Notation}
\newtheorem{proposition}{Proposition}
\newtheorem{remark}{Remark}
\numberwithin{equation}{section}
\begin{document}
\title[$p$-adic String Amplitudes and Local Zeta Functions]{Regularization of $p$-adic String Amplitudes, and Multivariate Local Zeta Functions}
\author{Miriam Bocardo-Gaspar}
\address{Centro de Investigaci\'{o}n y de Estudios Avanzados del Instituto
Polit\'{e}cnico Nacional\\
Departamento de Matem\'{a}ticas, Unidad Quer\'{e}taro\\
Libramiento Norponiente \#2000, Fracc. Real de Juriquilla. Santiago de
Quer\'{e}taro, Qro. 76230\\
M\'{e}xico.}
\email{mbocardo@math.cinvestav.mx}
\author{H. Garc\'{\i}a-Compe\'{a}n}
\address{Centro de Investigacion y de Estudios Avanzados del I.P.N., Departamento de
F\'{\i}sica, Av. Instituto Politecnico Nacional 2508, Col. San Pedro
Zacatenco, Mexico D.F., C.P. 07360, Mexico}
\email{compean@fis.cinvestav.mx}
\author{W. A. Z\'{u}\~{n}iga-Galindo}
\address{Centro de Investigaci\'{o}n y de Estudios Avanzados del Instituto
Polit\'{e}cnico Nacional\\
Departamento de Matem\'{a}ticas, Unidad Quer\'{e}taro\\
Libramiento Norponiente \#2000, Fracc. Real de Juriquilla. Santiago de
Quer\'{e}taro, Qro. 76230\\
M\'{e}xico.}
\email{wazuniga@math.cinvestav.edu.mx}
\thanks{The third author was partially supported by Conacyt Grant No. 250845.}
\subjclass[2000]{Primary 81E99, 11S40; Secondary 81E30, 46S10}
\keywords{$p$-adic string theory, string amplitudes, regularization, local zeta functions.}

\begin{abstract}
We prove that the $p$-adic Koba-Nielsen type string amplitudes are bona fide
integrals. We attach to these amplitudes Igusa-type integrals depending on
several complex parameters and show that these integrals admit meromorphic
continuations as rational functions. Then we use these functions to regularize
the Koba-Nielsen amplitudes. As far as we know, there is no a similar result
for the Archimedean Koba-Nielsen amplitudes. We also discuss the existence of
divergencies and the connections with multivariate Igusa's local zeta functions.

\end{abstract}
\maketitle

\section{Introduction}

This article aims to discuss some connections between $p$-adic string
amplitudes and $p$-adic local zeta functions (also called Igusa's local zeta
functions). In the 80s, Volovich posed the conjecture that the space-time has
a non-Archimedean structure at the level of the Planck scale and initiated the
$p$-adic string theory \cite{Vol}, see also \cite[Chapter 6]{Var},
\cite{V-V-Z}. Volovich noted that the integral expression for the Veneziano
amplitude of the open bosonic string can be generalized to a $p$-adic integral
and to an adelic integral giving rise to non-Archimedean Veneziano amplitudes.
Then Freund and Witten established (formally) that the ordinary Veneziano and
Virasoro-Shapiro four-particle scattering amplitudes can be factored in terms
of an infinite product of non-Archimedean string amplitudes \cite{F-W}, see
also \cite{Arefeva}. In $p$-adic string theory, justly as in the case of usual
string theory, the problem of computing the scattering amplitudes in the
perturbative theory is formulated as follows. Consider a certain number $N$ of
vertex operators $\mathcal{V}_{1}(x_{1})$, $\mathcal{V}_{2}(x_{2})$,$\dots$,
$\mathcal{V}_{N}(x_{N})$, which are inserted at arbitrary ordered points
$x_{1}$, $x_{2}$,$\dots$, $x_{N}$ in the world-sheet manifold $\Sigma_{g,N}$.
This manifold is a Riemann surface of genus $g$ and $N$ ordered marked points.
The computation of these amplitudes implies to perform formal integrations
over the moduli space of Riemann surfaces $\mathcal{M}_{g,N}$. Tree amplitudes
correspond to the moduli space of $N$ ordered points on $\Sigma_{0,N}$. In
this article we will workout the tree level amplitudes.

As a consequence of the ensuing interest on $p$-adic models of quantum field
theory, which is motivated by the fact that these models are exactly solvable,
there is a large list of $p$-adic type Feynman and string amplitudes that are
related with local zeta functions of Igusa-type, and it is interesting to
mention that it seems that the mathematical community working on local zeta
functions is not aware of this fact, see e.g. \cite{Arefeva}-\cite{Bleher},
\cite{Bogner}, \cite{B-F-O-W}-\cite{B-F}, \cite{Fra-Okada}-\cite{F-W},
\cite{Goshal}, \cite{Hloseuk-Spect}, \cite{Lerner}-\cite{Lerner2},
\cite{Marcolli}, \cite{Smirnov1}-\cite{Smirnov2}, \cite{Speer1} and the
references therein.

The connections between Feynman amplitudes and local zeta functions are very
old and deep. Let us mention that the works of Speer \cite{Speer} and Bollini,
Giambiagi and Gonz\'{a}lez Dom\'{\i}nguez \cite{B-G-Gonzalez-Dom} on
regularization of Feynman amplitudes in quantum field theory are based on the
analytic continuation of distributions attached to complex powers of
polynomial functions in the sense of Gel'fand and Shilov \cite{G-S}, see also
\cite{Belkale}-\cite{Bleher}, \cite{Bogner}, \cite{Marcolli}, among others.
There are several types of local zeta functions, for instance $p$-adic,
Archimedean, topological and motivic, among others, see e.g. \cite{Denef}%
-\cite{DL1}, \cite{G-S}, \cite{Igusa0}-\cite{Igusa}, and the references
therein. In the Archimedean setting, the local zeta functions were introduced
in the 50s by Gel'fand and Shilov. The main motivation was that the
meromorphic continuation of Archimedean local zeta functions implies the
existence of fundamental solutions (i.e. Green functions) for differential
operators with constant coefficients. This fact was established,
independently, by Atiyah \cite{Atiyah} and Bernstein \cite{Ber}. It is
important to mention here, that in the $p$-adic framework, the existence of
fundamental solutions for pseudodifferential operators is also a consequence
of the fact that the Igusa local zeta functions admit a meromorphic
continuations, see \cite[Chapter 5]{Zuniga-LNM-2016}, \cite[Chapter
10]{KKZuniga}. This analogy turns out to be very important in the rigorous
construction of quantum scalar fields in the $p$-adic setting, see
\cite{M-V-Zuniga} and the references therein.

In the 60s, Weil studied local zeta functions, in the Archimedean and
non-Archimedean settings, in connection with the Poisson-Siegel formula
\cite{We}. In the 70s, Igusa developed a uniform theory for local zeta
functions in characteristic zero \cite{Igusa0}-\cite{Igusa}. In the $p$-adic
setting, the local zeta functions are connected with the number of solutions
of polynomial congruences mod $p^{m}$ and with exponential sums mod $p^{m}$.
Recently Denef and Loeser introduced the motivic zeta functions which
constitute a vast generalization of $p$-adic local zeta functions
\cite{DL1}-\cite{DL2}.

Take $N\geq4$ and $s_{ij}\in\mathbb{C}$ satisfying $s_{ij}=s_{ji}$ for $1\leq
i<j\leq N-1$. In this article we study the following multivariate Igusa-type
zeta function:%
\begin{equation}
Z^{(N)}\left(  \boldsymbol{s}\right)  =%
{\displaystyle\int\limits_{\mathbb{Q}_{p}^{N-3}\smallsetminus\Lambda}}
{\displaystyle\prod\limits_{i=2}^{N-2}}
\left\vert x_{i}\right\vert _{p}^{s_{1i}}\left\vert 1-x_{i}\right\vert
_{p}^{s_{(N-1)i}}\text{ }%
{\displaystyle\prod\limits_{2\leq i<j\leq N-2}}
\left\vert x_{i}-x_{j}\right\vert _{p}^{s_{ij}}%
{\displaystyle\prod\limits_{i=2}^{N-2}}
dx_{i}, \label{zeta_funtion_string}%
\end{equation}
where $\boldsymbol{s}=\left(  s_{ij}\right)  \in\mathbb{C}^{D}$, here $D$
denotes the total number of possible subsets $\left\{  i,j\right\}  $, $%
{\textstyle\prod\nolimits_{i=2}^{N-2}}
dx_{i}$ is the normalized Haar measure of $\mathbb{Q}_{p}^{N-3}$, and%
\[
\Lambda:=\left\{  \left(  x_{2},\ldots,x_{N-2}\right)  \in\mathbb{Q}_{p}%
^{N-3};\text{ }%
{\displaystyle\prod\limits_{i=2}^{N-2}}
x_{i}\left(  1-x_{i}\right)  \text{ }%
{\displaystyle\prod\limits_{2\leq i<j\leq N-2}}
\left(  x_{i}-x_{j}\right)  =0\right\}  .
\]
We study integrals of type (\ref{zeta_funtion_string}) by using the theory of
local zeta functions, in this framework, it is not convenient, neither
necessary, to assume some algebraic dependency between the variables $s_{ij}$.
We call this type of integrals $p$\textit{-adic open string }$N$\textit{-point
zeta functions} because they appeared in connection with the $p$-adic open
string $N$-tachyon tree amplitudes, see e.g. \cite{B-F-O-W}-\cite{B-F},
\cite{F-O}-\cite{F-W}, \cite{Ghoshal:2000dd}, \cite{Hloseuk-Spect},
\cite{Lerner}, \cite{Smirnov2}, and the references therein. In all the
published literature about $p$-adic string amplitudes, integrals of type
(\ref{zeta_funtion_string}) have been used without considering the convergence
properties of them, i.e. the problem of the regularization of $p$-adic open
string $N$-tachyon amplitudes has not been considered before. In the light of
the theory of local zeta functions,\ the possible convergence of integrals of
type (\ref{zeta_funtion_string}) is a new and remarkable aspect. Theorem
\ref{TheoremA}, the main result of this article, establishes that the $p$-adic
open string $N$-point zeta function is a holomorphic function in a certain
domain of $\mathbb{C}^{D}$ and that it admits an analytic continuation to
$\mathbb{C}^{D}$ (denoted as $Z^{\left(  N\right)  }\left(  \boldsymbol{s}%
\right)  $) as a rational function in the variables $p^{-s_{ij}}%
,i,j\in\left\{  1,\ldots,N-1\right\}  $. In addition, if $\boldsymbol{s}%
=\left(  s_{ij}\right)  \in\mathbb{R}^{D}$, with $s_{ij}\geq0$ for
$i,j\in\left\{  1,\ldots,N-1\right\}  $, then the integral in
(\ref{zeta_funtion_string}) diverges to $+\infty$.

The typical approach to establish the existence of a meromorphic continuation
for an integral of Igusa-type (which is holomorphic in a certain domain) is
via Hironaka's resolution of singularities theorem, see e.g. \cite[Chapters 3,
5, 8]{Igusa}. Roughly speaking Hironaka's resolution theorem provides a finite
sequence of changes of variables (blow-ups) that allows to express an
Igusa-type integral as a linear combination of integrals involving monomials,
for this type of integrals the existence of an analytic continuation is easy
to show. If the analyticity of the initial Igusa-type integral is unknown,
then the above described approach cannot be used. In addition, nowadays,
Hironaka's resolution theorem is only valid in characteristic zero. Here, we
use an approach inspired in the calculations presented in \cite{B-F-O-W} and
in the Igusa $p$-adic stationary phase formula, see \cite[Theorem
10.2.1]{Igusa}, \cite{Zuniga1}-\cite{Zuniga2}. This approach works on
non-Archimedean local fields of arbitrary characteristic, for instance in
$\mathbb{F}_{q}\left(  \left(  t\right)  \right)  $, the field of formal
Laurent series over a finite field $\mathbb{F}_{q}$. In addition, our approach
provides an algorithm for computing the $p$-adic open string $N$-point zeta
functions\textit{.}

Take $\phi\left(  x_{2},\ldots,x_{N-2}\right)  $ a locally constant function
with compact support, then
\begin{gather*}
Z_{\phi}^{(N)}(\boldsymbol{s})=\\%
{\displaystyle\int\limits_{\mathbb{Q}_{p}^{N-3}\smallsetminus\Lambda}}
\phi\left(  x_{2},\ldots,x_{N-2}\right)
{\displaystyle\prod\limits_{i=2}^{N-2}}
\left\vert x_{i}\right\vert _{p}^{s_{1i}}\left\vert 1-x_{i}\right\vert
_{p}^{s_{\left(  N-1\right)  i}}%
{\displaystyle\prod\limits_{2\leq i<j\leq N-2}}
\left\vert x_{i}-x_{j}\right\vert _{p}^{s_{i}{}_{j}}%
{\displaystyle\prod\limits_{i=2}^{N-2}}
dx_{i},
\end{gather*}
for $\operatorname{Re}(s_{ij})>0$ for any $ij$, is a multivariate Igusa local
zeta function. In characte\-ristic zero, a general theory for this type of
local zeta functions was elaborated by Loeser in \cite{Loeser}. In particular,
these local zeta functions admit analytic continuations as rational functions
of the variables $p^{-s_{ij}}$. If we take $\phi$ to be the characteristic
function of\ $B_{r}^{N-3}$, the ball centered at the origin with radius
$p^{r}$, the dominated convergence theorem and Theorem \ref{TheoremA}, imply
that $\lim_{r\rightarrow\infty}Z_{B_{r}^{N-3}}^{(N)}(\boldsymbol{s}%
)=Z^{(N)}\left(  \boldsymbol{s}\right)  $ for any $\boldsymbol{s} $ in the
natural domain of $Z^{(N)}\left(  \boldsymbol{s}\right)  $.

In \cite{B-F-O-W}, Brekke, Freund, Olson and Witten work out the $N$-point
amplitudes in explicit form and investigate how these can be obtained from an
effective Lagrangian. The $p$-adic open string $N$-point tree amplitudes are
defined as
\begin{gather}
A^{(N)}\left(  \boldsymbol{k}\right)  =\label{Amplitude}\\%
{\displaystyle\int\limits_{\mathbb{Q}_{p}^{N-3}}}
{\displaystyle\prod\limits_{i=2}^{N-2}}
\left\vert x_{i}\right\vert _{p}^{\boldsymbol{k}_{1}\boldsymbol{k}_{i}%
}\left\vert 1-x_{i}\right\vert _{p}^{\boldsymbol{k}_{N-1}\boldsymbol{k}_{i}%
}\text{ }%
{\displaystyle\prod\limits_{2\leq i<j\leq N-2}}
\left\vert x_{i}-x_{j}\right\vert _{p}^{\boldsymbol{k}_{i}\boldsymbol{k}_{j}}%
{\displaystyle\prod\limits_{i=2}^{N-2}}
dx_{i}\text{,}\nonumber
\end{gather}
where $%
{\textstyle\prod\nolimits_{i=2}^{N-2}}
dx_{i}$ is the normalized Haar measure of $\mathbb{Q}_{p}^{N-3}$,
$\boldsymbol{k}=\left(  \boldsymbol{k}_{1},\ldots,\boldsymbol{k}_{N}\right)
$, $\boldsymbol{k}_{i}=\left(  k_{0,i},\ldots,k_{25,i}\right)  $,
$i=1,\ldots,N$, $N\geq4$, is the momentum vector of the $i$-th tachyon (with
Minkowski product $\boldsymbol{k}_{i}\boldsymbol{k}_{j}=-k_{0,i}%
k_{0,j}+k_{1,i}k_{1,j}+\cdots+k_{25,i}k_{25,j}$) obeying
\[
\sum_{i=1}^{N}\boldsymbol{k}_{i}=\boldsymbol{0}\text{, \ \ \ \ \ }%
\boldsymbol{k}_{i}\boldsymbol{k}_{i}=2\text{ \ for }i=1,\ldots,N.
\]
A central problem is to know whether or not integrals of type (\ref{Amplitude}%
) converge for some values $\boldsymbol{k}_{i}\boldsymbol{k}_{j}\in\mathbb{C}%
$. Our Theorem \ref{TheoremA} allows us to solve this problem. We take the
$p$-adic open string $N$-point tree integrals $Z^{(N)}(\boldsymbol{s})$ as
regularizations of the amplitudes $A^{(N)}\left(  \boldsymbol{k}\right)  $.
More precisely, we define
\[
A^{(N)}\left(  \boldsymbol{k}\right)  =Z^{(N)}(\boldsymbol{s})\mid
_{s_{ij}=\boldsymbol{k}_{i}\boldsymbol{k}_{j}}\text{with }i\in\left\{
1,\ldots,N-1\right\}  \text{, }j\in T\text{ or }i,j\in T,
\]
where $T=\left\{  2,\ldots,N-2\right\}  $. By Theorem \ref{TheoremA},
$A^{(N)}\left(  \boldsymbol{k}\right)  $ are well-defined rational functions
of the variables $p^{-\boldsymbol{k}_{i}\boldsymbol{k}_{j}}$, $i$,
$j\in\left\{  1,\ldots,N-1\right\}  $, which agree with integrals
(\ref{Amplitude}) when they converge. This definition allows us to recover all
the calculations made in \cite{B-F-O-W} and other similar publications. At
this point, it is relevant to mention that there is no similar result for the
Archimedean string amplitudes at the three level, as Witten pointed out \ in
\cite[p. 4]{Witten}.\ We notice that the string amplitudes $A^{(N)}\left(
\boldsymbol{k}\right)  $ are limits of local zeta functions when they are
considered as distributions. By a slight abuse of notation, this means that
\[
A^{(N)}\left(  \boldsymbol{k}\right)  =\lim_{r\rightarrow\infty}Z_{B_{r}%
^{N-3}}^{(N)}(\boldsymbol{k})\text{,}%
\]
for $\boldsymbol{k}$ in the natural domain of $Z^{(N)}(\boldsymbol{k})$.
Another important problem is to determine the existence of (in the sense of
quantum field theory) ultraviolet and infrared divergences for $A^{(N)}\left(
\boldsymbol{k}\right)  $. If we use the Euclidean product instead of the
Minkowski product to define $s_{ij}=\boldsymbol{k}_{i}\boldsymbol{k}_{j}$,
then $A^{(N)}\left(  \boldsymbol{k}\right)  $ has infrared divergences
($A^{(N)}\left(  0\right)  =+\infty$) and ultraviolet divergences
($A^{(N)}\left(  \boldsymbol{k}\right)  =+\infty$ for $\boldsymbol{k}%
_{i}\boldsymbol{k}_{j}>0$). The determination of the ultraviolet and infrared
divergences, in the sense of quantum field theory, in the signature
$-++\ldots+$ for $A^{(N)}\left(  \boldsymbol{k}\right)  $ is an open problem.
This problem requires the determination of the geometry of the natural domain
of function $Z^{(N)}(\boldsymbol{s})$. This type of problems has been not
studied in the case of multivariate local zeta functions.

Lerner and Missarov studied a class of $p$-adic integrals that includes
certain type\ of Feynman integrals and Koba-Nielsen amplitudes. They showed,
see \cite[Theorem 2]{Lerner}, \ that this type of integrals can be computed
recursively by using hierarchies, but they did not investigate the
convergence, or more generally the holomorphy, of the Koba-Nielsen amplitudes,
which is a delicate matter. On the other hand, the problem of regularization
string amplitudes has been recently considered by Witten in \cite{Witten}, by
using an analog of `the $i\varepsilon$ method'\ for regularizing Feynman
integrals. Our approach to the regularization of $p$-adic string amplitudes is
close to the technique of analytic regularization in quantum field theory,
see\ e.g. \cite{Speer}, \cite[Chapter 8]{Kleinert} and references therein.

In a more general framework, we point out that the string amplitudes at the
tree level and in general the Feynman amplitudes are `essentially' local zeta
functions (in the sense of Gel'fand, Sato, Weil, Bernstein,Tate, Igusa, Denef
and Loeser, among others), and thus, they are algebraic-geometric objects that
can be studied over several ground fields, for instance $\mathbb{R}$,
$\mathbb{C}$, $\mathbb{Q}_{p}$, $\mathbb{C}\left(  \left(  t\right)  \right)
$. Over each these fields these objects have similar mathematical properties.
As a consequence of our results and the theory of motivic Igusa zeta functions
due to Denef and Loeser \cite{DL1}-\cite{DL2}, a natural step is the
construction of motivic string amplitudes (motivic in the sense of motivic
integration) which specialized to the $p$-adic string amplitudes. In this
framework the limit $p\rightarrow1$ of Igusa's local zeta function makes
mathematical sense.

There is empirical evidence that $p$-adic strings are related to the ordinary
strings in the $p\rightarrow1$ limit, see e.g. \cite{Gerasimov}-\cite{Goshal},
\cite{Spokoiny:1988zk}, and the references therein. Denef and Loeser
established that the limit $p\rightarrow1$ of a Igusa's local zeta function
gives rise to an object, called topological zeta function \cite{DL3}. By using
Denef-Loeser's theory of topological zeta functions, in \ \cite{B-C-Zuniga},
we show that limit $p\rightarrow1$ of tree-level $p$-adic string amplitudes
give rise to certain amplitudes, that we have named \textit{Denef-Loeser
string amplitudes}. Gerasimov and Shatashvili showed that in limit
$p\rightarrow1$ the well-known non-local effective Lagrangian (reproducing the
tree-level $p$-adic string amplitudes) gives rise to a simple Lagrangian with
a logarithmic potential \cite{Gerasimov}. In \cite{B-C-Zuniga}, we show that
the Feynman amplitudes of this last Lagrangian are precisely the Denef-Loser amplitudes.

The article is organized as follows. In section \ref{Sect_p_adic_anal} we
present the basic aspects of the $p$-adic analysis needed in this article, and
in section \ref{Section3}, we prove the main result, Theorem \ref{TheoremA}.

\section{\label{Sect_p_adic_anal}Essential Ideas of $p$\textbf{-}Adic
Analysis}

In this section, we review some ideas and results on $p$-adic analysis that we
will use along this article. For an in-depth exposition, the reader may
consult \cite{Alberio et al}, \cite{Taibleson}, \cite{V-V-Z}.

\subsection{The field of $p$-adic numbers}

Throughout this article $p$ will denote a prime number. The field of $p-$adic
numbers $\mathbb{Q}_{p}$ is defined as the completion of the field of rational
numbers $\mathbb{Q}$ with respect to the $p-$adic norm $|\cdot|_{p} $, which
is defined as
\[
\left\vert x\right\vert _{p}=\left\{
\begin{array}
[c]{lll}%
0 & \text{if} & x=0\\
&  & \\
p^{-\gamma} & \text{if} & x=p^{\gamma}\frac{a}{b}\text{,}%
\end{array}
\right.
\]
where $a$ and $b$ are integers coprime with $p$. The integer $\gamma:=ord(x)
$, with $ord(0):=+\infty$, is called the\textit{\ }$p-$\textit{adic order of}
$x$. We extend the $p-$adic norm to $\mathbb{Q}_{p}^{n}$ by taking%
\[
||\boldsymbol{x}||_{p}:=\max_{1\leq i\leq n}|x_{i}|_{p},\qquad\text{for
}\boldsymbol{x}=(x_{1},\dots,x_{n})\in\mathbb{Q}_{p}^{n}.
\]
We define $ord(\boldsymbol{x})=\min_{1\leq i\leq n}\{ord(x_{i})\}$, then
$||\boldsymbol{x}||_{p}=p^{-ord(\boldsymbol{x})}$.\ The metric space $\left(
\mathbb{Q}_{p}^{n},||\cdot||_{p}\right)  $ is a complete ultrametric space. As
a topological space $\mathbb{Q}_{p}$\ is homeomorphic to a Cantor-like subset
of the real line, see e.g. \cite{Alberio et al}, \cite{V-V-Z}.

Any $p-$adic number $x\neq0$ has a unique expansion of the form
\[
x=p^{ord(x)}\sum_{i=0}^{\infty}x_{i}p^{i},
\]
where $x_{i}\in\{0,1,2,\dots,p-1\}$ and $x_{0}\neq0$.

For $r\in\mathbb{Z}$, denote by $B_{r}^{n}(\boldsymbol{a})=\{\boldsymbol{x}%
\in\mathbb{Q}_{p}^{n};||\boldsymbol{x}-\boldsymbol{a}||_{p}\leq p^{r}\}$
\textit{the ball of radius }$p^{r}$ \textit{with center at} $\boldsymbol{a}%
=(a_{1},\dots,a_{n})\in\mathbb{Q}_{p}^{n}$, and take $B_{r}^{n}(\boldsymbol{0}%
):=B_{r}^{n}$. Note that $B_{r}^{n}(\boldsymbol{a})=B_{r}(a_{1})\times
\cdots\times B_{r}(a_{n})$, where $B_{r}(a_{i}):=\{x\in\mathbb{Q}_{p}%
;|x_{i}-a_{i}|_{p}\leq p^{r}\}$ is the one-dimensional ball of radius $p^{r}$
with center at $a_{i}\in\mathbb{Q}_{p}$. The ball $B_{0}^{n}$ equals the
product of $n$ copies of $B_{0}=\mathbb{Z}_{p}$, \textit{the ring of }%
$p-$\textit{adic integers}. In addition, $B_{r}^{n}(\boldsymbol{a}%
)=\boldsymbol{a}+\left(  p^{-r}\mathbb{Z}_{p}\right)  ^{n}$. We also denote by
$S_{r}^{n}(\boldsymbol{a})=\{\boldsymbol{x}\in\mathbb{Q}_{p}^{n}%
;||\boldsymbol{x}-\boldsymbol{a}||_{p}=p^{r}\}$ \textit{the sphere of radius
}$p^{r}$ \textit{with center at} $\boldsymbol{a}\in\mathbb{Q}_{p}^{n}$, and
take $S_{r}^{n}(\boldsymbol{0}):=S_{r}^{n}$. We notice that $S_{0}%
^{1}=\mathbb{Z}_{p}^{\times}$ (the group of units of $\mathbb{Z}_{p}$), but
$\left(  \mathbb{Z}_{p}^{\times}\right)  ^{n}\subsetneq S_{0}^{n}$. The balls
and spheres are both open and closed subsets in $\mathbb{Q}_{p}^{n} $. In
addition, two balls in $\mathbb{Q}_{p}^{n}$ are either disjoint or one is
contained in the other.

As a topological space $\left(  \mathbb{Q}_{p}^{n},||\cdot||_{p}\right)  $ is
totally disconnected, i.e. the only connected \ subsets of $\mathbb{Q}_{p}%
^{n}$ are the empty set and the points. A subset of $\mathbb{Q}_{p}^{n}$ is
compact if and only if it is closed and bounded in $\mathbb{Q}_{p}^{n}$, see
e.g. \cite[Section 1.3]{V-V-Z}, or \cite[Section 1.8]{Alberio et al}. The
balls and spheres are compact subsets. Thus $\left(  \mathbb{Q}_{p}%
^{n},||\cdot||_{p}\right)  $ is a locally compact topological space.

\begin{remark}
There is a natural map, called the reduction $\operatorname{mod}p$ and denoted
as $\overline{\cdot}$, from $\mathbb{Z}_{p}$ onto $\mathbb{F}_{p}$, the finite
field with $p$ elements. More precisely, if $x=\sum_{j=0}^{\infty}x_{j}%
p^{j}\in\mathbb{Z}_{p}$, then $\overline{x}=\overline{x}_{0}\in\mathbb{F}%
_{p}=\left\{  \overline{0},\overline{1},\ldots,\overline{p-1}\right\}  $. If
$\boldsymbol{a}=(a_{1},\dots,a_{n})\in\mathbb{Z}_{p}^{n}$, then $\overline
{\boldsymbol{a}}=(\overline{a}_{1},\dots,\overline{a}_{n})$.
\end{remark}

\subsection{Integration on $\mathbb{Q}_{p}^{n}$}

Since $(\mathbb{Q}_{p},+)$ is a locally compact topological group, there
exists a Borel measure $dx$, called the Haar measure of $(\mathbb{Q}_{p},+)$,
unique up to multiplication by a positive constant, such that $\int_{U}dx>0 $
for every non-empty Borel open set $U\subset\mathbb{Q}_{p}$, and satisfying
$\int_{E+z}dx=\int_{E}dx$ for every Borel set $E\subset\mathbb{Q}_{p}$, see
e.g. \cite[Chapter XI]{Halmos}. If we normalize this measure by the condition
$\int_{\mathbb{Z}_{p}}dx=1$, then $dx$ is unique. From now on we denote by
$dx$ the normalized Haar measure of $(\mathbb{Q}_{p},+)$ and by $d^{n}%
\boldsymbol{x}$ the product measure on $(\mathbb{Q}_{p}^{n},+)$.

A function $\varphi:$ $\mathbb{Q}_{p}^{n}\rightarrow\mathbb{C}$ is said to be
\textit{locally constant} if for every $\boldsymbol{x}\in\mathbb{Q}_{p}^{n}$
there exists an open compact subset $U$, $\boldsymbol{x}\in U$, such that
$\varphi(\boldsymbol{x})=\varphi(\boldsymbol{u})$ for all $\boldsymbol{u}\in
U$. Any locally constant function $\varphi:$ $\mathbb{Q}_{p}^{n}%
\rightarrow\mathbb{C}$ can be expressed as a linear combination of
characteristic functions of the form $\varphi\left(  \boldsymbol{x}\right)
=\sum_{k=1}^{\infty}c_{k}{\LARGE 1}_{U_{k}}\left(  \boldsymbol{x}\right)  $,
where $c_{k}\in\mathbb{C}$ and ${\LARGE 1}_{U_{k}}\left(  \boldsymbol{x}%
\right)  $ is the characteristic function of $U_{k}$, an open compact subset
of $\mathbb{Q}_{p}^{n}$, for every $k$. If $\varphi$ has compact support, then
$\varphi\left(  \boldsymbol{x}\right)  =\sum_{k=1}^{L}c_{k}{\LARGE 1}_{U_{k}%
}\left(  \boldsymbol{x}\right)  $ and in this case
\[%
{\displaystyle\int\limits_{\mathbb{Q}_{p}^{n}}}
\varphi\left(  \boldsymbol{x}\right)  d^{n}\boldsymbol{x}=c_{1}%
{\displaystyle\int\limits_{U_{1}}}
d^{n}\boldsymbol{x}+\ldots+c_{L}%
{\displaystyle\int\limits_{U_{L}}}
d^{n}\boldsymbol{x}.
\]
A locally constant function with compact support is called a
\textit{Bruhat-Schwartz function}. These functions form a $\mathbb{C}$-vector
space denoted as $\mathcal{D}\left(  \mathbb{Q}_{p}^{n}\right)  $. By using
the fact that $\mathcal{D}\left(  \mathbb{Q}_{p}^{n}\right)  $ is a dense
subspace of $C_{c}\left(  \mathbb{Q}_{p}^{n}\right)  $, the $\mathbb{C}$-space
of continuous functions on $\mathbb{Q}_{p}^{n}$ with compact support, with the
topology of the uniform convergence, the functional \ $\varphi\rightarrow
\int_{\mathbb{Q}_{p}^{n}}\varphi\left(  \boldsymbol{x}\right)  d^{n}%
\boldsymbol{x}$, $\varphi\in\mathcal{D}\left(  \mathbb{Q}_{p}^{n}\right)  $
has a unique continuous extension to $C_{c}\left(  \mathbb{Q}_{p}^{n}\right)
$, as an unbounded linear functional. For integrating more general functions,
say locally integrable functions, \ the following notion of improper integral
will be used.

\begin{definition}
A function $\varphi\in L_{loc}^{1}$ is said to be integrable in $\mathbb{Q}%
_{p}^{n}$ if%
\[
\lim_{m\rightarrow+\infty}%
{\displaystyle\int\limits_{B_{m}^{n}}}
\varphi\left(  \boldsymbol{x}\right)  d^{n}\boldsymbol{x}=\lim_{m\rightarrow
+\infty}%
{\displaystyle\sum\limits_{j=-\infty}^{m}}
{\displaystyle\int\limits_{S_{j}^{n}}}
\varphi\left(  \boldsymbol{x}\right)  d^{n}\boldsymbol{x}%
\]
exists. If the limit exists, it is denoted as $%
{\textstyle\int\nolimits_{\mathbb{Q}_{p}^{n}}}
\varphi\left(  \boldsymbol{x}\right)  d^{n}\boldsymbol{x}$, and we say that
the\textbf{\ }(improper) integral exists.
\end{definition}

\subsection{Analytic change of variables}

A function $h:U\rightarrow\mathbb{Q}_{p}$ is said to be \textit{analytic} on
an open subset $U\subset\mathbb{Q}_{p}^{n}$, if for every $\boldsymbol{b}\in
U$ there exists an open subset $\widetilde{U}\subset U$, with $\boldsymbol{b}%
\in\widetilde{U}$, and a convergent power series $\sum_{i}a_{i}\left(
\boldsymbol{x}-\boldsymbol{b}\right)  ^{i}$ for $\boldsymbol{x}\in
\widetilde{U}$, such that $h\left(  \boldsymbol{x}\right)  =\sum
_{i\in\mathbb{N}^{n}}a_{i}\left(  \boldsymbol{x}-\boldsymbol{b}\right)  ^{i}$
for $\boldsymbol{x}\in\widetilde{U}$, with $\boldsymbol{x}^{i}=x_{1}^{i_{1}%
}\cdots x_{n}^{i_{n}}$, $\boldsymbol{i}=\left(  i_{1},\ldots,i_{n}\right)  $.
In this case, $\frac{\partial}{\partial x_{l}}h\left(  \boldsymbol{x}\right)
=\sum_{i\in\mathbb{N}^{n}}a_{i}\frac{\partial}{\partial x_{l}}\left(
\boldsymbol{x}-\boldsymbol{b}\right)  ^{i}$ is a convergent power series. Let
$U$, $V$ be open subsets of $\mathbb{Q}_{p}^{n}$. A mapping $\boldsymbol{h}%
:U\rightarrow V$, $\boldsymbol{h}=\left(  h_{1},\ldots,h_{n}\right)  $ is
called \textit{analytic} if each $h_{i}$ is analytic.

Let $\varphi:V$ $\rightarrow\mathbb{C}$ be a continuous function with compact
support, and let $\boldsymbol{h}:U\rightarrow V$ $\ $be an analytic mapping.
Then
\[%
{\textstyle\int\limits_{V}}
\varphi\left(  \boldsymbol{y}\right)  d^{n}\boldsymbol{y}=%
{\textstyle\int\limits_{U}}
\varphi\left(  \boldsymbol{h}(\boldsymbol{x})\right)  \left\vert
Jac(\boldsymbol{h}(\boldsymbol{x}))\right\vert _{p}d^{n}\boldsymbol{x}\text{,}%
\]
where $Jac(\boldsymbol{h}(\boldsymbol{z})):=\det\left[  \frac{\partial h_{i}%
}{\partial x_{j}}\left(  \boldsymbol{z}\right)  \right]  _{\substack{1\leq
i\leq n \\1\leq j\leq n}}$, see e.g. \cite[Section 10.1.2]{Bourbaki}.

\subsection{The multivariate Igusa zeta functions}

Let $f_{i}(\boldsymbol{x})\in\mathbb{Q}_{p}\left[  x_{1},\ldots,x_{n}\right]
$ be non-constant polynomials for $i=1,\ldots,l$, and \ let $\Phi$ be a
Bruhat-Schwartz function. The multivariate local zeta function attached to
$\left(  f_{1},\ldots,f_{l},\Phi\right)  $ (also called Igusa local zeta
function) is defined by the integral%
\[
Z_{\Phi}\left(  s_{1},\ldots,s_{l},;f_{1},\ldots,f_{l}\right)  =\int
\limits_{\mathbb{Q}_{p}^{n}\smallsetminus\cup_{i=1}^{l}f_{i}^{-1}%
(\boldsymbol{0})}\Phi\left(  \boldsymbol{x}\right)  \prod\limits_{i=1}%
^{l}\left\vert f_{i}(\boldsymbol{x})\right\vert _{p}^{s_{i}}d^{n}%
\boldsymbol{x}%
\]
for $\left(  s_{1},\ldots,s_{l}\right)  \in\mathbb{C}^{n}$ with
$\operatorname{Re}(s_{i})>0$, $i=1,\ldots,l$. This integral defines a
holomorphic function of $\left(  s_{1},\ldots,s_{l}\right)  $ in the
half-space $\operatorname{Re}(s_{i})>0$, $i=1,\ldots,l$. In the case $l=1,$
this assertion corresponds to Lemma 5.3.1 in \cite{Igusa}. For the general
case, we recall that a continuous complex-valued function defined in an open
set $A\subseteq$ $\mathbb{C}^{n}$, which is holomorphic in each variable
separately, is holomorphic in $A$. The multivariate local zeta functions admit
analytic continuations to the whole $\mathbb{C}^{n}$\ as rational functions of
the variables $p^{-s_{i}}$, $i=1,\ldots,l$, see \cite{Loeser}. The Igusa local
zeta functions are related with the number of solutions of polynomial
congruences $\operatorname{mod}$ $p^{m}$ and with exponential sums
$\operatorname{mod}$ $p^{m}$. There are many intriguing conjectures relating
the poles of local zeta functions with the topology of complex singularities,
see e.g. \cite{Denef}, \cite{Igusa}.

We want to highlight that the convergence of the local zeta functions depends
crucially on the fact that $\Phi$ has compact support. Consider the following
integral:%
\[
I(s)=%
{\displaystyle\int\limits_{\mathbb{Q}_{p}}}
\left\vert x\right\vert _{p}^{s}dx\text{, }s\in\mathbb{C}.
\]
Assume that $I(s_{0})$ exists for some $s_{0}\in\mathbb{R}$, then necessarily
the integrals%
\[
I_{0}(s_{0})=%
{\displaystyle\int\limits_{\mathbb{Z}_{p}}}
\left\vert x\right\vert _{p}^{s_{0}}dx\text{ \ and }I_{1}(s_{0})=%
{\displaystyle\int\limits_{\mathbb{Q}_{p}\smallsetminus\mathbb{Z}_{p}}}
\left\vert x\right\vert _{p}^{s_{0}}dx\text{ }%
\]
exist. The first integral is well-known, $I_{0}(s_{0})=\frac{1-p^{-1}%
}{1-p^{-1-s_{0}}}$ for $s_{0}>-1$. For the second integral, we use that
$\left\vert x\right\vert _{p}^{s_{0}}$ is locally integrable, and thus
\[
I_{1}(s_{0})=\sum\limits_{j=1}^{\infty}%
{\displaystyle\int\limits_{p^{-j}\mathbb{Z}_{p}^{\times}}}
\left\vert x\right\vert _{p}^{s_{0}}dx=\sum\limits_{j=1}^{\infty}p^{j+js_{0}}%
{\displaystyle\int\limits_{\mathbb{Z}_{p}^{\times}}}
dx=\left(  1-p^{-1}\right)  \sum\limits_{j=1}^{\infty}p^{j(1+s_{0})}%
<\infty\text{ }%
\]
if and only if $s_{0}<-1$. Then, integral $I(s)$ does not exist for any
$s\in\mathbb{R}$ and consequently $I(s)$ does not exist for any complex value
$s$.

For an in-depth discussion on local zeta functions the reader may consult
\cite{Denef}, \cite{Igusa0}-\cite{Igusa} and the references therein.

\section{\label{Section3}$p$-adic String Zeta Functions}

We fix an integer $N\geq4$. To each set $\left\{  i,j\right\}  $ with
$i,j\in\left\{  1,\ldots,N-1\right\}  $ we attach a complex variable $s_{ij}$.
We assume that the variables $s_{ij}$\ are algebraically independent. We set
$\ T:=\left\{  2,\ldots,N-2\right\}  $, then $D=2\left(  N-3\right)
+\frac{\left(  N-3\right)  \left(  N-4\right)  }{2}=\frac{N\left(  N-3\right)
}{2}$, and $\mathbb{C}^{D}$ is%
\[
\left\{
\begin{array}
[c]{lll}%
\left\{  s_{ij}\in\mathbb{C};i\in\left\{  1,N-1\right\}  ,j\in T\right\}  &
\text{if} & N=4\\
&  & \\
\left\{  s_{ij}\in\mathbb{C};i\in\left\{  1,N-1\right\}  ,j\in T\text{ or
}i,j\in T\text{ with }i<j\right\}  & \text{if} & N\geq5.
\end{array}
\right.
\]
We set$\ \boldsymbol{s}=\left(  s_{ij}\right)  \in\mathbb{C}^{D}$,
$\boldsymbol{x}=\left(  x_{2},\ldots,x_{N-2}\right)  \in\mathbb{Q}_{p}^{N-3}$,
and
\[
F\left(  \boldsymbol{s},\boldsymbol{x};N\right)  =%
{\displaystyle\prod\limits_{i=2}^{N-2}}
\left\vert x_{i}\right\vert _{p}^{s_{1i}}\left\vert 1-x_{i}\right\vert
_{p}^{s_{(N-1)i}}\text{ }%
{\displaystyle\prod\limits_{2\leq i<j\leq N-2}}
\left\vert x_{i}-x_{j}\right\vert _{p}^{s_{ij}}.
\]

\begin{definition}
The $p$\textit{-adic open string }$N$\textit{-point zeta function} is defined
as%
\begin{equation}
Z^{(N)}\left(  \boldsymbol{s}\right)  :=%
{\displaystyle\int\limits_{\mathbb{Q}_{p}^{N-3}\smallsetminus\Lambda}}
F\left(  \boldsymbol{s},\boldsymbol{x};N\right)
{\displaystyle\prod\limits_{i=2}^{N-2}}
dx_{i} \label{Zeta_1}%
\end{equation}
for $\boldsymbol{s}=\left(  s_{ij}\right)  \in\mathbb{C}^{D}$, where $\Lambda
$\ denotes the divisor%
\[
\left\{  \left(  x_{2},\ldots,x_{N-2}\right)  \in\mathbb{Q}_{p}^{N-3};%
{\displaystyle\prod\limits_{i=2}^{N-2}}
x_{i}\left(  1-x_{i}\right)  \text{ }%
{\displaystyle\prod\limits_{2\leq i<j\leq N-2}}
\left(  x_{i}-x_{j}\right)  =0\right\}  ,
\]
and $%
{\textstyle\prod\nolimits_{i=2}^{N-2}}
dx_{i}$ is the normalized Haar measure of $\mathbb{Q}_{p}^{N-3}$.
\end{definition}

\begin{remark}
We notice that the domain of integration in (\ref{Zeta_1}) is taken to be
$\mathbb{Q}_{p}^{N-3}\smallsetminus\Lambda$ in order to use $a^{s}=e^{s\ln a}%
$, with $a>0$ and $s\in\mathbb{C}$, as the definition of the complex power
function. The convergence of integral (\ref{Zeta_1}), as well as its
holomorphy, will be discussed later on.
\end{remark}

We define for $I\subseteq T$,\textit{\ the sector attached to }$I$ as
\[
Sect(I)=\left\{  \left(  x_{2},\ldots,x_{N-2}\right)  \in\mathbb{Q}_{p}%
^{N-3};\left\vert x_{i}\right\vert _{p}\leq1\text{ }\Leftrightarrow i\in
I\right\}
\]
and
\[
Z^{(N)}\left(  \boldsymbol{s};I\right)  =%
{\displaystyle\int\limits_{Sect(I)}}
F\left(  \boldsymbol{s},\boldsymbol{x};N\right)
{\displaystyle\prod\limits_{i=2}^{N-2}}
dx_{i}\text{.}%
\]
Hence%
\[
Z^{(N)}\left(  \boldsymbol{s}\right)  =\sum_{I\subseteq T}Z^{(N)}\left(
\boldsymbol{s};I\right)  .
\]

\begin{notation}
(i) The cardinality of a finite set $A$ will be denoted as $\left\vert
A\right\vert $.

\noindent(ii) We will use the symbol $\bigsqcup$ to denote the union of
disjoint sets.

\noindent(iii) Given a non-empty subset $I$ of $\left\{  2,\ldots,N-2\right\}
$ and $B$ a non-empty subset of $\mathbb{Q}_{p}$, we set%
\[
B^{\left\vert I\right\vert }=\left\{  \left(  x_{i}\right)  _{i\in I};x_{i}\in
B\right\}  .
\]
\noindent(iv) By convention, we define $%
{\textstyle\prod\nolimits_{i\in\varnothing}}
\cdot:=1$, $\sum_{i\in\varnothing}\cdot:=0$, and if $J=\varnothing$, then
$\int_{B^{\left\vert J\right\vert }}\cdot:=1$.

\noindent(v) The indices $i$, $j$ will run over subsets of $T$, if we do not
specify any subset, we will assume that is $T$.
\end{notation}

\begin{lemma}
\label{Lema_A}With the above notation the following formulas hold: (i)
$F\left(  \boldsymbol{s},\boldsymbol{x};N\right)  \mid_{Sect(I)}%
\allowbreak=F_{0}\left(  \boldsymbol{s},\boldsymbol{x};N\right)  F_{1}\left(
\boldsymbol{s},\boldsymbol{x};N\right)  $, where%
\[
F_{0}\left(  \boldsymbol{s},\boldsymbol{x};N\right)  :=%
{\displaystyle\prod\limits_{i\in I}}
\left\vert x_{i}\right\vert _{p}^{s_{1i}}\left\vert 1-x_{i}\right\vert
_{p}^{s_{(N-1)i}}\text{ }%
{\displaystyle\prod\limits_{\substack{2\leq i<j\leq N-2 \\i,j\in I}}}
\left\vert x_{i}-x_{j}\right\vert _{p}^{s_{i}{}_{j}}%
\]
and
\[
F_{1}\left(  \boldsymbol{s},\boldsymbol{x};N\right)  :=\prod\limits_{i\in
T\smallsetminus I}\left\vert x_{i}\right\vert _{p}^{s_{1}{}_{i}+s_{(N-1)i}%
+\sum_{\substack{2\leq j\leq N-2 \\j\neq i,\text{ }j\in I}}s_{i}{}_{j}}\text{
}%
{\displaystyle\prod\limits_{\substack{2\leq i<j\leq N-2 \\i,j\in
T\smallsetminus I}}}
\left\vert x_{i}-x_{j}\right\vert _{p}^{s_{i}{}_{j}}.
\]

\noindent(ii) If $\operatorname{Re}\left(  s_{1i}\right)  +\operatorname{Re}%
\left(  s_{\left(  N-1\right)  i}\right)  +\sum_{2\leq j\leq N-2,j\neq
i}\operatorname{Re}\left(  s_{ij}\right)  +1<0$ for $i\in T\smallsetminus I$,
and $\operatorname{Re}\left(  s_{ij}\right)  >-1$ for $i,j\in T\smallsetminus
I$, then
\begin{align*}
&
{\displaystyle\int\limits_{\left(  \mathbb{Q}_{p}\smallsetminus\mathbb{Z}%
_{p}\right)  ^{\left\vert T\smallsetminus I\right\vert }}}
F_{1}\left(  \boldsymbol{s},\boldsymbol{x};N\right)
{\displaystyle\prod\limits_{i\in T\smallsetminus I}}
dx_{i}\\
&  =p^{M(\boldsymbol{s})}\text{ }%
{\displaystyle\int\limits_{\mathbb{Z}_{p}^{\left\vert T\smallsetminus
I\right\vert }}}
\frac{%
{\displaystyle\prod\limits_{\substack{2\leq i<j\leq N-2 \\i,j\in
T\smallsetminus I}}}
\left\vert y_{i}-y_{j}\right\vert _{p}^{s_{ij}}}{%
{\displaystyle\prod\limits_{i\in T\smallsetminus I}}
\left\vert y_{i}\right\vert _{p}^{2+s_{1i}+s_{\left(  N-1\right)  i}%
+\sum_{2\leq j\leq N-2,j\neq i}s_{ij}}}%
{\displaystyle\prod\limits_{i\in T\smallsetminus I}}
dy_{i},
\end{align*}
where $M(\boldsymbol{s}):=\left\vert T\smallsetminus I\right\vert +\sum_{i\in
T\smallsetminus I}(s_{1i}+s_{\left(  N-1\right)  i})+\sum_{\substack{2\leq
i<j\leq N-2 \\i\in T\smallsetminus I,j\in T}}s_{ij}+\sum_{\substack{2\leq
i<j\leq N-2 \\i\in I,j\in T\smallsetminus I }}s_{ij}$.

\noindent(iii) If $\operatorname{Re}\left(  s_{1i}\right)  +\operatorname{Re}%
\left(  s_{\left(  N-1\right)  i}\right)  +\sum_{2\leq j\leq N-2,j\neq
i}\operatorname{Re}\left(  s_{ij}\right)  +1<0$ for $i\in T\smallsetminus I$,
$\operatorname{Re}\left(  s_{ij}\right)  >-1$ for $i,j\in T\smallsetminus I$,
$\operatorname{Re}\left(  s_{1i}\right)  >-1$ for $i\in I$ and
$\operatorname{Re}\left(  s_{(N-1)i}\right)  >-1$\ for $i\in I$, then
\begin{gather*}
Z^{(N)}\left(  \boldsymbol{s};I\right)  =p^{M(\boldsymbol{s})}\left\{
{\displaystyle\int\limits_{\mathbb{Z}_{p}^{\left\vert I\right\vert }}}
F_{0}\left(  \boldsymbol{s},\boldsymbol{x};N\right)
{\displaystyle\prod\limits_{i\in I}}
dx_{i}\right\} \\
\times\left\{
{\displaystyle\int\limits_{\mathbb{Z}_{p}^{\left\vert T\smallsetminus
I\right\vert }}}
\frac{%
{\displaystyle\prod\limits_{\substack{2\leq i<j\leq N-2 \\i,j\in
T\smallsetminus I}}}
\left\vert x_{i}-x_{j}\right\vert _{p}^{s_{ij}}}{%
{\displaystyle\prod\limits_{i\in T\smallsetminus I}}
\left\vert x_{i}\right\vert _{p}^{2+s_{1i}+s_{\left(  N-1\right)  i}%
+\sum_{2\leq j\leq N-2,j\neq i}s_{ij}}}%
{\displaystyle\prod\limits_{i\in T\smallsetminus I}}
dx_{i}\right\} \\
=:p^{M(\boldsymbol{s})}Z_{0}^{(N)}\left(  \boldsymbol{s};I\right)  Z_{1}%
^{(N)}\left(  \boldsymbol{s};T\smallsetminus I\right)  .
\end{gather*}

\end{lemma}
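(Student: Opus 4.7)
The three parts will be tackled in order, since (iii) follows directly by combining (i), (ii), and Fubini.

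For (i) I would invoke the strong triangle inequality on $Sect(I)$: since $|x_i|_p\le 1$ for $i\in I$ and $|x_i|_p\ge p$ for $i\in T\smallsetminus I$, one has $|1-x_i|_p=|x_i|_p$ for $i\in T\smallsetminus I$, and for any pair $\{i,j\}$ with exactly one index $k\in T\smallsetminus I$ one has $|x_i-x_j|_p=|x_k|_p$. I would split the product $\prod_{2\le i<j\le N-2}|x_i-x_j|_p^{s_{ij}}$ according to whether the pair lies fully in $I$, fully in $T\smallsetminus I$, or is mixed, and use $s_{ij}=s_{ji}$ to merge the two mixed sub-sums (those with $k=i<j$, $j\in I$, and those with $j<i=k$, $j\in I$) into the single exponent $\sum_{j\in I}s_{kj}$ for each $k\in T\smallsetminus I$. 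This produces exactly the factorization $F_0\cdot F_1$.

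For (ii) I would use the analytic bijection $x_i=1/(py_i)$ from $\mathbb{Q}_p\smallsetminus\mathbb{Z}_p$ onto $\mathbb{Z}_p\smallsetminus\{0\}$, whose Jacobian is $|dx_i|_p=p\,|y_i|_p^{-2}|dy_i|_p$. Ultrametricity yields $|x_i|_p=p|y_i|_p^{-1}$ and $|x_i-x_j|_p=p\,|y_i-y_j|_p\,|y_i|_p^{-1}|y_j|_p^{-1}$ for distinct $i,j\in T\smallsetminus I$. Inserting these and the Jacobian into $F_1\prod dx_i$ and collecting constants, the overall power of $p$ receives $+1$ per Jacobian (total $|T\smallsetminus I|$), $s_{1i}+s_{(N-1)i}$ from the single-variable factors of $F_1$, and $s_{ij}$ from every pair $\{i,j\}$ containing at least one index in $T\smallsetminus I$; rewriting the resulting unordered mixed sum $\sum_{i\in T\smallsetminus I,\,j\in I}s_{ij}$ by $s_{ij}=s_{ji}$ in the ordered form appearing in the definition of $M(\boldsymbol{s})$ gives exactly $p^{M(\boldsymbol{s})}$. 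The residual integrand is a function of $y$ alone with $|y_i|_p$-exponent $-2-s_{1i}-s_{(N-1)i}-\sum_{j\in T,\,j\ne i}s_{ij}$, as claimed. The two hypotheses of (ii) are the sharp thresholds for local integrability on $\mathbb{Z}_p^{|T\smallsetminus I|}$: $\operatorname{Re}(s_{ij})>-1$ handles the diagonals $y_i=y_j$, and the linear inequality on $s_{1i},s_{(N-1)i}$ and $\sum_{j\ne i}s_{ij}$ is the standard condition for $|y_i|_p^{-a}$ to be integrable near $y_i=0$.

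Part (iii) then follows immediately: up to a set of measure zero, $Sect(I)$ is the product $\mathbb{Z}_p^{|I|}\times(\mathbb{Q}_p\smallsetminus\mathbb{Z}_p)^{|T\smallsetminus I|}$ once the coordinates are grouped by their membership in $I$; by (i) the integrand splits as $F_0\cdot F_1$ with disjoint coordinate dependencies; the extra hypotheses $\operatorname{Re}(s_{1i})>-1$ and $\operatorname{Re}(s_{(N-1)i})>-1$ for $i\in I$ give absolute integrability of $F_0$ on $\mathbb{Z}_p^{|I|}$, while the hypotheses already used in (ii) do the same for $F_1$. Fubini--Tonelli then splits the integral into the announced product, and (ii) converts its second factor into $p^{M(\boldsymbol{s})}Z_1^{(N)}(\boldsymbol{s};T\smallsetminus I)$. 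The main obstacle is the bookkeeping in (ii): partitioning the pairs $\{i,j\}\subset T$ into the three types, tracking their contributions to both the $p$- and the $|y_i|_p$-exponents, and invoking $s_{ij}=s_{ji}$ at precisely the right moment to match the unordered sum arising from the substitution with the ordered sums appearing in $M(\boldsymbol{s})$. Outside this careful accounting, every step reduces to an elementary ultrametric identity or to a routine Fubini--Tonelli argument.
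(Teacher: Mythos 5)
Your proposal takes essentially the same route as the paper's proof. Part (i) is the same ultrametric splitting of the pairs $\{i,j\}$ into the three types (both in $I$, both in $T\smallsetminus I$, mixed), with $s_{ij}=s_{ji}$ used to merge the two mixed contributions into the exponent $\sum_{j\in I}s_{ij}$; part (iii) is the same Fubini--Tonelli factorization over $Sect(I)=\mathbb{Z}_p^{|I|}\times(\mathbb{Q}_p\smallsetminus\mathbb{Z}_p)^{|T\smallsetminus I|}$ (minus the measure-zero divisor $\Lambda$); and in part (ii) you merely combine the paper's two substitutions ($x_i\mapsto 1/y_i$ onto $p\mathbb{Z}_p$, then $y_i\mapsto pz_i$ onto $\mathbb{Z}_p$) into the single map $x_i=1/(py_i)$. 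Your bookkeeping of the powers of $p$ and of $|y_i|_p$ correctly reproduces $p^{M(\boldsymbol{s})}$ and the exponent $-(2+s_{1i}+s_{(N-1)i}+\sum_{j\neq i}s_{ij})$.

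The one step the paper carries out that you pass over in silence: the analytic change-of-variables formula cited in Section~\ref{Sect_p_adic_anal} is stated for continuous functions of compact support, and the domain $(\mathbb{Q}_p\smallsetminus\mathbb{Z}_p)^{|T\smallsetminus I|}$ is not compact, so the integral there is an improper one in the sense of the paper's definition. The paper therefore first restricts to the compact annuli $(\mathbb{Q}_p\smallsetminus\mathbb{Z}_p)^{|T\smallsetminus I|}_{-l}$ with $-l\le\operatorname{ord}(x_i)\le-1$, applies the substitution on each compact piece to get $J_{-l}$, and then lets $l\to\infty$, invoking dominated convergence with the very integrability conditions you identify ($\operatorname{Re}(s_{ij})>-1$ to control the diagonals, the linear condition on $s_{1i},s_{(N-1)i},\sum_{j\neq i}s_{ij}$ to control $y_i=0$). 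You should make this truncate-then-pass-to-the-limit step explicit rather than applying the substitution directly to the non-compact domain; otherwise the argument is complete and matches the paper's.
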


\begin{remark}
Later on we will show that the integrals in the right-hand side in the
formulas given in (ii) and (iii) are convergent and holomorphic functions on a
certain subset of $\mathbb{C}^{D}$ for all $I\subseteq T$.
\end{remark}

\begin{proof}
(i) Notice that $F\left(  \boldsymbol{s},\boldsymbol{x};N\right)
\mid_{Sect(I)}$ equals
\begin{gather}%
{\displaystyle\prod\limits_{i\in I}}
\left\vert x_{i}\right\vert _{p}^{s_{1}{}_{i}}\left\vert 1-x_{i}\right\vert
_{p}^{s_{\left(  N-1\right)  }{}_{i}}\text{ }%
{\displaystyle\prod\limits_{i\in T\smallsetminus I}}
\left\vert x_{i}\right\vert _{p}^{s_{1}{}_{i}+s_{(N-1)i}}\text{ }%
{\displaystyle\prod\limits_{\substack{2\leq i<j\leq N-2 \\i,j\in I}}}
\left\vert x_{i}-x_{j}\right\vert _{p}^{s_{i}{}_{j}}\text{ }\times\nonumber\\%
{\displaystyle\prod\limits_{\substack{2\leq i<j\leq N-2 \\i,j\in
T\smallsetminus I}}}
\left\vert x_{i}-x_{j}\right\vert _{p}^{s_{i}{}_{j}}\text{ }%
{\displaystyle\prod\limits_{\substack{2\leq i<j\leq N-2 \\i\in T\smallsetminus
I,j\in I}}}
\left\vert x_{i}\right\vert _{p}^{s_{i}{}_{j}}\text{ }%
{\displaystyle\prod\limits_{\substack{2\leq i<j\leq N-2 \\i\in I,j\in
T\smallsetminus I}}}
\left\vert x_{j}\right\vert _{p}^{s_{i}{}_{j}}. \label{Formula_A}%
\end{gather}
Now, by using that $s_{i}{}_{j}=s_{ji}$,%
\begin{gather}%
{\displaystyle\prod\limits_{\substack{2\leq i<j\leq N-2 \\i\in T\smallsetminus
I,j\in I }}}
\left\vert x_{i}\right\vert _{p}^{s_{i}{}_{j}}\text{ }%
{\displaystyle\prod\limits_{\substack{2\leq i<j\leq N-2 \\i\in I,j\in
T\smallsetminus I}}}
\left\vert x_{j}\right\vert _{p}^{s_{i}{}_{j}}=%
{\displaystyle\prod\limits_{\substack{2\leq i<j\leq N-2 \\i\in T\smallsetminus
I,j\in I}}}
\left\vert x_{i}\right\vert _{p}^{s_{i}{}_{j}}\text{ }%
{\displaystyle\prod\limits_{\substack{2\leq j<i\leq N-2 \\j\in I,i\in
T\smallsetminus I}}}
\left\vert x_{i}\right\vert _{p}^{s_{i}{}_{j}}\nonumber\\
=%
{\displaystyle\prod\limits_{\substack{2\leq j,i\leq N-2 \\i\neq j,\text{ }i\in
T\smallsetminus I,j\in I}}}
\left\vert x_{i}\right\vert _{p}^{s_{i}{}_{j}}=%
{\displaystyle\prod\limits_{i\in T\smallsetminus I}}
\left\vert x_{i}\right\vert _{p}^{\sum_{\substack{2\leq j\leq N-2 \\j\neq
i,\text{ }j\in I}}s_{i}{}_{j}}. \label{Formula_B}%
\end{gather}
The announced formula follows from (\ref{Formula_A})-(\ref{Formula_B}).

(ii) For $\left\vert T\smallsetminus I\right\vert \geq1$, we set
\[
J\left(  \boldsymbol{s};T\smallsetminus I\right)  :=%
{\displaystyle\int\limits_{\left(  \mathbb{Q}_{p}\smallsetminus\mathbb{Z}%
_{p}\right)  ^{\left\vert T\smallsetminus I\right\vert }}}
F_{1}\left(  \boldsymbol{s},\boldsymbol{x};N\right)
{\displaystyle\prod\limits_{i\in T\smallsetminus I}}
dx_{i},
\]
and for $l\in\mathbb{N\smallsetminus}\left\{  0\right\}  $,%
\[
\left(  \mathbb{Q}_{p}\smallsetminus\mathbb{Z}_{p}\right)  _{-l}^{\left\vert
T\smallsetminus I\right\vert }:=\left\{  \left(  x_{i}\right)  _{i\in
T\smallsetminus I}\in\left(  \mathbb{Q}_{p}\smallsetminus\mathbb{Z}%
_{p}\right)  ^{\left\vert T\smallsetminus I\right\vert };-l\leq ord(x_{i}%
)\leq-1\text{ for }i\in T\smallsetminus I\right\}  ,
\]%
\[
\left(  p\mathbb{Z}_{p}\right)  _{l}^{\left\vert T\smallsetminus I\right\vert
}:=\left\{  \left(  x_{i}\right)  _{i\in T\smallsetminus I}\in\left(
p\mathbb{Z}_{p}\right)  ^{\left\vert T\smallsetminus I\right\vert };1\leq
ord(x_{i})\leq l\text{ for }i\in T\smallsetminus I\right\}  ,
\]
and
\[
J_{-l}\left(  \boldsymbol{s};T\smallsetminus I\right)  :=%
{\displaystyle\int\limits_{\left(  \mathbb{Q}_{p}\smallsetminus\mathbb{Z}%
_{p}\right)  _{-l}^{\left\vert T\smallsetminus I\right\vert }}}
F_{1}\left(  \boldsymbol{s},\boldsymbol{x};N\right)
{\displaystyle\prod\limits_{i\in T\smallsetminus I}}
dx_{i}.
\]
Notice that $\left(  \mathbb{Q}_{p}\smallsetminus\mathbb{Z}_{p}\right)
_{-l}^{\left\vert T\smallsetminus I\right\vert }$, $\left(  p\mathbb{Z}%
_{p}\right)  _{l}^{\left\vert T\smallsetminus I\right\vert }$ are compact sets
and that
\[%
\begin{array}
[c]{ccc}%
\left(  \mathbb{Q}_{p}\smallsetminus\mathbb{Z}_{p}\right)  _{-l}^{\left\vert
T\smallsetminus I\right\vert } & \rightarrow & \left(  p\mathbb{Z}_{p}\right)
_{l}^{\left\vert T\smallsetminus I\right\vert }\\
&  & \\
\left(  x_{i}\right)  _{i\in T\smallsetminus I} & \rightarrow & \left(
\sigma\left(  x_{i}\right)  \right)  _{i\in T\smallsetminus I},
\end{array}
\]
with $\sigma\left(  x_{i}\right)  =\frac{1}{y_{i}}$ is an analytic change of
variables satisfying $%
{\textstyle\prod\nolimits_{i\in T\smallsetminus I}}
dx_{i}=$ $\allowbreak%
{\textstyle\prod\nolimits_{i\in T\smallsetminus I}}
\frac{dy_{i}}{\left\vert y_{i}\right\vert _{p}^{2}}$, then by using this
change of variables and the fact that
\begin{align*}
&
{\displaystyle\prod\limits_{i\in T\smallsetminus I}}
\left\vert y_{i}\right\vert _{p}^{s_{1i}+s_{\left(  N-1\right)  i}%
+\sum_{\substack{2\leq j\leq N-2 \\j\neq i,j\in I}}s_{ij}}%
{\displaystyle\prod\limits_{\substack{2\leq i<j\leq N-2 \\i\in T\smallsetminus
I,j\in T\smallsetminus I}}}
\left\vert y_{i}\right\vert _{p}^{s_{i}{}_{j}}\text{ }%
{\displaystyle\prod\limits_{\substack{2\leq i<j\leq N-2 \\i\in T\smallsetminus
I,j\in T\smallsetminus I}}}
\left\vert y_{j}\right\vert _{p}^{s_{i}{}_{j}}\\
&  =%
{\displaystyle\prod\limits_{i\in T\smallsetminus I}}
\left\vert y_{i}\right\vert _{p}^{s_{1i}+s_{\left(  N-1\right)  i}%
+\sum_{\substack{2\leq j\leq N-2 \\j\neq i,j\in I}}s_{ij}}%
{\displaystyle\prod\limits_{i\in T\smallsetminus I}}
\left\vert y_{i}\right\vert _{p}^{\sum_{\substack{2\leq j\leq N-2 \\j\neq
i,j\in T\smallsetminus I}}s_{ij}}\\
&  =%
{\displaystyle\prod\limits_{i\in T\smallsetminus I}}
\left\vert y_{i}\right\vert _{p}^{s_{1i}+s_{\left(  N-1\right)  i}+\sum_{2\leq
j\leq N-2,j\neq i}s_{ij}},
\end{align*}
we have
\begin{equation}
J_{-l}\left(  \boldsymbol{s};T\smallsetminus I\right)  =%
{\displaystyle\int\limits_{\left(  p\mathbb{Z}_{p}\right)  _{l}^{\left\vert
T\smallsetminus I\right\vert }}}
\frac{%
{\displaystyle\prod\limits_{\substack{2\leq i<j\leq N-2 \\i,j\in
T\smallsetminus I}}}
\left\vert y_{i}-y_{j}\right\vert _{p}^{s_{i}{}_{j}}%
{\displaystyle\prod\limits_{i\in T\smallsetminus I}}
dy_{i}}{%
{\displaystyle\prod\limits_{i\in T\smallsetminus I}}
\left\vert y_{i}\right\vert _{p}^{s_{1i}+s_{\left(  N-1\right)  i}+\sum_{2\leq
j\leq N-2,j\neq i}s_{ij}+2}}. \label{I_l}%
\end{equation}
Then $\lim_{l\rightarrow\infty}J_{-l}\left(  \boldsymbol{s};T\smallsetminus
I\right)  =J\left(  \boldsymbol{s};T\smallsetminus I\right)  $. Indeed, the
formula follows from the dominated convergence theorem, by using that
$\left\vert y_{i}-y_{j}\right\vert _{p}^{\operatorname{Re}(s_{i}{}_{j})}<1$
for $y_{i} $, $y_{j}\in p\mathbb{Z}_{p}$, and the fact that $\int
_{p\mathbb{Z}_{p}}\frac{1}{\left\vert y\right\vert _{p}^{s}}dy$ converges for
$\operatorname{Re}\left(  s\right)  <1$. Finally, the announced formula
follows from (\ref{I_l}) by a change of variables.

(iii) It is a consequence of (i)-(ii).
\end{proof}

\subsection{ \label{Nota_Lemma_A}The road map of the proof of the main result}

From Lemma \ref{Lema_A}, we have
\begin{equation}
Z^{(N)}\left(  \boldsymbol{s}\right)  =\sum_{I\subseteq T}p^{M(\boldsymbol{s}%
)}Z_{0}^{(N)}\left(  \boldsymbol{s};I\right)  Z_{1}^{(N)}\left(
\boldsymbol{s};T\smallsetminus I\right)  . \label{Formula_Zeta_amplirtude}%
\end{equation}
From now on, for the sake of simplicity we will use the notation
$Z^{(N)}\left(  \boldsymbol{s}\right)  =Z\left(  \boldsymbol{s}\right)  $,
$Z_{0}^{(N)}\left(  \boldsymbol{s};I\right)  =Z_{0}\left(  \boldsymbol{s}%
;I\right)  $, and $Z_{1}^{(N)}\left(  \boldsymbol{s};T\smallsetminus I\right)
=Z_{1}\left(  \boldsymbol{s};T\smallsetminus I\right)  $. By convention
$Z_{0}\left(  \boldsymbol{s};\varnothing\right)  =1$, $Z_{1}\left(
\boldsymbol{s};\varnothing\right)  =1$. A central goal of this article is to
show that $Z\left(  \boldsymbol{s}\right)  $ has an analytic continuation to
the whole $\mathbb{C}^{D}$ as a rational function in the variables
$p^{-s_{ij}}$. To establish this result, we show that all functions appearing
on the right-hand side of formula (\ref{Formula_Zeta_amplirtude}) admit
analytic continuations to the whole $\mathbb{C}^{D}$ as rational functions in
the variables $p^{-s_{ij}}$, and that each of these functions is holomorphic
on certain domain, and that the intersection of all these domains contains an
open and connected subset of $\mathbb{C}^{D}$, which allows us to use the
principle of analytic continuation.

Propositions \ref{Proposition_1} and \ref{Proposition_2} are dedicated to
showing that $Z_{1}(\boldsymbol{s};J)$ and $Z_{0}(\boldsymbol{s};I)$,
respectively, admit analytic continuations as rational functions of the
$p^{-s_{ij}}$; the main theorem then follows from these, after an analysis of
the support conditions (given in Lemmas \ref{Lemma_3A} and \ref{Lemma_H_0},
respectively). The proof of the propositions relies on a recursive expression
for the $Z_{i}(\boldsymbol{s};J)$s in terms of auxiliary functions
$L_{0}(\boldsymbol{s};I)$, $L_{1}(\boldsymbol{s};I,J)$, and $L_{2}%
(\boldsymbol{s};I,J,K)$. Lemmas \ref{Lemma_L_1} and \ref{Lema_L_2} demonstrate
the form of the respective analytic continuations of $L_{1}(\boldsymbol{s}%
;I,J)$ and $L_{2}(\boldsymbol{s};I,J,K)$; Lemma \ref{Lemma_L_0} computes
$L_{0}(\boldsymbol{s};I)$ in terms of $L_{1}(\boldsymbol{s};I,J)$. (A
degeneration formula, relating $L_{1}(\boldsymbol{s};I,J)$ to $L_{2}%
(\boldsymbol{s};I,J,K)$, appears as Remark \ref{Nota_Lemma_L_2}). Lemmas
\ref{Lemma_F} and \ref{Lemma_M_a_J} address the computation of other auxiliary
functions, $Z(s_{1},s_{2},s_{3})$ and $M_{1}(\boldsymbol{s};J)$; Lemma
\ref{Lemma_partition} is a preliminary result regarding indexing.

A natural question is to know if the auxiliary integrals introduced to compute
$Z(\boldsymbol{s})$ have a natural interpretation in the context of $p$-adic
string theory. In our opinion, these a auxiliary integrals are merely
organizational devices, since they are defined over particular regions in the
moduli space.

\subsection{Some $p$-adic integrals}

We compute some $p$-adic integrals needed for calculating $Z_{0}\left(
\boldsymbol{s};I\right)  $ and $Z_{1}\left(  \boldsymbol{s};I\right)  $.

Let $J$ be a subset of $T$ with $\left\vert J\right\vert \geq2$. We define%
\begin{equation}
L_{0}\left(  \left(  s_{ij}\right)  _{_{\substack{2\leq i<j\leq N-2\\i,j\in
J}}};J\right)  :=L_{0}\left(  \boldsymbol{s};J\right)  =%
{\displaystyle\int\limits_{(\mathbb{Z}_{p}^{\mathbb{\times}})^{\left\vert
J\right\vert }}}
{\displaystyle\prod\limits_{_{\substack{2\leq i<j\leq N-2\\i,j\in J}}}}
\left\vert x_{i}-x_{j}\right\vert _{p}^{s_{i}{}_{j}}%
{\displaystyle\prod\limits_{i\in J}}
dx_{i}\text{ } \label{L_0_N}%
\end{equation}
for $\operatorname{Re}\left(  s_{ij}\right)  >0$ for any $ij$, and
\begin{equation}
L_{1}\left(  \left(  s_{ij}\right)  _{_{\substack{2\leq i<j\leq N-2\\i,j\in
J}}};J,K\right)  :=L_{1}\left(  \boldsymbol{s};J,K\right)  =%
{\displaystyle\int\limits_{\mathbb{Z}_{p}^{\left\vert J\right\vert }}}
{\displaystyle\prod\limits_{\substack{\left(  i,j\right)  \in K}}}
\left\vert x_{i}-x_{j}\right\vert _{p}^{s_{ij}}%
{\displaystyle\prod\limits_{i\in J}}
dx_{i}, \label{L_1_N}%
\end{equation}
where $K\subseteq T_{J}:=\left\{  \left(  i,j\right)  \in T\times T;2\leq
i<j\leq N-2,i,j\in J\right\}  $ and $\operatorname{Re}\left(  s_{ij}\right)
>0$ for any $ij$. Notice that if $\left\vert J\right\vert =1$, then
$L_{0}\left(  \boldsymbol{s};J\right)  =1-p^{-1}$ and $K=\varnothing$ which
implies $L_{1}\left(  \boldsymbol{s};J,K\right)  =1$. A precise definition of
integrals $L_{0}\left(  \boldsymbol{s};J\right)  $ requires to integrate on
\[
(\mathbb{Z}_{p}^{\mathbb{\times}})^{\left\vert J\right\vert }\smallsetminus
\left\{  x\in(\mathbb{Z}_{p}^{\mathbb{\times}})^{\left\vert J\right\vert };%
{\displaystyle\prod\limits_{\substack{2\leq i<j\leq N-2\\i,j\in J}}}
\left(  x_{i}-x_{j}\right)  =0\right\}  .
\]
A similar \ consideration is required for $L_{1}\left(  \boldsymbol{s}%
;J,K\right)  $. However, for the sake of simplicity we use definitions
(\ref{L_0_N})-(\ref{L_1_N}). We will use this simplified notation later on for
similar integrals. The integrals $L_{0}\left(  \boldsymbol{s};J\right)  $,
$L_{1}\left(  \boldsymbol{s};J,K\right)  $ are $p$-adic multivariate local
zeta function, these functions were studied by Loeser in \cite{Loeser}. In
particular, it is known that these functions have an analytic continuation to
$\mathbb{C}^{D}$\ as rational functions in the variables $p^{-s_{ij}}$ and
that they are holomorphic functions on $\operatorname{Re}\left(
s_{ij}\right)  >0$ for any $ij$.

\begin{remark}
\label{Nota_basic_const}Let $J$ be subset of $T$, with $\left\vert
J\right\vert \geq2$. Set
\[
T_{J}=\left\{  \left(  i,j\right)  \in T\times T;2\leq i<j\leq N-2,i,j\in
J\right\}
\]
as before. For $\overline{\boldsymbol{a}}=\left(  \overline{a}_{i}\right)
_{i\in J}\in(\mathbb{F}_{p}^{\mathbb{\times}})^{\left\vert J\right\vert
}\smallsetminus\overline{\Delta}\left(  J\right)  $, with
\[
\overline{\Delta}\left(  J\right)  :=\left\{  \overline{\boldsymbol{a}}%
\in(\mathbb{F}_{p}^{\mathbb{\times}})^{\left\vert J\right\vert };\overline
{a}_{i}\neq\overline{a}_{j}\text{ for }i\neq j\text{, with }i,j\in J\right\}
,
\]
we set%
\[
K(\overline{\boldsymbol{a}}):=\left\{  \left(  i,j\right)  \in T_{J}%
;\overline{a}_{i}=\overline{a}_{j}\right\}  .
\]
Now, we introduce on $(\mathbb{F}_{p}^{\mathbb{\times}})^{\left\vert
J\right\vert }\smallsetminus\overline{\Delta}\left(  J\right)  $, the
following equivalence relation:%
\[
\overline{\boldsymbol{a}}\sim\overline{\boldsymbol{b}}\text{ }\Leftrightarrow
K(\overline{\boldsymbol{a}})=K(\overline{\boldsymbol{b}}).
\]
We denote by $\overline{A}(\overline{\boldsymbol{a}})=\left\{  \overline
{\boldsymbol{b}}\in(\mathbb{F}_{p}^{\mathbb{\times}})^{\left\vert J\right\vert
}\smallsetminus\overline{\Delta}\left(  J\right)  ;\overline{\boldsymbol{a}%
}\sim\overline{\boldsymbol{b}}\right\}  $, the equivalence class defined by
$\overline{\boldsymbol{a}}\in(\mathbb{F}_{p}^{\mathbb{\times}})^{\left\vert
J\right\vert }\smallsetminus\overline{\Delta}\left(  J\right)  $. For
instance, if $\ \overline{\boldsymbol{a}}=\overline{\boldsymbol{1}}=\left(
\overline{1}\right)  _{i\in J}$, then $\overline{A}\left(  \overline
{\boldsymbol{1}}\right)  =%
{\textstyle\bigsqcup\nolimits_{\overline{b}\in\mathbb{F}_{p}^{\mathbb{\times}%
}}}
\left\{  \overline{b}\left(  \overline{1}\right)  _{i\in J}\right\}  $. By
taking a unique representative in each equivalence class, we obtain
$\mathcal{R}(J)\subset(\mathbb{F}_{p}^{\mathbb{\times}})^{\left\vert
J\right\vert }\smallsetminus\overline{\Delta}\left(  J\right)  $ such that
\[
(\mathbb{F}_{p}^{\mathbb{\times}})^{\left\vert J\right\vert }=\bigsqcup
\limits_{\overline{\boldsymbol{a}}\in R(J)}\overline{A}(\overline
{\boldsymbol{a}})\bigsqcup\overline{\Delta}\left(  J\right)  .
\]
Given a subset $K\subseteq T_{J}$ with $K=\left\{  \left(  i_{1},j_{1}\right)
,\ldots,\left(  i_{m},j_{m}\right)  \right\}  $, we define $K_{\text{list}%
}=\left\{  i_{1},j_{1},\ldots,i_{m},j_{m}\right\}  \subset J$. We will use the
notation $K_{\text{list}}(\overline{\boldsymbol{a}})$ to mean $K(\overline
{\boldsymbol{a}})_{\text{list}}$, for $\overline{\boldsymbol{a}}\in
(\mathbb{F}_{p}^{\mathbb{\times}})^{\left\vert J\right\vert }$. Notice that
$K(\overline{\boldsymbol{a}})\subset K_{\text{list}}(\overline{\boldsymbol{a}%
})\times K_{\text{list}}(\overline{\boldsymbol{a}})$, $\left\vert
K_{\text{list}}(\overline{\boldsymbol{a}})\right\vert \geq2$ for any
$\overline{\boldsymbol{a}}\in(\mathbb{F}_{p}^{\mathbb{\times}})^{\left\vert
J\right\vert }\smallsetminus\overline{\Delta}\left(  J\right)  $ and that
$K_{\text{list}}(\overline{\boldsymbol{1}})=J$.
\end{remark}

\begin{lemma}
\label{Lemma_L_0}If $\left\vert J\right\vert \geq2$, then, with the notation
of Remark \ref{Nota_basic_const}, the following formula holds:%
\[
L_{0}\left(  \boldsymbol{s};J\right)  =\sum\limits_{\overline{\boldsymbol{a}%
}\in\mathcal{R}(J)}\left\vert \overline{A}(\overline{\boldsymbol{a}%
})\right\vert p^{-\left\vert J\right\vert -\sum_{_{\left(  i,j\right)  \in
K(\overline{\boldsymbol{a}})}}s_{ij}}L_{1}\left(  \boldsymbol{s}%
;K_{\text{list}}(\overline{\boldsymbol{a}}),K(\overline{\boldsymbol{a}%
})\right)  +\left\vert \overline{\Delta}\left(  J\right)  \right\vert
p^{-\left\vert J\right\vert }%
\]
for $\operatorname{Re}(s_{ij})>0$ for all $i$, $j\in J$.
\end{lemma}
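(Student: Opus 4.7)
The strategy is to decompose the domain $(\mathbb{Z}_{p}^{\times})^{|J|}$ according to the residue classes modulo $p$ of the coordinates, and on each piece perform the translation $x_{i}=a_{i}+py_{i}$ to reduce to an integral of $L_{1}$-type.

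First I would write the partition
\[
(\mathbb{Z}_{p}^{\times})^{|J|}=\bigsqcup_{\overline{\boldsymbol{a}}\in(\mathbb{F}_{p}^{\times})^{|J|}}\bigl(\boldsymbol{a}+(p\mathbb{Z}_{p})^{|J|}\bigr),
\]
where $\boldsymbol{a}=(a_{i})_{i\in J}$ denotes any fixed lift of $\overline{\boldsymbol{a}}$ to $(\mathbb{Z}_{p}^{\times})^{|J|}$. On each coset I apply the analytic change of variables $x_{i}=a_{i}+py_{i}$ with $y_{i}\in\mathbb{Z}_{p}$; the Jacobian contributes the factor $p^{-|J|}$. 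For a pair $(i,j)\in T_{J}$ with $\overline{a}_{i}\neq\overline{a}_{j}$ one has $x_{i}-x_{j}=(a_{i}-a_{j})+p(y_{i}-y_{j})$ with $a_{i}-a_{j}\in\mathbb{Z}_{p}^{\times}$, so $|x_{i}-x_{j}|_{p}=1$; while for $(i,j)\in K(\overline{\boldsymbol{a}})$ one has $|x_{i}-x_{j}|_{p}=p^{-1}|y_{i}-y_{j}|_{p}$.

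Collecting these facts, the contribution of the coset indexed by $\overline{\boldsymbol{a}}$ equals
\[
p^{-|J|}\,p^{-\sum_{(i,j)\in K(\overline{\boldsymbol{a}})}s_{ij}}\int_{\mathbb{Z}_{p}^{|J|}}\prod_{(i,j)\in K(\overline{\boldsymbol{a}})}|y_{i}-y_{j}|_{p}^{s_{ij}}\prod_{i\in J}dy_{i}.
\]
The key observation is that the integrand depends only on the variables $y_{i}$ with $i\in K_{\text{list}}(\overline{\boldsymbol{a}})$; integrating out the remaining $y_{i}$'s over $\mathbb{Z}_{p}$ yields factors of $1$, and what remains is exactly $L_{1}(\boldsymbol{s};K_{\text{list}}(\overline{\boldsymbol{a}}),K(\overline{\boldsymbol{a}}))$. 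If $\overline{\boldsymbol{a}}\in\overline{\Delta}(J)$, then $K(\overline{\boldsymbol{a}})=\varnothing$, $K_{\text{list}}(\overline{\boldsymbol{a}})=\varnothing$, and the coset contributes simply $p^{-|J|}$ by our conventions.

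Finally I sum over cosets. The map $\overline{\boldsymbol{a}}\mapsto K(\overline{\boldsymbol{a}})$ (and hence the integrand and its $p$-power prefactor) is constant on each equivalence class $\overline{A}(\overline{\boldsymbol{a}})$ of Remark~\ref{Nota_basic_const}; so grouping the non-diagonal cosets by this equivalence relation produces the factor $|\overline{A}(\overline{\boldsymbol{a}})|$ in front of each $L_{1}$-term, and the diagonal cosets contribute the term $|\overline{\Delta}(J)|\,p^{-|J|}$. This yields exactly the claimed formula. The only real obstacle is bookkeeping: one must verify carefully that the variables outside $K_{\text{list}}(\overline{\boldsymbol{a}})$ truly do not appear in the restricted integrand, so that their integration produces $1$ and the index set of the resulting $L_{1}$ is $K_{\text{list}}(\overline{\boldsymbol{a}})$ rather than all of $J$; convergence and holomorphy of every piece on $\operatorname{Re}(s_{ij})>0$ follow from the corresponding statement for $L_{1}$ recalled just before the lemma.
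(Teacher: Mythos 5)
Your proposal is correct and follows essentially the same approach as the paper: partition $(\mathbb{Z}_{p}^{\times})^{|J|}$ into the cosets $\boldsymbol{a}+(p\mathbb{Z}_{p})^{|J|}$, apply $x_{i}=a_{i}+py_{i}$, observe that only pairs in $K(\overline{\boldsymbol{a}})$ survive, and regroup by the equivalence classes of Remark~\ref{Nota_basic_const}. You simply spell out the change-of-variables and the reduction to $K_{\text{list}}(\overline{\boldsymbol{a}})$ more explicitly than the paper, which states the coset contribution directly.
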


\begin{proof}
For $\overline{\boldsymbol{a}}\in(\mathbb{F}_{p}^{\mathbb{\times}%
})^{\left\vert J\right\vert }\smallsetminus\overline{\Delta}\left(  J\right)
$, set $A(\overline{\boldsymbol{a}}):=\left\{  \boldsymbol{b}+p\boldsymbol{x}%
;\overline{\boldsymbol{b}}\in\overline{A}\left(  \overline{\boldsymbol{a}%
}\right)  \right\}  $, and for $\overline{\Delta}\left(  J\right)  $,
$\Delta\left(  J\right)  :=\left\{  \boldsymbol{a}+p\boldsymbol{x}%
;\overline{\boldsymbol{a}}\in\overline{\Delta}\left(  J\right)  \right\}  $.
Now
\begin{multline*}
L_{0}\left(  \boldsymbol{s};J\right)  =\sum\limits_{\overline{\boldsymbol{a}%
}\in(\mathbb{F}_{p}^{\mathbb{\times}})^{\left\vert J\right\vert }}%
{\displaystyle\int\limits_{\boldsymbol{a}+\left(  p\mathbb{Z}_{p}\right)
^{\left\vert J\right\vert }}}
{\displaystyle\prod\limits_{\substack{2\leq i<j\leq N-2 \\i,j\in J}}}
\left\vert x_{i}-x_{j}\right\vert _{p}^{s_{ij}}%
{\displaystyle\prod\limits_{i\in J}}
dx_{i}\\
=\sum\limits_{\overline{\boldsymbol{a}}\in\mathcal{R}(J)}\sum
\limits_{\overline{\boldsymbol{b}}\in\overline{A}(\overline{\boldsymbol{a}})}%
{\displaystyle\int\limits_{\boldsymbol{b}+\left(  p\mathbb{Z}_{p}\right)
^{\left\vert J\right\vert }}}
{\displaystyle\prod\limits_{\substack{2\leq i<j\leq N-2 \\i,j\in J}}}
\left\vert x_{i}-x_{j}\right\vert _{p}^{s_{ij}}%
{\displaystyle\prod\limits_{i\in J}}
dx_{i}+\\
\sum\limits_{\overline{\boldsymbol{a}}\in\overline{\Delta}\left(  J\right)  }%
{\displaystyle\int\limits_{\boldsymbol{a}+\left(  p\mathbb{Z}_{p}\right)
^{\left\vert J\right\vert }}}
{\displaystyle\prod\limits_{\substack{2\leq i<j\leq N-2 \\i,j\in J}}}
\left\vert x_{i}-x_{j}\right\vert _{p}^{s_{ij}}%
{\displaystyle\prod\limits_{i\in J}}
dx_{i}\\
=\sum\limits_{\overline{\boldsymbol{a}}\in\mathcal{R}(J)}\left\vert
\overline{A}\left(  \overline{\boldsymbol{a}}\right)  \right\vert
p^{-\left\vert J\right\vert -\sum_{\left(  i,j\right)  \in K(\overline
{\boldsymbol{a}})}s_{ij}}%
{\displaystyle\int\limits_{\left(  \mathbb{Z}_{p}\right)  ^{\left\vert
K_{\text{list}}(\overline{\boldsymbol{a}})\right\vert }}}
{\displaystyle\prod\limits_{\left(  i,j\right)  \in K(\overline{\boldsymbol{a}%
})}}
\left\vert x_{i}-x_{j}\right\vert _{p}^{s_{ij}}%
{\displaystyle\prod\limits_{i\in K_{\text{list}}(\overline{\boldsymbol{a}})}}
dx_{i}\\
+\left\vert \overline{\Delta}\left(  J\right)  \right\vert p^{-\left\vert
J\right\vert }.
\end{multline*}

\end{proof}

\begin{lemma}
\label{Lemma_partition}We use all the notation~introduced in Remark
\ref{Nota_basic_const}. Given $\overline{\boldsymbol{a}}=\left(  \overline
{a}_{i}\right)  _{i\in J}\in(\mathbb{F}_{p}^{\mathbb{\times}})^{\left\vert
J\right\vert }\smallsetminus\overline{\Delta}\left(  J\right)  $ and $\left(
i,j\right)  \in K(\overline{\boldsymbol{a}})$, we set%
\[
K(\left(  i,j\right)  ,\overline{\boldsymbol{a}}):=\left\{  \left(
\widetilde{i},\widetilde{j}\right)  \in K(\overline{\boldsymbol{a}}%
);\overline{a}_{i}=\overline{a}_{\widetilde{i}}\right\}
\]
and use $K_{\text{list}}(\left(  i,j\right)  ,\overline{\boldsymbol{a}%
}):=K(\left(  i,j\right)  ,\overline{\boldsymbol{a}})_{\text{list}}$. Then the
following assertions hold:

\noindent(i)
\[
K(\left(  i,j\right)  ,\overline{\boldsymbol{a}})=T_{K_{\text{list}}(\left(
i,j\right)  ,\overline{\boldsymbol{a}})}=\left\{  \left(  r,s\right)  ;2\leq
r<s\leq N-2,r,s\in K_{\text{list}}(\left(  i,j\right)  ,\overline
{\boldsymbol{a}})\right\}  ;
\]
\noindent(ii) the subsets $K(\left(  i,j\right)  ,\overline{\boldsymbol{a}})$
form a partition of $K(\overline{\boldsymbol{a}})$, i.e. there exists a finite
set $\mathcal{R}\left(  \overline{\boldsymbol{a}}\right)  $ of elements
$\left(  i,j\right)  \in K(\overline{\boldsymbol{a}}),$ such that
$K(\overline{\boldsymbol{a}})=%
{\textstyle\bigsqcup\nolimits_{\left(  i,j\right)  \in\mathcal{R}\left(
\overline{\boldsymbol{a}}\right)  }}
K(\left(  i,j\right)  ,\overline{\boldsymbol{a}}).$
\end{lemma}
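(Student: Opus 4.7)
The plan is to translate both assertions into elementary facts about the equivalence relation on $J$ defined by $r \sim s \Leftrightarrow \overline{a}_r = \overline{a}_s$, and then to verify them by unpacking the definitions of $K(\overline{\boldsymbol{a}})$, $K((i,j), \overline{\boldsymbol{a}})$, and $K_{\text{list}}((i,j), \overline{\boldsymbol{a}})$ from Remark \ref{Nota_basic_const}.

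For (i), I would first establish the intermediate characterization
\[
K_{\text{list}}((i,j), \overline{\boldsymbol{a}}) = \{ r \in J : \overline{a}_r = \overline{a}_i \}.
\]
The inclusion $\subseteq$ is immediate, since by definition every index $r$ occurring in a pair of $K((i,j), \overline{\boldsymbol{a}})$ satisfies $\overline{a}_r = \overline{a}_i$. For the reverse inclusion, if $r \in J$ satisfies $\overline{a}_r = \overline{a}_i$, then either $r \in \{i,j\}$ (and $r$ already appears in the pair $(i,j)$), or else the pair formed from $\{r,i\}$, ordered so that the smaller index comes first, lies in $K((i,j), \overline{\boldsymbol{a}})$, whence $r \in K_{\text{list}}((i,j), \overline{\boldsymbol{a}})$. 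Once this characterization is in hand, both $K((i,j), \overline{\boldsymbol{a}})$ and $T_{K_{\text{list}}((i,j), \overline{\boldsymbol{a}})}$ unfold to the same set of pairs $(r,s)$ with $2 \leq r < s \leq N-2$, $r,s \in J$, and $\overline{a}_r = \overline{a}_s = \overline{a}_i$, proving (i).

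For (ii), I would let $J_1, \ldots, J_m$ enumerate the $\sim$-equivalence classes in $J$ of cardinality at least $2$ (singletons contribute no pair to $K(\overline{\boldsymbol{a}})$), and for each $k$ choose one pair $(i_k, j_k)$ with $i_k, j_k \in J_k$ and $i_k < j_k$; then take $\mathcal{R}(\overline{\boldsymbol{a}}) := \{(i_k, j_k) : 1 \leq k \leq m\}$. Every $(r,s) \in K(\overline{\boldsymbol{a}})$ has $r \sim s$, so both indices lie in a unique class $J_k$, and then $(r,s) \in K((i_k, j_k), \overline{\boldsymbol{a}})$ because $\overline{a}_r = \overline{a}_{i_k}$; this shows the union covers $K(\overline{\boldsymbol{a}})$. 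Disjointness is automatic: membership of a pair in $K((i_k, j_k), \overline{\boldsymbol{a}})$ forces its indices to lie in the $\sim$-class of $i_k$, which uniquely determines $k$.

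There is no substantive obstacle here; the lemma is a bookkeeping result packaging the observation that the partition of indices by residue induces a partition of the pairs in $K(\overline{\boldsymbol{a}})$. The only degenerate case to check is $|J_k| = 2$, where the characterization in (i) reduces gracefully to the singleton $\{(i_k, j_k)\}$.
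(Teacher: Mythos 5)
Your proof is correct and follows essentially the same line of reasoning as the paper: both rest on the observation that an index pair in $K(\overline{\boldsymbol{a}})$ is determined up to the residue class $\overline{a}_i$, so that $K((i,j),\overline{\boldsymbol{a}})$ depends only on the $\sim$-class of $i$. Your intermediate characterization $K_{\text{list}}((i,j),\overline{\boldsymbol{a}}) = \{r \in J : \overline{a}_r = \overline{a}_i\}$ and the explicit construction of $\mathcal{R}(\overline{\boldsymbol{a}})$ from the $\sim$-classes of size $\geq 2$ are a cleaner packaging of the bookkeeping; the paper argues pairwise and only verifies that overlapping $K$-sets coincide, leaving the covering implicit.
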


\begin{proof}
(i) By definition $K(\left(  i,j\right)  ,\overline{\boldsymbol{a}})\subseteq
T_{K_{\text{list}}(\left(  i,j\right)  ,\overline{\boldsymbol{a}})}.$
Conversely, let $\left(  \widetilde{i}_{m},\widetilde{j}_{l}\right)  \in
T_{K_{\text{list}}(\left(  i,j\right)  ,\overline{\boldsymbol{a}})},$ then
there exists $\widetilde{j}_{m}\in K_{\text{list}}(\left(  i,j\right)
,\overline{\boldsymbol{a}})$ such that $\left(  \widetilde{i}_{m}%
,\widetilde{j}_{m}\right)  \in K(\left(  i,j\right)  ,\overline{\boldsymbol{a}%
})$ or $\left(  \widetilde{j}_{m},\widetilde{i}_{m}\right)  \in K(\left(
i,j\right)  ,\overline{\boldsymbol{a}})$. In any case, either $\left(
\widetilde{i}_{m},\widetilde{j}_{m}\right)  $ or $\left(  \widetilde{j}%
_{m},\widetilde{i}_{m}\right)  $ belongs to $K(\overline{\boldsymbol{a}})$ and
$\overline{a}_{i}=\overline{a}_{\widetilde{i}_{m}}=\overline{a}_{\widetilde
{j}_{m}}$. Similarly, there exists $\widetilde{i}_{l}\in K_{\text{list}%
}(\left(  i,j\right)  ,\overline{\boldsymbol{a}})$ such that either $\left(
\widetilde{i}_{l},\widetilde{j}_{l}\right)  $ or $\left(  \widetilde{j}%
_{l},\widetilde{i}_{l}\right)  $ belongs to $K(\left(  i,j\right)
,\overline{\boldsymbol{a}})$ and $\overline{a}_{i}=\overline{a}_{\widetilde
{i}_{l}}=\overline{a}_{\widetilde{j}_{l}}$. Therefore $\overline
{a}_{\widetilde{i}_{m}}=\overline{a}_{\widetilde{j}_{l}}$ i.e. $\left(
\widetilde{i}_{m},\widetilde{j}_{l}\right)  \in K(\overline{\boldsymbol{a}})$,
furthermore $\left(  \widetilde{i}_{m},\widetilde{j}_{l}\right)  \in K(\left(
i,j\right)  ,\overline{\boldsymbol{a}}).$ Hence $K(\left(  i,j\right)
,\overline{\boldsymbol{a}})=T_{K_{\text{list}}(\left(  i,j\right)
,\overline{\boldsymbol{a}})}$.

(ii) Let $\left(  i_{m},j_{m}\right)  \in K(\left(  i,j\right)  ,\overline
{\boldsymbol{a}})\cap K(\left(  \widetilde{i},\widetilde{j}\right)
,\overline{\boldsymbol{a}})$, then $\overline{a}_{i}=\overline{a}_{i_{m}%
}=\overline{a}_{\widetilde{i}}$ and $\left(  \widetilde{i},\widetilde
{j}\right)  \in$ $K(\left(  i,j\right)  ,\overline{\boldsymbol{a}})$, and
consequently $K(\left(  \widetilde{i},\widetilde{j}\right)  ,\overline
{\boldsymbol{a}})\subseteq K(\left(  i,j\right)  ,\overline{\boldsymbol{a}})$.
Similarly, one verifies that $K(\left(  i,j\right)  ,\overline{\boldsymbol{a}%
})\subseteq K(\left(  \widetilde{i},\widetilde{j}\right)  ,\overline
{\boldsymbol{a}})$.
\end{proof}

\begin{remark}
As a consequence of Lemmas \ref{Lemma_L_0}-\ref{Lemma_partition}, we have%
\[
L_{1}\left(  \boldsymbol{s};K_{\text{list}}(\overline{\boldsymbol{a}%
}),K(\overline{\boldsymbol{a}})\right)  =%
{\textstyle\prod\limits_{\left(  i,j\right)  \in\mathcal{R}\left(
\overline{\boldsymbol{a}}\right)  }}
L_{1}\left(  \boldsymbol{s};K_{\text{list}}(\left(  i,j\right)  ,\overline
{\boldsymbol{a}}),T_{K_{\text{list}}(\left(  i,j\right)  ,\overline
{\boldsymbol{a}})}\right)  .
\]

\end{remark}

\begin{example}
Take $p\geq3$, $\overline{\boldsymbol{a}}=\left(  \overline{1},\overline
{2},\overline{1},\overline{2},\overline{2}\right)  \in\mathbb{F}_{p}^{5}$, and
$J=\left\{  2,3,4,5,6\right\}  $. Hence
\[
T_{J}=\left\{  \left(  2,3\right)  ,\left(  2,4\right)  ,\left(  2,5\right)
,\left(  2,6\right)  ,\left(  3,4\right)  ,\left(  3,5\right)  ,\left(
3,6\right)  ,\left(  4,5\right)  ,\left(  4,6\right)  ,\left(  5,6\right)
\right\}  ,
\]
and by Lemma \ref{Lemma_partition},
\[
K\left(  \overline{\boldsymbol{a}}\right)  =\left\{  \left(  2,4\right)
,\left(  3,5\right)  ,\left(  3,6\right)  ,\left(  5,6\right)  \right\}
=K(\left(  2,4\right)  ,\overline{\boldsymbol{a}})%
{\textstyle\bigsqcup}
K(\left(  3,5\right)  ,\overline{\boldsymbol{a}}),
\]
where $K(\left(  2,4\right)  ,\overline{\boldsymbol{a}})=\left\{  \left(
2,4\right)  \right\}  ,$ $K(\left(  3,5\right)  ,\overline{\boldsymbol{a}%
})=\left\{  \left(  3,5\right)  ,\left(  3,6\right)  ,\left(  5,6\right)
\right\}  $. Thus
\[
K_{list}(\left(  2,4\right)  ,\overline{\boldsymbol{a}})=\left\{  2,4\right\}
\text{ and }K_{\text{list}}(\left(  3,5\right)  ,\overline{\boldsymbol{a}%
})\allowbreak=\left\{  3,5,6\right\}  .
\]
With this notation, $L_{1}\left(  \boldsymbol{s};K_{\text{list}}%
(\overline{\boldsymbol{a}}),K(\overline{\boldsymbol{a}})\right)  $ equals
\begin{gather*}%
{\displaystyle\int\limits_{\mathbb{Z}_{p}^{5}}}
\left\vert x_{2}-x_{4}\right\vert _{p}^{s_{24}}\left\vert x_{3}-x_{5}%
\right\vert _{p}^{s_{35}}\left\vert x_{3}-x_{6}\right\vert _{p}^{s_{36}%
}\left\vert x_{5}-x_{6}\right\vert _{p}^{s_{56}}dx_{2}dx_{3}dx_{4}dx_{5}%
dx_{6}\\
=\left\{
{\displaystyle\int\limits_{\mathbb{Z}_{p}^{2}}}
\left\vert x_{2}-x_{4}\right\vert _{p}^{s_{24}}dx_{2}dx_{4}\right\}  \left\{
{\displaystyle\int\limits_{\mathbb{Z}_{p}^{3}}}
\left\vert x_{3}-x_{5}\right\vert _{p}^{s_{35}}\left\vert x_{3}-x_{6}%
\right\vert _{p}^{s_{36}}\left\vert x_{5}-x_{6}\right\vert _{p}^{s_{56}}%
dx_{3}dx_{5}dx_{6}\right\} \\
=L_{1}\left(  \boldsymbol{s};K_{\text{list}}(\left(  2,4\right)
,\overline{\boldsymbol{a}}),T_{K_{\text{list}}(\left(  2,4\right)
,\overline{\boldsymbol{a}})}\right)  L_{1}\left(  \boldsymbol{s}%
;K_{\text{list}}(\left(  3,5\right)  ,\overline{\boldsymbol{a}}%
),T_{K_{\text{list}}(\left(  3,5\right)  ,\overline{\boldsymbol{a}})}\right)
.
\end{gather*}

\end{example}

\begin{lemma}
\label{Lemma_F}Set $F(s_{1},s_{2},s_{3},x,y):=|x|_{p}^{s_{1}}|y|_{p}^{s_{2}%
}|x-y|_{p}^{s_{3}}$, $s_{1}$, $s_{2}$, $s_{3}\in\mathbb{C}$, and%
\[
Z\left(  s_{1},s_{2},s_{3}\right)  :=\int_{%
\mathbb{Z}
_{p}^{2}}F(s_{1},s_{2},s_{3},x,y)dxdy\text{ for }\operatorname{Re}%
(s_{i})>0\text{, }i=1,2,3\text{.}%
\]
Then $Z\left(  s_{1},s_{2},s_{3}\right)  $ is a holomorphic function on
\[
\left\{  \left(  s_{1},s_{2},s_{3}\right)  \in\mathbb{C}^{3};\operatorname{Re}%
(s_{i})>-1\text{ for }i=1,2,3\text{ and }\operatorname{Re}(s_{1}%
)+\operatorname{Re}(s_{2})+\operatorname{Re}(s_{3})>-2\right\}  .
\]
In addition,%
\[
Z\left(  s_{1},s_{2},s_{3}\right)  :=\frac{Q\left(  p^{-s_{1}},p^{-s_{2}%
},p^{-s_{3}}\right)  }{\left(  1-p^{-2-s_{1}-s_{2}-s_{3}}\right)
\prod\limits_{i=1}^{3}\left(  1-p^{-1-s_{i}}\right)  },
\]
where $Q\left(  p^{-s_{1}},p^{-s_{2}},p^{-s_{3}}\right)  $ denotes a
polynomial with rational coefficients in the variables $p^{-s_{1}}$,
$p^{-s_{2}}$, $p^{-s_{3}}$.
\end{lemma}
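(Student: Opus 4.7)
The plan is to set up a self-similarity identity for $Z(s_1,s_2,s_3)$ by partitioning $\mathbb{Z}_p^2$ according to whether $x$ and $y$ lie in $\mathbb{Z}_p^\times$ or in $p\mathbb{Z}_p$, in the spirit of Igusa's $p$-adic stationary phase formula. In the region $\operatorname{Re}(s_i) > 0$ for $i=1,2,3$, joint holomorphy and convergence are immediate from the general theory of multivariate local zeta functions recalled in Section~\ref{Sect_p_adic_anal}, so it suffices to derive the rational expression there and then read off the analytic continuation via the identity principle.

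Concretely, I would split
\[
\mathbb{Z}_p^2 = (p\mathbb{Z}_p)^2 \sqcup (\mathbb{Z}_p^\times \times p\mathbb{Z}_p) \sqcup (p\mathbb{Z}_p \times \mathbb{Z}_p^\times) \sqcup (\mathbb{Z}_p^\times)^2
\]
and evaluate the four resulting integrals. On $(p\mathbb{Z}_p)^2$, the analytic change of variables $x = p x'$, $y = p y'$ has Jacobian $p^{-2}$ and rescales the integrand by $p^{-s_1-s_2-s_3}$, so this piece equals $p^{-2-s_1-s_2-s_3} Z(s_1,s_2,s_3)$---the key self-similar term. On the two mixed pieces the ultrametric inequality forces $|x-y|_p = 1$ (a unit plus an element of $p\mathbb{Z}_p$ is still a unit), so the integrals factor and are computed by the elementary geometric sum $\int_{p\mathbb{Z}_p}|y|_p^{s}\,dy = (1-p^{-1})\,p^{-1-s}/(1-p^{-1-s})$ and its analogue in $s_1$.

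For the unit piece $(\mathbb{Z}_p^\times)^2$, I would stratify by residues mod $p$. Among the $(p-1)^2$ balls $(a+p\mathbb{Z}_p)\times(b+p\mathbb{Z}_p)$, those with $a\neq b$ satisfy $|x-y|_p = 1$ and contribute $(p-1)(p-2)\,p^{-2}$ in total. Along the diagonal $a=b$, substituting $x = a+pu$, $y = a+pv$ rescales the contribution of each ball to $p^{-2-s_3}\int_{\mathbb{Z}_p^2}|u-v|_p^{s_3}\,du\,dv$, and the inner integral equals $(1-p^{-1})/(1-p^{-1-s_3})$ after the volume-preserving change $w = u - v$. Summing over the $p-1$ diagonal classes, the unit piece is rational in $p^{-s_3}$ with denominator $(1-p^{-1-s_3})$, and independent of $s_1,s_2$.

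Adding the four contributions yields a functional equation of the form
\[
(1 - p^{-2-s_1-s_2-s_3})\,Z(s_1,s_2,s_3) = R(s_1,s_2,s_3),
\]
where $R$ is explicitly rational in $p^{-s_1}, p^{-s_2}, p^{-s_3}$ with denominator dividing $\prod_{i=1}^{3}(1-p^{-1-s_i})$. Clearing denominators produces the stated formula, with $Q$ a polynomial in $p^{-s_i}$ having rational coefficients. Each factor $(1-p^{-1-s_i})$ is nonzero as soon as $|p^{-1-s_i}|<1$, i.e.\ $\operatorname{Re}(s_i)>-1$, and $(1-p^{-2-s_1-s_2-s_3})$ is nonzero as soon as $\operatorname{Re}(s_1+s_2+s_3)>-2$; hence the rational function is holomorphic on the stated domain and, by the identity principle, is the analytic continuation of the original integral from $\operatorname{Re}(s_i)>0$. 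The main---but routine---obstacle is the bookkeeping of the unit piece, separating the case of distinct residues from the coinciding case where the one-dimensional Igusa integral of $|w|_p^{s_3}$ reappears after rescaling; no resolution of singularities is required since the divisor $xy(x-y)=0$ has normal crossings off the origin, so the single blow-up implicit in our partition suffices.
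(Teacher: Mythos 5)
Your proposal is correct and follows essentially the same route as the paper: the same four-piece partition of $\mathbb{Z}_p^2$, the same rescaling on $(p\mathbb{Z}_p)^2$ producing the self-similar term, the same use of the ultrametric inequality on the mixed pieces, and the same stratification of $(\mathbb{Z}_p^\times)^2$ by residues $\operatorname{mod} p$ separating the diagonal classes from the off-diagonal ones. The only cosmetic difference is that the paper first peels off $(p\mathbb{Z}_p)^2$ to isolate the geometric factor $(1-p^{-2-s_1-s_2-s_3})^{-1}$ and then splits the sphere $S_0^2$ into three pieces, whereas you handle all four strata at once; the algebra is identical.
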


\begin{remark}
If $s_{1}=s_{2}=0$, then the denominator of $Z\left(  s_{1},s_{2}%
,s_{3}\right)  $ is $1-p^{-1-s_{3}}$.
\end{remark}

\begin{proof}
By using that $\mathbb{Z}_{p}^{2}=(p\mathbb{Z}_{p})^{2}\sqcup S_{0}^{2}$ with
$S_{0}^{2}=p\mathbb{Z}_{p}\times\mathbb{Z}_{p}^{\times}\sqcup\mathbb{Z}%
_{p}^{\times}\times p\mathbb{Z}_{p}\sqcup\mathbb{Z}_{p}^{\times}%
\times\mathbb{Z}_{p}^{\times}$, and then by changing variables, we get%
\[
Z\left(  s_{1},s_{2},s_{3}\right)  =\frac{\int_{S_{0}^{2}}F(s_{1},s_{2}%
,s_{3},x,y)dxdy}{1-p^{-2-s_{1}-s_{2}-s_{3}}}=:\frac{Z_{0}\left(  s_{1}%
,s_{2},s_{3}\right)  }{1-p^{-2-s_{1}-s_{2}-s_{3}}}.
\]
On the other hand,
\begin{gather*}
Z_{0}\left(  s_{1},s_{2},s_{3}\right)  =\int_{p\mathbb{Z}_{p}\times
\mathbb{Z}_{p}^{\times}}F(s_{1},s_{2},s_{3},x,y)dxdy\\
+\int_{\mathbb{Z}_{p}^{\times}\times p\mathbb{Z}_{p}}F(s_{1},s_{2}%
,s_{3},x,y)dxdy+\int_{\mathbb{Z}_{p}^{\times}\times\mathbb{Z}_{p}^{\times}%
}F(s_{1},s_{2},s_{3},x,y)dxdy\\
=:Z_{0,1}(s_{1},s_{2},s_{3})+Z_{0,2}(s_{1},s_{2},s_{3})+Z_{0},_{3}(s_{1}%
,s_{2},s_{3}).
\end{gather*}
First, we compute $Z_{0,1}(s_{1},s_{2},s_{3})$. By a change of variables, we
get
\[
Z_{0,1}(s_{1},s_{2},s_{3})=p^{-1-s_{1}}(1-p^{-1})\int_{\mathbb{Z}_{p}}%
|x|_{p}^{s_{1}}dx=\frac{\left(  1-p^{-1}\right)  ^{2}p^{-1-s_{1}}%
}{1-p^{-1-s_{1}}}%
\]
for $\operatorname{Re}(s_{1})>-1$. By a similar computation we obtain
\[
Z_{0,2}(s_{1},s_{2},s_{3})=\frac{\left(  1-p^{-1}\right)  ^{2}p^{-1-s_{2}}%
}{1-p^{-1-s_{2}}}\text{ for }\operatorname{Re}(s_{2})>-1.
\]
In order to compute
\[
Z_{0,3}(s_{1},s_{2},s_{3})=\int_{\mathbb{Z}_{p}^{\times}\times\mathbb{Z}%
_{p}^{\times}}|x-y|_{p}^{s_{3}}dxdy,
\]
we use that $(\mathbb{Z}_{p}^{\times})^{2}=\sqcup_{\overline{a}_{0}%
,\overline{a}_{1}\in\mathbb{F}_{p}^{\times}}a_{0}+p\mathbb{Z}_{p}\times
a_{1}+p\mathbb{Z}_{p}$, where $\mathbb{F}_{p}^{\times}=\left\{
1,2,...,p-1\right\}  $ as sets, to get%
\begin{gather*}
Z_{0,3}(s_{1},s_{2},s_{3})=\sum_{\overline{a}_{0},\overline{a}_{1}%
\in\mathbb{F}_{p}^{\times}}\int_{a_{0}+p\mathbb{Z}_{p}\times a_{1}%
+p\mathbb{Z}_{p}}|x-y|_{p}^{s_{3}}dxdy\\
=p^{-2}\sum_{\substack{\overline{a}_{0},\overline{a}_{1}\in\mathbb{F}%
_{p}^{\times} \\\overline{a}_{0}\neq\overline{a}_{1}}}\int_{\mathbb{Z}%
_{p}\times\mathbb{Z}_{p}}|a_{0}+px-a_{1}-py|_{p}^{s_{3}}dxdy+p^{-2}%
\sum_{\substack{\overline{a}_{0},\overline{a}_{1}\in\mathbb{F}_{p}^{\times}
\\\overline{a}_{0}=\overline{a}_{1}}}\int_{\mathbb{Z}_{p}\times\mathbb{Z}_{p}%
}|x-y|_{p}^{s_{3}}dxdy\\
=p^{-2}(p-1)(p-2)+p^{-2-s_{3}}\left(  p-1\right)  \frac{1-p^{-1}%
}{1-p^{-1-s_{3}}}.
\end{gather*}

\end{proof}

\begin{lemma}
\label{Lemma_L_1}Let $I$ be a subset of $T$ satisfying $\left\vert
I\right\vert \geq2$. Then $L_{1}\left(  \boldsymbol{s};I,T_{I}\right)  $
admits an analytic continuation as a rational function of the form%
\begin{equation}
L_{1}\left(  \boldsymbol{s};I,T_{I}\right)  =\frac{Q_{I}\left(  \left\{
p^{-s_{ij}}\right\}  _{i,j\in I}\right)  }{\prod\limits_{J\in\mathcal{F}%
\left(  I\right)  }\left(  1-p^{-\left(  \left\vert J\right\vert
-1+\sum_{\substack{2\leq i<j\leq N-2 \\i,j\in J}}s_{ij}\right)  }\right)
^{e_{J}}\prod\limits_{ij\in S_{I}}\left(  1-p^{-1-s_{ij}}\right)  ^{e_{ij}}},
\label{Formula_L_N_1}%
\end{equation}
where $Q_{I}\left(  \left\{  p^{-s_{ij}}\right\}  _{i,j\in I}\right)  $ is a
polynomial with rational coefficients in the variables $\left\{  p^{-s_{ij}%
}\right\}  _{i,j\in I}$, $\mathcal{F}(I)$ is a family of subsets of $I$, with
$I\in\mathcal{F}(I)$, $S_{I}$ is a non-empty subset of $\left\{  2\leq i<j\leq
N-2,i,j\in I\right\}  $, and the $e_{J},$ $e_{ij}$'s are positive integers.
\end{lemma}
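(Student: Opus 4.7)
The plan is to prove Lemma~\ref{Lemma_L_1} by induction on $|I|\ge 2$. For the base case $|I|=2$, writing $I=\{i_0,j_0\}$ with $i_0<j_0$, a direct computation (partition $\mathbb{Z}_p^2$ into the diagonal region $\{x_{i_0}\equiv x_{j_0}\pmod p\}$ and its complement, and scale on the diagonal piece) gives
\[
L_1(\boldsymbol{s};I,T_I)=\frac{1-p^{-1}}{1-p^{-1-s_{i_0 j_0}}},
\]
which matches the claimed form with $\mathcal{F}(I)=\{I\}$ and $S_I=\{i_0 j_0\}$, noting that for $|I|=2$ the two factor-types $(1-p^{-(|I|-1+\sum s_{ij})})$ and $(1-p^{-1-s_{i_0 j_0}})$ coincide.

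For the inductive step ($|I|\ge 3$), the key idea is to stratify $\mathbb{Z}_p^I$ by the set-partition $\pi(\boldsymbol{x})$ of $I$ induced by the relation $i\sim j\Longleftrightarrow x_i\equiv x_j\pmod p$, in the same spirit as the proof of Lemma~\ref{Lemma_L_0}. On the stratum $U_\pi$ associated with $\pi=\{J_1,\ldots,J_{r(\pi)}\}$, the cross-block factors $|x_i-x_j|_p$ equal $1$, so the integrand splits as a product over blocks. For each non-singleton block $J_l$ the substitution $x_i=a_l+py_i$ (with $a_l$ the common residue) contributes $p^{-|J_l|-\sum_{T_{J_l}}s_{ij}}L_1(\boldsymbol{s};J_l,T_{J_l})$, while each singleton contributes $p^{-1}$. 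Counting the $(p)_{r(\pi)}:=p(p-1)\cdots(p-r(\pi)+1)$ ordered choices of distinct residues for the blocks yields the identity
\[
L_1(\boldsymbol{s};I,T_I)=\sum_\pi (p)_{r(\pi)}\,p^{-|I|}\prod_{l:\,|J_l|\ge 2}p^{-\sum_{T_{J_l}}s_{ij}}\,L_1(\boldsymbol{s};J_l,T_{J_l}).
\]
Isolating the trivial partition $\pi_0=\{I\}$, whose contribution is exactly $p^{-(|I|-1+\sum_{T_I}s_{ij})}L_1(\boldsymbol{s};I,T_I)$, produces the self-referential equation
\[
\bigl(1-p^{-(|I|-1+\sum_{T_I}s_{ij})}\bigr)L_1(\boldsymbol{s};I,T_I)=\sum_{\pi\ne\pi_0}(p)_{r(\pi)}\,p^{-|I|-\sum_l\sum_{T_{J_l}}s_{ij}}\prod_{l:\,|J_l|\ge 2}L_1(\boldsymbol{s};J_l,T_{J_l}).
\]
Every $L_1$ on the right has strictly smaller index set $J_l\subsetneq I$; by the inductive hypothesis each is a rational function in the $p^{-s_{ij}}$ of the claimed form. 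Placing the right-hand side over a common denominator and dividing by the newly introduced factor exhibits $L_1(\boldsymbol{s};I,T_I)$ as a rational function whose denominator is built from that new factor (tied to $J=I$, so $I\in\mathcal{F}(I)$) together with the inherited factors of the two allowed types, coming from the subsets $J\subseteq J_l\subsetneq I$ and from pairs $ij$ with $i,j\in J_l\subseteq I$.

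The main obstacle I anticipate is the combinatorial bookkeeping needed to verify that no denominator factors outside the two claimed families are ever introduced, and that the structural conditions of the lemma ($I\in\mathcal{F}(I)$, non-emptiness of $S_I$, and positivity of the multiplicities $e_J,e_{ij}$) survive the induction. The first point is built into the recursion, since the only \emph{new} denominator factor introduced at stage $|I|$ is exactly $(1-p^{-(|I|-1+\sum_{T_I}s_{ij})})$, coming from the self-referential term $\pi_0=\{I\}$; the rest of the denominator is inherited from the inductive hypothesis. The condition $I\in\mathcal{F}(I)$ is precisely this self-referential factor, and $S_I\ne\emptyset$ is inherited because every branch of the recursion terminates at a two-element subset, each such subset contributing a factor $(1-p^{-1-s_{ij}})$ to $S_I$.
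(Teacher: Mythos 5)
Your proof is correct, and it takes a route that is organizationally cleaner than the paper's, though it shares the paper's underlying idea of stratifying by residues modulo $p$ and exploiting a self-referential contribution from the ``all-congruent'' stratum. In the paper, $L_{1}\left(\boldsymbol{s};I,T_{I}\right)$ is handled via three nested layers: first split $\mathbb{Z}_{p}^{\left\vert I\right\vert}$ into $(p\mathbb{Z}_{p})^{\left\vert I\right\vert}\sqcup S_{0}^{\left\vert I\right\vert}$ and scale on the inner cube to extract $1-p^{-\left\vert I\right\vert -\sum s_{ij}}$; next split $S_{0}^{\left\vert I\right\vert}$ by which coordinates are units, expressing the result through the auxiliary all-units integral $L_{0}\left(\boldsymbol{s};J\right)$; finally expand $L_{0}\left(\boldsymbol{s};I\right)$ via Lemma \ref{Lemma_L_0}, whose residue-class decomposition contributes a second self-referential term $(p-1)p^{-\left\vert I\right\vert -\sum s_{ij}}L_{1}\left(\boldsymbol{s};I,T_{I}\right)$, and the two self-referential pieces combine to give $1-p^{1-\left\vert I\right\vert -\sum s_{ij}}$. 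You collapse all of this into a single stratification of $\mathbb{Z}_{p}^{\left\vert I\right\vert}$ by the set partition of $I$ induced by residues mod $p$; the trivial partition $\pi_{0}=\{I\}$ (one common residue, $p$ choices) directly yields the same denominator factor. What your version buys is a self-contained recursion expressing $L_{1}\left(\boldsymbol{s};I,T_{I}\right)$ purely in terms of $L_{1}\left(\boldsymbol{s};J,T_{J}\right)$ for $J\subsetneqq I$, with no detour through $L_{0}$ and without the $K(\overline{\boldsymbol{a}})$-bookkeeping of Remark \ref{Nota_basic_const} and Lemma \ref{Lemma_partition}. What the paper's organization buys is reuse: the ball-minus-sphere device, the partition of $S_{0}^{\left\vert I\right\vert}$ by units, and $L_{0}$ itself recur almost verbatim in Lemma \ref{Lema_L_2} and in Propositions \ref{Proposition_1} and \ref{Proposition_2}, so the overhead is amortized.

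One small point to tighten: your closing claim that $S_{I}\neq\varnothing$ because ``every branch of the recursion terminates at a two-element subset'' is slightly loose, since the all-singletons partition is a branch that produces no $L_{1}$ factors at all. What saves the argument is that for $\left\vert I\right\vert \geq 3$ there is always some $\pi\neq\pi_{0}$ with exactly one block of size two and the rest singletons; that term contributes a factor $1-p^{-1-s_{ij}}$ with $i,j\in I$ to the common denominator, which is all you need. It is also worth noting explicitly that the lemma does not assert irreducibility of the displayed fraction, so any incidental cancellation between numerator and denominator does not threaten the stated form, and padding exponents to keep $I\in\mathcal{F}(I)$ with $e_{I}\geq 1$ in the degenerate case $\left\vert I\right\vert =2$ (where the two factor types coincide) is harmless.
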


\begin{proof}
By using the partition $\mathbb{Z}_{p}^{\left\vert I\right\vert }%
=(p\mathbb{Z}_{p})^{\left\vert I\right\vert }\sqcup S_{0}^{\left\vert
I\right\vert }$, where $\mathbb{Z}_{p}^{\left\vert I\right\vert }=\left\{
\left(  x_{i}\right)  _{i\in I};x_{i}\in\mathbb{Z}_{p}\right\}  $,
$(p\mathbb{Z}_{p}\mathbb{)}^{\left\vert I\right\vert }=\left\{  \left(
x_{i}\right)  _{i\in I};x_{i}\in p\mathbb{Z}_{p}\right\}  $, and
$S_{0}^{\left\vert I\right\vert }=\left\{  \left(  x_{i}\right)  _{i\in I}%
\in\mathbb{Z}_{p}^{\left\vert I\right\vert };\max_{i\in I}\left\{  \left\vert
x_{i}\right\vert _{p}\right\}  =1\right\}  $. By a change of variables, we get%
\begin{align*}
L_{1}\left(  \boldsymbol{s};I,T_{I}\right)   &  =\frac{%
{\displaystyle\int\limits_{S_{0}^{\left\vert I\right\vert }}}
{\displaystyle\prod\limits_{\substack{2\leq i<j\leq N-2 \\i,j\in I}}}
\left\vert x_{i}-x_{j}\right\vert _{p}^{s_{ij}}%
{\displaystyle\prod\limits_{i\in J}}
dx_{i}}{1-p^{-\left\vert I\right\vert -\sum_{\substack{2\leq i<j\leq N-2
\\i,j\in I}}s_{ij}}}\\
&  =:\frac{B_{0}\left(  \boldsymbol{s};I\right)  }{1-p^{-\left\vert
I\right\vert -\sum_{\substack{2\leq i<j\leq N-2 \\i,j\in I}}s_{ij}}}.
\end{align*}
For every non-empty subset $J\subseteq I$, we define%
\[
S_{J}^{\left\vert I\right\vert }=\left\{  \left(  x_{i}\right)  _{i\in I}%
\in\mathbb{Z}_{p}^{\left\vert I\right\vert };\left\vert x_{i}\right\vert
_{p}=1\Leftrightarrow i\in J\right\}  ,
\]
then $S_{0}^{\left\vert I\right\vert }=\sqcup_{J\subseteq I,J\neq\varnothing
}S_{J}^{\left\vert I\right\vert }$ and $A_{0}\left(  \boldsymbol{s};I\right)
=\sum_{_{J\subseteq I,J\neq\varnothing}}A_{0,J}\left(  \boldsymbol{s}\right)
$ where
\[
B_{0,J}\left(  \boldsymbol{s}\right)  :=%
{\displaystyle\int\limits_{S_{J}^{\left\vert I\right\vert }}}
{\displaystyle\prod\limits_{\substack{2\leq i<j\leq N-2 \\i,j\in I}}}
\left\vert x_{i}-x_{j}\right\vert _{p}^{s_{ij}}%
{\displaystyle\prod\limits_{i\in I}}
dx_{i},
\]
for this reason
\begin{equation}
L_{1}\left(  \boldsymbol{s};I,T_{I}\right)  =\frac{B_{0,I}\left(
\boldsymbol{s}\right)  +\sum_{_{J\subsetneqq I,J\neq\varnothing}}%
B_{0,J}\left(  \boldsymbol{s}\right)  }{1-p^{-\left\vert I\right\vert
-\sum_{\substack{2\leq i<j\leq N-2 \\i,j\in I}}s_{ij}}}. \label{Form_L_1}%
\end{equation}
On the other hand,%
\begin{equation}
\left\vert x_{i}-x_{j}\right\vert _{p}^{s_{ij}}\mid_{S_{J}^{\left\vert
I\right\vert }}=\left\{
\begin{array}
[c]{lll}%
\left\vert x_{_{i}}-x_{j}\right\vert _{p}^{s_{ij}} & \text{if} & i,j\in J\\
&  & \\
\left\vert x_{_{i}}-x_{j}\right\vert _{p}^{s_{ij}} & \text{if} & i,j\in
I\smallsetminus J\\
&  & \\
1 & \text{if} & i\in J,j\in I\smallsetminus J\\
&  & \\
1 & \text{if} & j\in J,i\in I\smallsetminus J.
\end{array}
\right.  \label{Calcul_abs}%
\end{equation}
Then
\[
B_{0,I}\left(  \boldsymbol{s}\right)  =L_{0}\left(  \boldsymbol{s};I\right)
,
\]
and if $J\subsetneqq I$,%
\begin{align}
B_{0,J}\left(  \boldsymbol{s}\right)   &  =\left\{
{\displaystyle\int\limits_{(p\mathbb{Z}_{p})^{\left\vert I\smallsetminus
J\right\vert }}}
{\displaystyle\prod\limits_{\substack{2\leq i<j\leq N-2 \\i,j\in
I\smallsetminus J}}}
\left\vert x_{i}-x_{j}\right\vert _{p}^{s_{ij}}%
{\displaystyle\prod\limits_{i\in I\smallsetminus J}}
dx_{i}\right\}  L_{0}\left(  \boldsymbol{s};J\right) \label{Form_A_0_J}\\
&  =p^{-\left\vert I\smallsetminus J\right\vert -\sum_{_{\substack{2\leq
i<j\leq N-2 \\i,j\in I\smallsetminus J}}}s_{ij}}L_{1}\left(  \boldsymbol{s}%
;I\smallsetminus J,T_{I\smallsetminus J}\right)  L_{0}\left(  \boldsymbol{s}%
;J\right)  .\nonumber
\end{align}
Therefore, from (\ref{Form_L_1})-(\ref{Form_A_0_J}), $L_{1}\left(
\boldsymbol{s};I,T_{I}\right)  $ equals
\begin{equation}
\frac{L_{0}\left(  \boldsymbol{s};I\right)  +\sum_{_{J\subsetneqq
I,J\neq\varnothing}}p^{-\left\vert I\smallsetminus J\right\vert -\sum
_{_{\substack{2\leq i<j\leq N-2 \\i,j\in I\smallsetminus J}}}s_{ij}}%
L_{1}\left(  \boldsymbol{s};I\smallsetminus J,T_{I\smallsetminus J}\right)
L_{0}\left(  \boldsymbol{s};J\right)  }{1-p^{-\left\vert I\right\vert
-\sum_{\substack{2\leq i<j\leq N-2 \\i,j\in I}}s_{ij}}}. \label{Formula_1}%
\end{equation}
Now, by Lemma \ref{Lemma_L_0} and \ the fact that $\overline{A}\left(
\overline{\boldsymbol{1}}\right)  =%
{\textstyle\bigsqcup\nolimits_{\overline{b}\in\mathbb{F}_{p}^{\mathbb{\times}%
}}}
\left\{  \left(  \overline{b}\right)  _{i\in I}\right\}  $, $K_{\text{list}%
}(\overline{\boldsymbol{1}})=I$, see Remark \ref{Nota_basic_const},%
\begin{gather}
L_{0}\left(  \boldsymbol{s};I\right)  =\sum\limits_{\overline{\boldsymbol{a}%
}\in\mathcal{R}(I)\smallsetminus\left\{  \overline{\boldsymbol{1}}\right\}
}\left\vert \overline{A}(\overline{\boldsymbol{a}})\right\vert p^{-\left\vert
I\right\vert -\sum_{_{\left(  i,j\right)  \in K(\overline{\boldsymbol{a}})}%
}s_{ij}}L_{1}\left(  \boldsymbol{s};K_{\text{list}}(\overline{\boldsymbol{a}%
}),K(\overline{\boldsymbol{a}})\right) \label{Formula_2}\\
+\left(  p-1\right)  p^{-\left\vert I\right\vert -\sum_{\substack{2\leq
i<j\leq N-2 \\i,j\in I}}s_{ij}}L_{1}\left(  \boldsymbol{s};I,T_{I}\right)
+\left\vert \overline{\Delta}\left(  I\right)  \right\vert p^{-\left\vert
I\right\vert },\nonumber
\end{gather}
with $\left\vert K_{\text{list}}(\overline{\boldsymbol{a}})\right\vert \geq2
$, hence from (\ref{Formula_1})-(\ref{Formula_2}),
\begin{gather*}
\left(  1-p^{1-\left\vert I\right\vert -\sum_{\substack{2\leq i<j\leq N-2
\\i,j\in I}}s_{ij}}\right)  L_{1}\left(  \boldsymbol{s};I,T_{I}\right) \\
=\sum\limits_{\overline{\boldsymbol{a}}\in\mathcal{R}(I)\smallsetminus\left\{
\overline{\boldsymbol{1}}\right\}  }d_{\overline{\boldsymbol{a}}}\left(
\boldsymbol{s}\right)  L_{1}\left(  \boldsymbol{s};K_{\text{list}}%
(\overline{\boldsymbol{a}}),K(\overline{\boldsymbol{a}})\right) \\
+\sum_{_{_{\substack{J\subsetneqq I \\J\neq\varnothing}}}}c_{J}(\boldsymbol{s}%
)L_{1}\left(  \boldsymbol{s};I\smallsetminus J,T_{I\smallsetminus J}\right)
L_{0}\left(  \boldsymbol{s};J\right)  +\left\vert \overline{\Delta}\left(
I\right)  \right\vert p^{-\left\vert I\right\vert }.
\end{gather*}
This formula and Lemmas \ref{Lemma_L_0}-\ref{Lemma_F} give a recursive
algorithm for computing integrals $L_{1}\left(  \boldsymbol{s};I,T_{I}\right)
$, from which we get (\ref{Formula_L_N_1}).
\end{proof}

From Lemmas \ref{Lemma_L_0}-\ref{Lemma_L_1}, we obtain the following result:

\begin{corollary}
\label{Cor_Lemma_L_1}If $\left\vert I\right\vert \geq2$, then%
\[
L_{0}\left(  \boldsymbol{s};I\right)  =\frac{R_{I}\left(  \left\{  p^{-s_{ij}%
}\right\}  _{i,j\in I}\right)  }{\prod\limits_{J\in\mathcal{G}\left(
I\right)  }\left(  1-p^{-\left(  \left\vert J\right\vert -1+\sum
_{\substack{2\leq i<j\leq N-2 \\i,j\in J}}s_{ij}\right)  }\right)  ^{f_{J}%
}\prod\limits_{ij\in G_{I}}\left(  1-p^{-1-s_{ij}}\right)  ^{f_{ij}}},
\]
where $R_{I}\left(  \left\{  p^{-s_{ij}}\right\}  _{i,j\in I}\right)  $ is a
polynomial with rational coefficients in the variables $\left\{  p^{-s_{ij}%
}\right\}  _{i,j\in I}$, $\mathcal{G}\left(  I\right)  $ is a family of
non-empty subsets of $I$, with $I\in\mathcal{G}\left(  I\right)  $, $G_{I}$ is
a non-empty subset of $\left\{  2\leq i<j\leq N-2,i,j\in I\right\}  ,$ and the
$f_{J}$, $f_{ij}$ 's are positive integers.
\end{corollary}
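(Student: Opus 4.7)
The plan is to derive Corollary \ref{Cor_Lemma_L_1} directly from Lemma \ref{Lemma_L_0}, the Remark following Lemma \ref{Lemma_partition}, and Lemma \ref{Lemma_L_1}, essentially by substitution and bookkeeping of denominators.

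First, I would start from the formula
\[
L_{0}\left(  \boldsymbol{s};I\right)  =\sum_{\overline{\boldsymbol{a}}\in\mathcal{R}(I)}\left\vert \overline{A}(\overline{\boldsymbol{a}})\right\vert p^{-\left\vert I\right\vert -\sum_{(i,j)\in K(\overline{\boldsymbol{a}})}s_{ij}}L_{1}\left(  \boldsymbol{s};K_{\text{list}}(\overline{\boldsymbol{a}}),K(\overline{\boldsymbol{a}})\right)  +\left\vert \overline{\Delta}\left(  I\right)  \right\vert p^{-\left\vert I\right\vert }
\]
provided by Lemma \ref{Lemma_L_0}. For each $\overline{\boldsymbol{a}}\in\mathcal{R}(I)$, Lemma \ref{Lemma_partition} together with the Remark immediately following it decomposes the corresponding $L_{1}$ as a product
\[
L_{1}\left(  \boldsymbol{s};K_{\text{list}}(\overline{\boldsymbol{a}}),K(\overline{\boldsymbol{a}})\right)  =\prod_{(i,j)\in\mathcal{R}(\overline{\boldsymbol{a}})}L_{1}\left(  \boldsymbol{s};K_{\text{list}}((i,j),\overline{\boldsymbol{a}}),T_{K_{\text{list}}((i,j),\overline{\boldsymbol{a}})}\right)  ,
\]
where each factor has the shape $L_{1}(\boldsymbol{s};J,T_{J})$ for some $J\subseteq I$ with $\left\vert J\right\vert \geq2$.

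Next, I would invoke Lemma \ref{Lemma_L_1} on every factor. This gives, for each $\overline{\boldsymbol{a}}$, a rational function in the variables $p^{-s_{ij}}$ whose denominator is a product of factors of the form
\[
\left(  1-p^{-\left(  \left\vert J\right\vert -1+\sum_{\substack{2\leq i<j\leq N-2 \\i,j\in J}}s_{ij}\right)  }\right)  \text{ \ and \ }\left(  1-p^{-1-s_{ij}}\right)  ,
\]
with $J$ ranging over certain subsets of $K_{\text{list}}(\overline{\boldsymbol{a}})\subseteq I$ and $(i,j)$ ranging over pairs in $I$. Multiplying by $|\overline{A}(\overline{\boldsymbol{a}})|p^{-|I|-\sum s_{ij}}$ leaves the denominator unchanged, while the constant term $|\overline{\Delta}(I)|p^{-|I|}$ contributes a polynomial (no new denominator factors). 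Taking a common denominator of the resulting finite sum yields the claimed form, with $\mathcal{G}(I)$ taken to be the union of the families $\mathcal{F}(K_{\text{list}}((i,j),\overline{\boldsymbol{a}}))$ over all $\overline{\boldsymbol{a}}\in\mathcal{R}(I)$ and all $(i,j)\in\mathcal{R}(\overline{\boldsymbol{a}})$, and $G_{I}$ the analogous union of the sets $S_{K_{\text{list}}((i,j),\overline{\boldsymbol{a}})}$; the exponents $f_{J}$, $f_{ij}$ are the maxima of the corresponding $e_{J}$, $e_{ij}$ across the summands.

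The only slightly subtle point is to confirm that $I$ itself belongs to $\mathcal{G}(I)$, as required by the statement. This follows by considering the contribution of $\overline{\boldsymbol{a}}=\overline{\boldsymbol{1}}\in\mathcal{R}(I)$ (noting $\overline{A}(\overline{\boldsymbol{1}})\neq\emptyset$ and $K_{\text{list}}(\overline{\boldsymbol{1}})=I$, as in Remark \ref{Nota_basic_const}): for this $\overline{\boldsymbol{a}}$ there is a single block with $J=I$, and by Lemma \ref{Lemma_L_1} the denominator of $L_{1}(\boldsymbol{s};I,T_{I})$ already contains the factor indexed by $J=I$. Thus the chief obstacle is purely organizational, namely exhibiting the families $\mathcal{G}(I)$, $G_{I}$ and the exponents so that the right-hand side literally matches; no new analytic input is needed beyond Lemmas \ref{Lemma_L_0} and \ref{Lemma_L_1}.
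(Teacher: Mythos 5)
Your proposal is correct and follows essentially the same route the paper implies: the paper states the corollary as an immediate consequence of Lemmas \ref{Lemma_L_0}--\ref{Lemma_L_1} (using the remark after Lemma \ref{Lemma_partition} to factor $L_1(\boldsymbol{s};K_{\text{list}}(\overline{\boldsymbol{a}}),K(\overline{\boldsymbol{a}}))$ into blocks and then applying Lemma \ref{Lemma_L_1} to each block), which is exactly your substitution-and-bookkeeping argument, including the observation that the $\overline{\boldsymbol{a}}=\overline{\boldsymbol{1}}$ term supplies the factor indexed by $J=I$.
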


Given $I\subseteq T$, with $\left\vert I\right\vert \geq2$, and $K\subseteq
I$, with \ $\left\vert K\right\vert \geq1$, and $M\subseteq T_{I}$, with
$\left\vert M\right\vert \geq1$, we define%
\[
L_{2}\left(  \boldsymbol{s};I,K,M\right)  =%
{\displaystyle\int\limits_{\mathbb{Z}_{p}^{\left\vert I\right\vert }}}
{\displaystyle\prod\limits_{i\in K}}
\left\vert x_{i}\right\vert _{p}^{s_{ti}}%
{\displaystyle\prod\limits_{\left(  i,j\right)  \in M}}
\left\vert x_{i}-x_{j}\right\vert _{p}^{s_{ij}}%
{\displaystyle\prod\limits_{i\in I}}
dx_{i}\text{ }%
\]
for $\operatorname{Re}\left(  s_{ij}\right)  >0$ for any $ij$. If $\left\vert
M\right\vert =0$, then
\[
L_{2}\left(  \boldsymbol{s};I,K,M\right)  =\int_{\mathbb{Z}_{p}^{\left\vert
I\right\vert }}%
{\textstyle\prod\nolimits_{i\in K}}
\left\vert x_{i}\right\vert _{p}^{s_{ti}}%
{\textstyle\prod\nolimits_{i\in I}}
dx_{i}.
\]

\begin{lemma}
\label{Lema_L_2}Let $t\in\left\{  1,N-1\right\}  $. Then
\[
L_{2}\left(  \boldsymbol{s};I,K,T_{I}\right)  =%
{\displaystyle\int\limits_{\mathbb{Z}_{p}^{\left\vert I\right\vert }}}
{\displaystyle\prod\limits_{i\in K}}
\left\vert x_{i}\right\vert _{p}^{s_{ti}}%
{\displaystyle\prod\limits_{\substack{2\leq i<j\leq N-2 \\i,j\in I}}}
\left\vert x_{i}-x_{j}\right\vert _{p}^{s_{ij}}%
{\displaystyle\prod\limits_{i\in I}}
dx_{i}\text{ }%
\]
admits an analytic continuation as a rational function of the form%
\begin{equation}
L_{2}\left(  \boldsymbol{s};I,K,T_{I}\right)  =\frac{Q_{I,K}\left(  \left\{
p^{-s_{ij}}\right\}  _{i,j\in I},\left\{  p^{-s_{ti}}\right\}  _{t\in\left\{
1,N-1\right\}  ,i\in I}\right)  }{R_{0}(\boldsymbol{s};I,K)R_{1}%
(\boldsymbol{s};I,K)R_{2}(\boldsymbol{s};I,K)}, \label{Formula_L_2_N}%
\end{equation}
where%
\[
R_{0}(\boldsymbol{s};I,K)=\prod\limits_{J\in\mathcal{G}_{1}\left(  I\right)
}\left(  1-p^{-\left(  \left\vert J\right\vert -1+\sum_{\substack{2\leq
i<j\leq N-2 \\i,j\in J}}s_{ij}\right)  }\right)  ^{f_{J}}\prod\limits_{ij\in
S_{I}}\left(  1-p^{-1-s_{ij}}\right)  ^{g_{ij}},
\]%
\[
R_{1}(\boldsymbol{s};I,K)=\prod\limits_{i\in U_{K}}\left(  1-p^{-1-s_{ti}%
}\right)  ^{h_{i}},
\]%
\[
R_{2}(\boldsymbol{s};I,K)=\prod\limits_{\left(  J,R\right)  \in\mathcal{G}%
_{2}\left(  I\times I\right)  }\left(  1-p^{-\left\vert J\right\vert
-\sum_{i\in R}s_{ti}-\sum\nolimits_{2\leq i<j\leq N-2,\text{ }i,j\in J}s_{ij}%
}\right)  ,
\]
where $Q_{I,K}\left(  \left\{  p^{-s_{ij}}\right\}  _{i,j\in I},\left\{
p^{-s_{ti}}\right\}  _{t\in\left\{  1,N-1\right\}  ,i\in I}\right)  $ denotes
a polynomial with rational coefficients in the variables $\left\{  p^{-s_{ij}%
}\right\}  _{i,j\in I},\left\{  p^{-s_{ti}}\right\}  _{t\in\left\{
1,N-1\right\}  ,i\in I}$, $\mathcal{G}_{1}\left(  I\right)  $ is a non-empty
family of subsets of $I$, with $I\in\mathcal{G}_{1}\left(  I\right)  $,
$\mathcal{G}_{2}\left(  I\times I\right)  $ is a non-empty family of subsets
$J\times R$ of $I\times I$, with $R\subseteq J$ and $\left(  I,K\right)
\in\mathcal{G}_{2}\left(  I\times I\right)  $, $U_{K}$ is a non-empty subset
of $K $, $S_{I}$ is a non-empty subset of $\left\{  2\leq i<j\leq N-2,i,j\in
I\right\}  $, and the $f_{J}$'s, $g_{ij}$'s, and the $h_{i}$'s are positive integers.
\end{lemma}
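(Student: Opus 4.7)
The plan is to mimic the strategy of Lemma \ref{Lemma_L_1}, proceeding by induction on $|I|$ while carrying along the extra factors $|x_i|_p^{s_{ti}}$. The base case $|I|=2$ is handled by Lemma \ref{Lemma_F}: indeed, with $I=\{i_0,j_0\}$, $L_2(\boldsymbol{s};I,K,T_I)$ reduces to $Z(s_{ti_0},s_{tj_0},s_{i_0j_0})$, or to the specialization obtained by setting the relevant $s_{tj}$ variable to zero when $K\neq I$, whose explicit form already has a denominator of the claimed shape.

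For the inductive step, I partition
\[
\mathbb{Z}_p^{|I|} = (p\mathbb{Z}_p)^{|I|} \sqcup \bigsqcup_{\varnothing \neq J \subseteq I} S_J^{|I|},
\]
where $S_J^{|I|} = \{(x_i)_{i\in I} \in \mathbb{Z}_p^{|I|} : |x_i|_p = 1 \Leftrightarrow i \in J\}$. On the block $(p\mathbb{Z}_p)^{|I|}$ the rescaling $x_i = p y_i$ reproduces $L_2(\boldsymbol{s};I,K,T_I)$ multiplied by $p^{-|I| - \sum_{i \in K} s_{ti} - \sum_{(i,j) \in T_I} s_{ij}}$. Moving this self-similar contribution to the left-hand side produces the factor $1 - p^{-|I| - \sum_{i \in K} s_{ti} - \sum_{(i,j) \in T_I} s_{ij}}$, which is precisely the $(J,R)=(I,K)$ term of $R_2(\boldsymbol{s};I,K)$ advertised in the denominator.

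For the remaining blocks $S_J^{|I|}$ with $\varnothing \neq J \subsetneq I$, I use that $|x_i - x_j|_p = 1$ whenever $i$ and $j$ lie on opposite sides of $I = J \sqcup (I\setminus J)$, and $|x_i|_p = 1$ for $i \in K \cap J$. The integrand then separates into a piece on $(\mathbb{Z}_p^{\times})^{|J|}$ equal to $L_0(\boldsymbol{s};J)$, and a piece on $(p\mathbb{Z}_p)^{|I\setminus J|}$ which, after rescaling the $I\setminus J$ coordinates, becomes
\[
p^{-|I\setminus J| - \sum_{i\in K\cap(I\setminus J)} s_{ti} - \sum_{(i,j)\in T_{I\setminus J}} s_{ij}}\; L_2\bigl(\boldsymbol{s};\, I\setminus J,\, K\cap(I\setminus J),\, T_{I\setminus J}\bigr).
\]
The block $J=I$ contributes $L_0(\boldsymbol{s};I)$ directly. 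Solving for $L_2(\boldsymbol{s};I,K,T_I)$ then gives a linear recursion whose right-hand side involves only $L_0(\boldsymbol{s};J)$ for $J\subseteq I$ and smaller $L_2(\boldsymbol{s};I',K',T_{I'})$ with $|I'|<|I|$.

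By Corollary \ref{Cor_Lemma_L_1}, each $L_0(\boldsymbol{s};J)$ has rational form with denominators of type $R_0$; by the inductive hypothesis the smaller $L_2$ terms contribute denominators of types $R_0$, $R_1$, and $R_2$. When the recursion bottoms out at a singleton $\{i_0\}$ with $i_0\in K$, the univariate integral $\int_{\mathbb{Z}_p} |x|_p^{s_{ti_0}}\,dx = (1-p^{-1})/(1-p^{-1-s_{ti_0}})$ is the source of the $R_1$ factors. The main obstacle is the combinatorial bookkeeping: one must verify that the families $\mathcal{G}_1(I)$, $\mathcal{G}_2(I\times I)$, $U_K$ and $S_I$ accumulated by unwinding the recursion are exactly those described in the statement, that the element $(I,K)$ does enter $\mathcal{G}_2(I\times I)$ (from the very first self-similar step), and that no denominator falls outside the three advertised classes.
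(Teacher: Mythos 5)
Your proposal follows the same strategy as the paper: partition $\mathbb{Z}_p^{|I|}$ into $(p\mathbb{Z}_p)^{|I|}$ and the spheres $S_J^{|I|}$, peel off the self-similar factor $1-p^{-|I|-\sum_{i\in K}s_{ti}-\sum_{(i,j)\in T_I}s_{ij}}$ from the $(p\mathbb{Z}_p)^{|I|}$ block, and recurse on $|I|$ using the product structure $L_0(\boldsymbol{s};J)\cdot L_2(\boldsymbol{s};I\setminus J,K\setminus J,T_{I\setminus J})$ for $\varnothing\neq J\subsetneq I$. This is exactly what the paper does.

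One small omission: the recursion can produce terms $L_2(\boldsymbol{s};I',K',T_{I'})$ with $K'=K\setminus J=\varnothing$, which is outside the scope of the lemma you are inducting on (it assumes $|K|\geq 1$). In that case $L_2(\boldsymbol{s};I',\varnothing,T_{I'})=L_1(\boldsymbol{s};I',T_{I'})$ by Remark \ref{Nota_Lemma_L_2}, and you must invoke Lemma \ref{Lemma_L_1} rather than the inductive hypothesis to conclude that these contribute only $R_0$-type denominators. Once you add that, the bookkeeping closes. Your base-case appeal to Lemma \ref{Lemma_F} (with a variable set to zero when $K\subsetneq I$) is consistent with the advertised shape, though it is slightly redundant: the recursion itself already bottoms out at $|I'|=1$, and the singleton integral $\int_{\mathbb{Z}_p}|x|_p^{s_{ti_0}}dx=(1-p^{-1})/(1-p^{-1-s_{ti_0}})$ supplies the $R_1$ factors exactly as you note.
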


\begin{remark}
\label{Nota_Lemma_L_2}The integral $L_{2}\left(  \boldsymbol{s};I,K,M\right)
$ is also a multivariate $p$-adic local zeta function. If $\left\vert
I\right\vert \geq2$ and $\left\vert K\right\vert =0$, then $L_{2}\left(
\boldsymbol{s};I,K,M\right)  =L_{1}\left(  \boldsymbol{s};I,M\right)  $.
\end{remark}

\begin{proof}
We use the partition $\mathbb{Z}_{p}^{\left\vert I\right\vert }=(p\mathbb{Z}%
_{p})^{\left\vert I\right\vert }\sqcup S_{0}^{\left\vert I\right\vert }$ as in
the proof of Lemma \ref{Lemma_L_1} and a change of variables, to get%
\begin{align*}
L_{2}\left(  \boldsymbol{s};I,K,T_{I}\right)   &  =\frac{%
{\displaystyle\int\limits_{S_{0}^{\left\vert I\right\vert }}}
{\displaystyle\prod\limits_{i\in K}}
\left\vert x_{i}\right\vert _{p}^{s_{ti}}%
{\displaystyle\prod\limits_{2\leq i<j\leq N-2,i,j\in I}}
\left\vert x_{i}-x_{j}\right\vert _{p}^{s_{ij}}%
{\displaystyle\prod\limits_{i\in I}}
dx_{i}}{1-p^{-\left\vert I\right\vert -\sum_{i\in K}s_{ti}-\sum
\nolimits_{2\leq i<j\leq N-2,i,j\in I}s_{ij}}}\\
&  =:\frac{B_{0}\left(  \boldsymbol{s};I,K,T_{I}\right)  }{1-p^{-\left\vert
I\right\vert -\sum_{i\in K}s_{ti}-\sum\nolimits_{2\leq i<j\leq N-2,\text{
}i,j\in I}s_{ij}}}.
\end{align*}
We now use the partition $S_{0}^{\left\vert I\right\vert }=\sqcup_{J\subseteq
I,J\neq\varnothing}S_{J}^{\left\vert I\right\vert }$ to obtain
\[
B_{0}\left(  \boldsymbol{s};I,K,T_{I}\right)  =\sum_{_{J\subseteq
I,J\neq\varnothing}}B_{0,J}\left(  \boldsymbol{s}\right)  ,
\]
where
\[
B_{0,J}\left(  \boldsymbol{s}\right)  :=%
{\displaystyle\int\limits_{S_{J}^{\left\vert I\right\vert }}}
{\displaystyle\prod\limits_{i\in K}}
\left\vert x_{i}\right\vert _{p}^{s_{ti}}%
{\displaystyle\prod\limits_{\substack{2\leq i<j\leq N-2 \\i,j\in I}}}
\left\vert x_{i}-x_{j}\right\vert _{p}^{s_{ij}}%
{\displaystyle\prod\limits_{i\in I}}
dx_{i}.
\]
Consequently
\[
L_{2}\left(  \boldsymbol{s};I,K,T_{I}\right)  =\frac{B_{0,I}\left(
\boldsymbol{s}\right)  +\sum_{_{J\subsetneqq I,J\neq\varnothing}}%
B_{0,J}\left(  \boldsymbol{s}\right)  }{1-p^{-\left\vert I\right\vert
-\sum_{i\in K}s_{ti}-\sum\nolimits_{2\leq i<j\leq N-2,\text{ }i,j\in I}s_{ij}%
}}.
\]
On the other hand, $\left\vert x_{i}-x_{j}\right\vert _{p}^{s_{ij}}\mid
_{S_{J}^{\left\vert I\right\vert }}$ is given in (\ref{Calcul_abs}) and
\[%
{\textstyle\prod\nolimits_{i\in K}}
\left\vert x_{i}\right\vert _{p}^{s_{ti}}\mid_{S_{J}^{\left\vert I\right\vert
}}=%
{\textstyle\prod\nolimits_{i\in K}}
\left\vert x_{i}\right\vert _{p}^{s_{ti}}\mid_{\left(  p\mathbb{Z}_{p}\right)
^{\left\vert K\smallsetminus J\right\vert }}.
\]
Then $B_{0,I}\left(  \boldsymbol{s}\right)  =L_{0}\left(  \boldsymbol{s}%
;I\right)  $, and if $J\subsetneqq I$, $B_{0,J}\left(  \boldsymbol{s}\right)
$ equals
\[%
\begin{array}
[c]{c}%
\left\{
{\displaystyle\int\limits_{(p\mathbb{Z}_{p})^{\left\vert I\smallsetminus
J\right\vert }}}
{\displaystyle\prod\limits_{i\in K\smallsetminus J}}
\left\vert x_{i}\right\vert _{p}^{s_{ti}}%
{\displaystyle\prod\limits_{\substack{2\leq i<j\leq N-2 \\i,j\in
I\smallsetminus J}}}
\left\vert x_{i}-x_{j}\right\vert _{p}^{s_{ij}}%
{\displaystyle\prod\limits_{i\in I\smallsetminus J}}
dx_{i}\right\}  L_{0}\left(  \boldsymbol{s};J\right)  =\\
\\
p^{-\left\vert I\smallsetminus J\right\vert -\sum_{i\in K\smallsetminus
J}s_{ti}-\sum_{2\leq i<j\leq N-2,i,j\in I\smallsetminus J}s_{ij}}\times\\
\\
\left\{
{\displaystyle\int\limits_{\mathbb{Z}_{p}^{\left\vert I\smallsetminus
J\right\vert }}}
{\displaystyle\prod\limits_{i\in K\smallsetminus J}}
\left\vert x_{i}\right\vert _{p}^{s_{ti}}%
{\displaystyle\prod\limits_{\substack{2\leq i<j\leq N-2 \\i,j\in
I\smallsetminus J}}}
\left\vert x_{i}-x_{j}\right\vert _{p}^{s_{ij}}%
{\displaystyle\prod\limits_{i\in I\smallsetminus J}}
dx_{i}\right\}  L_{0}\left(  \boldsymbol{s};J\right)  =\\
\\
p^{-\left\vert I\smallsetminus J\right\vert -\sum_{i\in K\smallsetminus
J}s_{t}{}_{i}-\sum_{2\leq i<j\leq N-2,i,j\in I\smallsetminus J}s_{i}{}_{j}%
}L_{2}\left(  \boldsymbol{s};I\smallsetminus J,K\smallsetminus
J,T_{I\smallsetminus J}\right)  L_{0}\left(  \boldsymbol{s};J\right)  .
\end{array}
\]
Hence $\left(  1-p^{-\left\vert I\right\vert -\sum_{i\in K}s_{ti}%
-\sum\nolimits_{2\leq i<j\leq N-2,\text{ }i,j\in I}s_{ij}}\right)
L_{2}\left(  \boldsymbol{s};I,K,T_{I}\right)  $ equals%
\begin{gather}
L_{0}\left(  \boldsymbol{s};I\right)  +\label{Int_1}\\
\sum_{_{J\subsetneqq I,\text{ }J\neq\varnothing}}p^{-\left\vert
I\smallsetminus J\right\vert -\sum_{i\in K\smallsetminus J}s_{ti}-\sum_{2\leq
i<j\leq N-2,i,j\in I\smallsetminus J}s_{ij}}L_{2}\left(  \boldsymbol{s}%
;I\smallsetminus J,K\smallsetminus J,T_{I\smallsetminus J}\right)
L_{0}\left(  \boldsymbol{s};J\right)  .\nonumber
\end{gather}
By using that $\left\vert I\smallsetminus J\right\vert <\left\vert
I\right\vert $ if $J\subsetneqq I$, $J\neq\varnothing$, and that integrals
$L_{0}\left(  \boldsymbol{s};I\right)  $, $L_{0}\left(  \boldsymbol{s}%
;J\right)  $ can be computed effectively, see Corollary \ref{Cor_Lemma_L_1},
formula (\ref{Int_1}) gives a recursive algorithm for computing $L_{2}\left(
\boldsymbol{s};I,K,T_{I}\right)  $, by using it, we obtain
(\ref{Formula_L_2_N}). Notice the integrals of type $L_{2}\left(
\boldsymbol{s};I,K,T_{I}\right)  $, with $\left\vert I\right\vert =1$ and
$K=\left\{  i\right\}  $ contribute with terms of the form $\frac{1-p^{-1}%
}{1-p^{-1-s_{ti}}}$.
\end{proof}

\begin{lemma}
\label{Lemma_M_a_J}Given $J$ a non-empty subset of $T$, with $\left\vert
J\right\vert \geq2$, we define
\[
\boldsymbol{M}_{1}(\boldsymbol{s};J)=%
{\displaystyle\int\limits_{(\mathbb{Z}_{p}^{\times})^{\left\vert J\right\vert
}}}
{\displaystyle\prod\limits_{i\in J}}
\left\vert 1-x_{i}\right\vert _{p}^{s_{(N-1)i}}%
{\displaystyle\prod\limits_{\substack{2\leq i<j\leq N-2 \\i,j\in J}}}
\left\vert x_{i}-x_{j}\right\vert _{p}^{s_{ij}}%
{\displaystyle\prod\limits_{i\in J}}
dx_{i}%
\]
for $\operatorname{Re}\left(  s_{(N-1)i}\right)  >0$, $i\in J$, and
$\operatorname{Re}(s_{ij})>0$, for $i,j\in J$. Then, $\boldsymbol{M}%
_{1}(\boldsymbol{s};J)$\ admits an analytic continuation as a rational
function of the form%
\begin{equation}
\boldsymbol{M}_{1}(\boldsymbol{s};J)=\frac{Q_{J}\left(  \left\{  p^{-s_{ij}%
}\right\}  _{i,j\in J},\left\{  p^{-s_{(N-1)i}}\right\}  _{i\in J}\right)
}{\prod\limits_{i=1}^{3}U_{i}(\boldsymbol{s};J)}, \label{Formula_M_a_J}%
\end{equation}
where%
\[
U_{1}(\boldsymbol{s};J)=\prod\limits_{M\in\mathcal{F}_{1}\left(  J\right)
}\left(  1-p^{-\left(  \left\vert M\right\vert -1+\sum_{\substack{2\leq
i<j\leq N-2 \\i,j\in M}}s_{ij}\right)  }\right)  ^{e_{M}}\prod\limits_{ij\in
S_{J}^{\left(  1\right)  }}\left(  1-p^{-1-s_{ij}}\right)  ^{f_{ij}},
\]%
\[
U_{2}(\boldsymbol{s};J)=\prod\limits_{\left(  M,S\right)  \in\mathcal{F}%
_{2}\left(  J\right)  }\left(  1-p^{-\left\vert M\right\vert -\sum_{i\in
S}s_{(N-1)i}-\sum\nolimits_{2\leq i<j\leq N-2,\text{ }i,j\in M}s_{ij}}\right)
^{g_{\left(  M,S\right)  }},
\]
and
\[
U_{3}(\boldsymbol{s};J)=\prod\limits_{i\in S_{J}^{\left(  2\right)  }}\left(
1-p^{-1-s_{(N-1)i}}\right)  ^{h_{i}},
\]
where $\mathcal{F}_{1}\left(  J\right)  $ is a non-empty family of subsets of
$J$, with $J\in\mathcal{F}_{1}\left(  J\right)  ,$ $\mathcal{F}_{2}\left(
J\right)  $ is a non-empty family of subsets $M\times S\subseteq$ $J\times J$,
with $S\subseteq M$, $S_{J}^{\left(  1\right)  }$and $S_{J}^{\left(  2\right)
}$ are non-empty subsets of $T$, and the $e_{M}$'s, $f_{ij}$'s, $g_{\left(
M,S\right)  }$'s and the $h_{i}$'s are positive integers.
\end{lemma}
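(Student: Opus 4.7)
The plan is to follow exactly the strategy used in Lemmas \ref{Lemma_L_0} and \ref{Lemma_L_1}: decompose $(\mathbb{Z}_{p}^{\times})^{\left\vert J\right\vert}$ into residue classes modulo $p$ and, on each class, disentangle the contributions coming from the factors $\left\vert 1-x_{i}\right\vert_{p}^{s_{(N-1)i}}$ from those coming from the difference factors $\left\vert x_{i}-x_{j}\right\vert_{p}^{s_{ij}}$. Concretely, I would write
\[
\boldsymbol{M}_{1}(\boldsymbol{s};J) = \sum_{\overline{\boldsymbol{a}} \in (\mathbb{F}_{p}^{\times})^{\left\vert J\right\vert}} \, \int\limits_{\boldsymbol{a} + (p\mathbb{Z}_{p})^{\left\vert J\right\vert}} \prod_{i \in J} \left\vert 1-x_{i}\right\vert_{p}^{s_{(N-1)i}} \prod_{\substack{2 \leq i < j \leq N-2 \\ i,j \in J}} \left\vert x_{i}-x_{j}\right\vert_{p}^{s_{ij}} \prod_{i \in J} dx_{i},
\]
and for each residue pattern $\overline{\boldsymbol{a}}$ introduce the partition $J = J_{1}(\overline{\boldsymbol{a}}) \sqcup J_{2}(\overline{\boldsymbol{a}})$ with $J_{1}(\overline{\boldsymbol{a}}) := \{ i \in J : \overline{a}_{i} = \overline{1}\}$. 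On the coset indexed by $\overline{\boldsymbol{a}}$ one has $|1-x_{j}|_{p} = 1$ for $j \in J_{2}$ and $|x_{i}-x_{j}|_{p} = 1$ whenever $i \in J_{1}$ and $j \in J_{2}$, so Fubini's theorem factors the integral as a product of a $J_{1}$-integral and a $J_{2}$-integral.

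The $J_{1}$-integral, after the analytic change of variables $x_{i} = 1 + p z_{i}$ with $i \in J_{1}$, becomes (with $t = N-1$ in the definition of $L_{2}$)
\[
p^{-|J_{1}| - \sum_{i \in J_{1}} s_{(N-1)i} - \sum_{\substack{2 \leq i < j \leq N-2 \\ i,j \in J_{1}}} s_{ij}} \; L_{2}\bigl(\boldsymbol{s}; J_{1}, J_{1}, T_{J_{1}}\bigr),
\]
whose analytic continuation is given by Lemma \ref{Lema_L_2} as a rational function whose denominator is a product of factors of the three types $U_{1}$, $U_{2}$, and $U_{3}$ appearing in (\ref{Formula_M_a_J}). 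The $J_{2}$-integral contains only difference factors in the variables indexed by $J_{2}$; performing one further residue-class decomposition of the type used in Lemma \ref{Lemma_L_0}, and then rescaling within each cluster of indices of $J_{2}$ sharing the same residue, expresses it as a sum of products of $L_{1}$-type integrals attached to those clusters. By Lemma \ref{Lemma_L_1} and Corollary \ref{Cor_Lemma_L_1}, each such factor is a rational function whose denominator contributes only factors of the type $U_{1}$.

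Summing over $\overline{\boldsymbol{a}} \in (\mathbb{F}_{p}^{\times})^{\left\vert J\right\vert}$ and placing everything over a common denominator then yields the announced rational form (\ref{Formula_M_a_J}), while the analytic continuation assertion follows by the principle of analytic continuation, since all the integrals involved are jointly holomorphic on the nonempty open subset of $\mathbb{C}^{D}$ cut out by $\operatorname{Re}(s_{(N-1)i}) > 0$ for $i \in J$ and $\operatorname{Re}(s_{ij}) > 0$ for $i,j \in J$. The main obstacle is the combinatorial bookkeeping: one must check that every denominator factor produced by the recursion falls into one of the three prescribed families and that no spurious factor appears. In particular, the $U_{2}$-factors, which are the only denominator factors coupling variables of the form $s_{(N-1)i}$ with the difference variables $s_{ij}$, must be traced back exclusively to the $R_{2}$-denominator of $L_{2}\bigl(\boldsymbol{s};J_{1},J_{1},T_{J_{1}}\bigr)$, and hence are indexed by pairs $(M,S)$ with $S \subseteq M \subseteq J_{1} \subseteq J$, matching the description in the statement; likewise the linear factors $(1 - p^{-1-s_{(N-1)i}})$ of $U_{3}$ must be seen to come solely from the $R_{1}$-denominator of the same $L_{2}$ block with $i \in J_{1}$.
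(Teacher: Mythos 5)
Your proposal is correct, and it follows the same broad mechanism as the paper (residue-class decomposition modulo $p$, change of variables, reduction to the $L_1$/$L_2$ integrals, recursion), but the combinatorial organization is genuinely different and, I would say, cleaner. The paper's proof first defines a three-parameter equivalence relation on residue tuples $\overline{\boldsymbol{a}}\in(\mathbb{F}_p^\times)^{|J|}$ tracking simultaneously $K(\overline{\boldsymbol{a}})$, $K^{(1)}(\overline{\boldsymbol{a}})$ (pairs with both residues $\overline{1}$) and $K^{(2)}(\overline{\boldsymbol{a}})$ (isolated indices with residue $\overline{1}$), decomposes $\boldsymbol{M}_1 = \boldsymbol{M}_{11} + \boldsymbol{M}_{12}$, reduces each class in $\boldsymbol{M}_{11}$ to a single call $L_2(\boldsymbol{s};K_{\text{list}}(\overline{\boldsymbol{a}}),K_{\text{list}}^{(1)}(\overline{\boldsymbol{a}}),K(\overline{\boldsymbol{a}}))$ together with an elementary $K^{(2)}$-integral, and then peels off the non-unit clusters from that $L_2$ by the partition lemma. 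Your proposal instead fixes, coset by coset, the set $J_1(\overline{\boldsymbol{a}})=\{i:\overline{a}_i=\overline{1}\}$ and observes that the cross-factors $|1-x_j|_p$ ($j\in J_2$) and $|x_i-x_j|_p$ ($i\in J_1$, $j\in J_2$) are identically $1$ on the coset, so Fubini splits the integral into a $J_1$-block and a $J_2$-block \emph{before} any appeal to $L_2$. The $J_1$-block after the shift $x_i=1+pz_i$ is a single $L_2(\boldsymbol{s};J_1,J_1,T_{J_1})$ (with $t=N-1$), which is where all $U_2$ and $U_3$ denominator factors must come from, exactly as you note; the $J_2$-block is a restricted $L_0$-type sum (residues ranging over $\mathbb{F}_p^\times\smallsetminus\{\overline{1}\}$, so the combinatorial multiplicities differ slightly from $L_0(\boldsymbol{s};J_2)$, but the recursion to $L_1$-type integrals and the $U_1$-shaped denominators are unaffected). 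Your route avoids the paper's three-fold equivalence relation and the post-hoc factorization of a large $L_2$ by the partition lemma, at the cost of having to carry the restricted-residue $L_0$-sum, which is a minor variant not formally stated in the paper. One small point worth making explicit in a final write-up: the mandatory factor with $J\in\mathcal{F}_1(J)$ in $U_1$ arises from the $J_1=J$ term, where $L_2(\boldsymbol{s};J,J,T_J)$ contributes the $R_0$-factor indexed by $J$; similarly the non-emptiness of $S_J^{(1)}$, $S_J^{(2)}$, and $\mathcal{F}_2(J)$ is witnessed by that same term. With that bookkeeping filled in, your argument proves the lemma.
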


\begin{remark}
\label{Nota_Lemma_M_a_J}If $\left\vert J\right\vert =1$, then $\boldsymbol{M}%
_{1}(\boldsymbol{s};J)=p^{-1}\left(  \frac{1-p^{-1}}{1-p^{-1-s_{\left(
N-1\right)  i}}}+p-2\right)  $.
\end{remark}

\begin{proof}
To compute $\boldsymbol{M}_{1}(\boldsymbol{s};J)$, we proceed as follows. We
set%
\[
T_{J}=\left\{  \left(  i,j\right)  \in T\times T;2\leq i<j\leq N-2,i,j\in
J\right\}
\]
as before, and for $\overline{\boldsymbol{a}}=\left(  \overline{a}_{i}\right)
_{i\in J}\in(\mathbb{F}_{p}^{\mathbb{\times}})^{\left\vert J\right\vert
}\smallsetminus\overline{\Pi}\left(  J\right)  $, with
\[
\overline{\Pi}\left(  J\right)  :=\left\{  \overline{\boldsymbol{a}}%
\in(\mathbb{F}_{p}^{\mathbb{\times}})^{\left\vert J\right\vert };\overline
{a}_{i}\neq\overline{a}_{j}\text{ if }i\neq j\text{, for }i\text{, }j\in
J\text{ and }\overline{a}_{s}\neq1\text{ for any }s\in J\right\}  ,
\]
we define%
\[
K(\overline{\boldsymbol{a}})=\left\{  \left(  i,j\right)  \in T_{J}%
;\overline{a}_{i}=\overline{a}_{j}\right\}  \text{, \ }K^{\left(  1\right)
}(\overline{\boldsymbol{a}})=\left\{  \left(  i,j\right)  \in T_{J}%
;\overline{a}_{i}=\overline{a}_{j}=1\right\}  ,
\]
and
\[
K^{\left(  2\right)  }(\overline{\boldsymbol{a}})=\left\{  i\in J;\overline
{a}_{i}=1\text{ and }\overline{a}_{i}\neq\overline{a}_{s}\text{ \ for any
}\left(  i,s\right)  \in T_{J}\right\}  .
\]
Notice that $K^{\left(  1\right)  }(\overline{\boldsymbol{a}})\subseteq
K(\overline{\boldsymbol{a}})$ and $K^{\left(  2\right)  }(\overline
{\boldsymbol{a}})\cap K_{\text{list}}(\overline{\boldsymbol{a}})=\varnothing.$
Now, we introduce on $(\mathbb{F}_{p}^{\mathbb{\times}})^{\left\vert
J\right\vert }\smallsetminus\overline{\Pi}\left(  J\right)  $, the following
equivalence relation:%
\[
\overline{\boldsymbol{a}}\sim\overline{\boldsymbol{b}}\text{ }\Leftrightarrow
K(\overline{\boldsymbol{a}})=K(\overline{\boldsymbol{b}})\text{ and
}K^{\left(  1\right)  }(\overline{\boldsymbol{a}})=K^{\left(  1\right)
}(\overline{\boldsymbol{b}})\text{ and }K^{\left(  2\right)  }(\overline
{\boldsymbol{a}})=K^{\left(  2\right)  }(\overline{\boldsymbol{b}}).
\]
We denote by $\overline{A}(\overline{\boldsymbol{a}})=\left\{  \overline
{\boldsymbol{b}}\in(\mathbb{F}_{p}^{\mathbb{\times}})^{\left\vert J\right\vert
}\smallsetminus\overline{\Pi}\left(  J\right)  ;\overline{\boldsymbol{a}}%
\sim\overline{\boldsymbol{b}}\right\}  $, the equivalence class defined by
$\overline{\boldsymbol{a}}\in(\mathbb{F}_{p}^{\mathbb{\times}})^{\left\vert
J\right\vert }\smallsetminus\overline{\Pi}\left(  J\right)  $. By taking a
unique representative in each equivalence class, we obtain $\mathcal{R}%
(J)\subset(\mathbb{F}_{p}^{\mathbb{\times}})^{\left\vert J\right\vert
}\smallsetminus\overline{\Pi}\left(  J\right)  $ such that
\begin{equation}
(\mathbb{F}_{p}^{\mathbb{\times}})^{\left\vert J\right\vert }=\bigsqcup
\limits_{\overline{\boldsymbol{a}}\in\mathcal{R}(J)}\overline{A}%
(\overline{\boldsymbol{a}})\bigsqcup\overline{\Pi}\left(  J\right)  .
\label{partition}%
\end{equation}
Given a subset $K\subseteq T_{J}$ with $K=\left\{  \left(  i_{1},j_{1}\right)
,\ldots,\left(  i_{m},j_{m}\right)  \right\}  $, we define $K_{\text{list}%
}=\left\{  i_{1},j_{1},\ldots,i_{m},j_{m}\right\}  \subseteq J$ as before.
With this notation, $\boldsymbol{M}_{1}(\boldsymbol{s};J)$\ equals
\begin{align}
&  \sum\limits_{\overline{\boldsymbol{a}}\in\mathcal{R}(J)}\text{ }%
\sum\limits_{\overline{\boldsymbol{b}}\in\overline{A}(\overline{\boldsymbol{a}%
})}\text{ }%
{\displaystyle\int\limits_{\boldsymbol{b}+(p\mathbb{Z}_{p})^{\left\vert
J\right\vert }}}
\text{ }%
{\displaystyle\prod\limits_{i\in J}}
\left\vert 1-x_{i}\right\vert _{p}^{s_{(N-1)i}}%
{\displaystyle\prod\limits_{\substack{2\leq i<j\leq N-2\\i,j\in J}}}
\left\vert x_{i}-x_{j}\right\vert _{p}^{s_{ij}}%
{\displaystyle\prod\limits_{i\in J}}
dx_{i}\label{Formula_M_1}\\
&  +\sum\limits_{\overline{\boldsymbol{b}}\in\overline{\Pi}\left(  J\right)
}\text{ }%
{\displaystyle\int\limits_{\boldsymbol{b}+(p\mathbb{Z}_{p})^{\left\vert
J\right\vert }}}
\text{ }%
{\displaystyle\prod\limits_{i\in J}}
\left\vert 1-x_{i}\right\vert _{p}^{s_{(N-1)i}}%
{\displaystyle\prod\limits_{\substack{2\leq i<j\leq N-2\\i,j\in J}}}
\left\vert x_{i}-x_{j}\right\vert _{p}^{s_{ij}}%
{\displaystyle\prod\limits_{i\in J}}
dx_{i}\nonumber\\
&  :=\boldsymbol{M}_{11}(\boldsymbol{s};J)+\boldsymbol{M}_{12}(\boldsymbol{s}%
;J).\nonumber
\end{align}
We now use that for each $\overline{\boldsymbol{a}}\in(\mathbb{F}%
_{p}^{\mathbb{\times}})^{\left\vert J\right\vert }\smallsetminus\overline{\Pi
}\left(  J\right)  $,
\[
T_{J}=K(\overline{\boldsymbol{a}})%
{\textstyle\bigsqcup}
\left\{  \left(  i,j\right)  \in T_{J};\overline{a}_{i}\neq\overline{a}%
_{j}\right\}
\]
and%
\[
J=K_{\text{list}}^{\left(  1\right)  }(\overline{\boldsymbol{a}})%
{\textstyle\bigsqcup}
K^{\left(  2\right)  }(\overline{\boldsymbol{a}})%
{\textstyle\bigsqcup}
\left\{  i\in J;\overline{a}_{i}\neq\overline{1}\right\}  ,
\]
to obtain
\[%
{\displaystyle\prod\limits_{i\in J}}
\left\vert 1-x_{i}\right\vert _{p}^{s_{(N-1)i}}=%
{\displaystyle\prod\limits_{i\in K_{\text{list}}^{\left(  1\right)
}(\overline{\boldsymbol{a}})}}
\left\vert 1-x_{i}\right\vert _{p}^{s_{(N-1)i}}%
{\displaystyle\prod\limits_{i\in K^{\left(  2\right)  }(\overline
{\boldsymbol{a}})}}
\left\vert 1-x_{i}\right\vert _{p}^{s_{(N-1)i}}%
\]
on $\boldsymbol{b}+(p\mathbb{Z}_{p})^{\left\vert J\right\vert }$, and
\[%
{\displaystyle\prod\limits_{\substack{2\leq i<j\leq N-2\\i,j\in J}}}
\left\vert x_{i}-x_{j}\right\vert _{p}^{s_{ij}}=%
{\displaystyle\prod\limits_{\left(  i,j\right)  \in K(\overline{\boldsymbol{a}%
})}}
\left\vert x_{i}-x_{j}\right\vert _{p}^{s_{ij}}%
\]
on $\boldsymbol{b}+(p\mathbb{Z}_{p})^{\left\vert J\right\vert }$. With
$J(\overline{\boldsymbol{a}}):=K^{\left(  2\right)  }(\overline{\boldsymbol{a}%
})\bigsqcup K_{\text{list}}(\overline{\boldsymbol{a}})$, we have
\begin{gather}
\boldsymbol{M}_{11}(\boldsymbol{s};J)=\sum\limits_{\overline{\boldsymbol{a}%
}\in\mathcal{R}(J)}\left\vert \overline{A}(\overline{\boldsymbol{a}%
})\right\vert p^{-\left\vert J\right\vert -\sum_{i\in K_{\text{list}}^{\left(
1\right)  }(\overline{\boldsymbol{a}})\sqcup K^{\left(  2\right)  }%
(\overline{\boldsymbol{a}})}s_{\left(  N-1\right)  i}-\sum_{\left(
i,j\right)  \in K(\overline{\boldsymbol{a}})}s_{ij}}\times\label{Formula_M_2}%
\\%
{\displaystyle\int\limits_{(\mathbb{Z}_{p})^{\left\vert J(\overline
{\boldsymbol{a}})\right\vert }}}
{\displaystyle\prod\limits_{i\in K_{\text{list}}^{\left(  1\right)
}(\overline{\boldsymbol{a}})\sqcup K^{\left(  2\right)  }(\overline
{\boldsymbol{a}})}}
\left\vert x_{i}\right\vert _{p}^{s_{(N-1)i}}%
{\displaystyle\prod\limits_{\left(  i,j\right)  \in K(\overline{\boldsymbol{a}%
})}}
\left\vert x_{i}-x_{j}\right\vert _{p}^{s_{ij}}%
{\displaystyle\prod\limits_{i\in J(\overline{\boldsymbol{a}})}}
dx_{i}\nonumber\\
=\sum\limits_{\overline{\boldsymbol{a}}\in\mathcal{R}(J)}\left\vert
\overline{A}(\overline{\boldsymbol{a}})\right\vert p^{-\left\vert J\right\vert
-\sum_{i\in K_{_{\text{list}}}^{\left(  1\right)  }(\overline{\boldsymbol{a}%
})\sqcup K^{\left(  2\right)  }(\overline{\boldsymbol{a}})}s_{\left(
N-1\right)  i}-\sum_{\left(  i,j\right)  \in K(\overline{\boldsymbol{a}}%
)}s_{ij}}\times\nonumber\\
\left\{
{\displaystyle\int\limits_{(\mathbb{Z}_{p})^{\left\vert K^{\left(  2\right)
}(\overline{\boldsymbol{a}})\right\vert }}}
{\displaystyle\prod\limits_{i\in K^{\left(  2\right)  }(\overline
{\boldsymbol{a}})}}
\left\vert x_{i}\right\vert _{p}^{s_{(N-1)i}}%
{\displaystyle\prod\limits_{i\in K^{\left(  2\right)  }(\overline
{\boldsymbol{a}})}}
dx_{i}\right\}  L_{2}\left(  \boldsymbol{s};K_{\text{list}}(\overline
{\boldsymbol{a}}),K_{\text{list}}^{\left(  1\right)  }\left(  \overline
{\boldsymbol{a}}\right)  ,K(\overline{\boldsymbol{a}})\right)  .\nonumber
\end{gather}
Now, by using the partition of $K(\overline{\boldsymbol{a}})$ given in Lemma
\ref{Lemma_partition}, we obtain
\begin{gather}
L_{2}\left(  \boldsymbol{s};K_{\text{list}}(\overline{\boldsymbol{a}%
}),K_{\text{list}}^{\left(  1\right)  }\left(  \overline{\boldsymbol{a}%
}\right)  ,K(\overline{\boldsymbol{a}})\right)  =L_{2}\left(  \boldsymbol{s}%
;K_{\text{list}}^{\left(  1\right)  }(\overline{\boldsymbol{a}}%
),K_{\text{list}}^{\left(  1\right)  }\left(  \overline{\boldsymbol{a}%
}\right)  ,T_{K_{list}^{\left(  1\right)  }(\overline{\boldsymbol{a}})}\right)
\label{Formula_L_2}\\
\times%
{\textstyle\prod\limits_{\left(  i,j\right)  \in\mathcal{R}\left(
\overline{\boldsymbol{a}}\right)  \backslash K^{\left(  1\right)  }%
(\overline{\boldsymbol{a}})}}
L_{1}\left(  \boldsymbol{s};K_{\text{list}}(\left(  i,j\right)  ,\overline
{\boldsymbol{a}}),T_{K_{list}(\left(  i,j\right)  ,\overline{\boldsymbol{a}}%
)}\right) \nonumber
\end{gather}
with the convention that $L_{2}\left(  \boldsymbol{s},\varnothing
,\varnothing,\varnothing\right)  :=1$. Finally,%
\begin{equation}
\boldsymbol{M}_{12}(\boldsymbol{s};J)=\sum\limits_{\overline{\boldsymbol{b}%
}\in\overline{\Pi}\left(  J\right)  }%
{\displaystyle\int\limits_{\boldsymbol{b}+(p\mathbb{Z}_{p})^{\left\vert
J\right\vert }}}
{\displaystyle\prod\limits_{i\in J}}
dx_{i}=p^{-\left\vert J\right\vert }\left\vert \overline{\Pi}\left(  J\right)
\right\vert . \label{Formula_M_3}%
\end{equation}
Hence, formula (\ref{Formula_M_a_J}) follows from (\ref{Formula_M_1}%
)-(\ref{Formula_M_3}) by using Lemmas \ref{Lemma_L_1}-\ref{Lema_L_2} and
Remark \ref{Nota_Lemma_L_2}.
\end{proof}

\subsection{Computation of $Z_{1}\left(  \boldsymbol{s};I\right)  $}

\begin{proposition}
\label{Proposition_1}Let $I$ be a non-empty subset of $T$. Then, the integral%
\[
Z_{1}\left(  \boldsymbol{s};I\right)  =\left\{
\begin{array}
[c]{ccc}%
{\displaystyle\int\limits_{\mathbb{Z}_{p}^{\left\vert I\right\vert }}}
\frac{%
{\displaystyle\prod\limits_{\substack{2\leq i<j\leq N-2\\i,j\in I}}}
\left\vert x_{i}-x_{j}\right\vert _{p}^{s_{ij}}}{%
{\displaystyle\prod\limits_{i\in I}}
\left\vert x_{i}\right\vert _{p}^{2+s_{1i}+s_{\left(  N-1\right)  i}%
+\sum_{2\leq j\leq N-2,j\neq i}s_{ij}}}%
{\displaystyle\prod\limits_{i\in I}}
dx_{i} & \text{if} & \left\vert I\right\vert \geq2\\
&  & \\%
{\displaystyle\int\limits_{\mathbb{Z}_{p}}}
\frac{1}{\left\vert x_{i}\right\vert _{p}^{2+s_{1i}+s_{\left(  N-1\right)
i}+\sum_{2\leq j\leq N-2,j\neq i}s_{ij}}}dx_{i} & \text{if} & \left\vert
I\right\vert =1
\end{array}
\right.
\]
converges on the set
\begin{gather*}
\left\{  \left(  s_{ij}\right)  \in\mathbb{C}^{D};\operatorname{Re}\left(
s_{ij}\right)  >-1\text{ for }2\leq i<j\leq N-2,i,j\in I\right\}  \text{ }%
\cap\\
\text{ }\left\{  \left(  s_{ij}\right)  \in\mathbb{C}^{D};1+\operatorname{Re}%
(s_{1i}+s_{\left(  N-1\right)  i})+\sum_{2\leq j\leq N-2,j\neq i}%
\operatorname{Re}(s_{ij})<0\text{ for }i\in I\right\}  ,
\end{gather*}
which is an open and connected subset of $\mathbb{C}^{D}$. In addition,
$Z_{1}\left(  \boldsymbol{s};I\right)  $ admits an analytic continuation to
$\mathbb{C}^{D}$ as a rational function of the form
\begin{equation}
Z_{1}\left(  \boldsymbol{s};I\right)  =\frac{Q_{I,1}(\left\{  p^{-s_{ij}%
};i,j\in\left\{  1,\ldots,N-1\right\}  \right\}  )}{S_{1}(\boldsymbol{s}%
;I)S_{2}(\boldsymbol{s};I)S_{3}(\boldsymbol{s};I)S_{4}(\boldsymbol{s};I)},
\label{zeta_L_1}%
\end{equation}
where $Q_{I,1}(\left\{  p^{-s_{ij}};i,j\in\left\{  1,\ldots,N-1\right\}
\right\}  )$ denotes a polynomial with rational coefficients in the variables
$p^{-s_{ij}}$, $i$, $j\in\left\{  1,\ldots,N-1\right\}  $,
\[
S_{1}(\boldsymbol{s};I)=%
{\displaystyle\prod\limits_{J\in\mathcal{H}_{1}(I)}}
\left(  1-p^{\left\vert J\right\vert +\sum_{i\in J}(s_{1}{}_{i}+s_{\left(
N-1\right)  i})+\sum_{\substack{2\leq i<j\leq N-2\\i\in J}}s_{ij}%
+\sum_{\substack{2\leq i<j\leq N-2\\i\in T\smallsetminus J,j\in J}}s_{ij}%
}\right)  ,
\]
where $\mathcal{H}_{1}(I)$ is a family of non-empty subsets of $I$, with
$I\in\mathcal{H}_{1}(I)$,
\[
S_{2}(\boldsymbol{s};I):=%
{\displaystyle\prod\limits_{\substack{J\subseteq I\\J\neq\varnothing}}}
\prod\limits_{K\in\mathcal{H}_{2}\left(  J\right)  }\left(  1-p^{-\left(
\left\vert K\right\vert -1+\sum_{\substack{2\leq i<j\leq N-2\\i,j\in K}%
}s_{ij}\right)  }\right)  ^{e_{K}},
\]
where $\mathcal{H}_{2}\left(  J\right)  $ is a family of non-empty subsets of
$J$, with $J\in\mathcal{H}_{2}\left(  J\right)  $, and the $e_{K}$'s are
positive integers,%
\[
S_{3}(\boldsymbol{s};I):=%
{\displaystyle\prod\limits_{\substack{J\subseteq I\\J\neq\varnothing}}}
\prod\limits_{ij\in G_{J}^{\left(  0\right)  }}\left(  1-p^{-1-s_{ij}}\right)
,
\]
where $G_{J}^{\left(  0\right)  }$ is a non-empty subset $\left\{  2\leq
i<j\leq N-2,i,j\in J\right\}  $,%
\[
S_{4}(\boldsymbol{s};I):=\prod\limits_{i\in G_{I}^{\left(  1\right)  }}\left(
1-p^{1+s_{1i}+s_{\left(  N-1\right)  i}+\sum_{2\leq j\leq N-2,j\neq i}s_{ij}%
}\right)  ,
\]
where $G_{I}^{\left(  1\right)  }$ is a non-empty subset $\left\{  2\leq
i<j\leq N-2,i,j\in I\right\}  $.
\end{proposition}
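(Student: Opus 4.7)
The plan is to prove Proposition \ref{Proposition_1} by strong induction on $\left\vert I\right\vert$, adapting the strategy used in Lemmas \ref{Lemma_L_1} and \ref{Lema_L_2}: partition $\mathbb{Z}_{p}^{\left\vert I\right\vert }$ according to which coordinates lie in $\mathbb{Z}_{p}^{\times}$ versus $p\mathbb{Z}_{p}$, use the ultrametric inequality to factor the integrand on each stratum, and then rescale the all-small-coordinate stratum back to $Z_{1}(\boldsymbol{s};I)$ itself so as to extract a functional equation that solves for $Z_{1}(\boldsymbol{s};I)$.

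For the base case $\left\vert I\right\vert =1$, the integral is $\int_{\mathbb{Z}_{p}}\left\vert x\right\vert _{p}^{-(2+\alpha_{i})}\,dx$ with $\alpha_{i}:=s_{1i}+s_{(N-1)i}+\sum_{j\in T,\,j\neq i}s_{ij}$, and decomposing $\mathbb{Z}_{p}=\bigsqcup_{m\geq 0}p^{m}\mathbb{Z}_{p}^{\times}$ yields $(1-p^{-1})/(1-p^{1+\alpha_{i}})$, convergent exactly when $1+\operatorname{Re}(\alpha_{i})<0$ and producing the $S_{4}$-type factor demanded by (\ref{zeta_L_1}).

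For the inductive step $\left\vert I\right\vert \geq 2$, I would decompose $\mathbb{Z}_{p}^{\left\vert I\right\vert }=(p\mathbb{Z}_{p})^{\left\vert I\right\vert }\sqcup\bigsqcup_{\varnothing\neq J\subseteq I}S_{J}^{\left\vert I\right\vert }$. On $S_{J}^{\left\vert I\right\vert }$ the ultrametric gives $\left\vert x_{i}-x_{j}\right\vert _{p}=1$ whenever $i\in J$ and $j\in I\smallsetminus J$, while $\left\vert x_{i}\right\vert _{p}^{-(2+\alpha_{i})}=1$ for $i\in J$, so the stratum integral factors as $L_{0}(\boldsymbol{s};J)$ times an integral over $(p\mathbb{Z}_{p})^{\left\vert I\smallsetminus J\right\vert }$; the latter, after the change of variable $x_{i}=py_{i}$ for $i\in I\smallsetminus J$, equals $p^{\beta_{I\smallsetminus J}}Z_{1}(\boldsymbol{s};I\smallsetminus J)$, where $\beta_{H}:=\left\vert H\right\vert +\sum_{i\in H}\alpha_{i}-\sum_{i<j,\,i,j\in H}s_{ij}$. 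The remaining piece over $(p\mathbb{Z}_{p})^{\left\vert I\right\vert }$ reproduces, under the same scaling, $p^{\beta_{I}}Z_{1}(\boldsymbol{s};I)$; transposing this term produces the functional equation
\begin{equation*}
\left(1-p^{\beta_{I}}\right)Z_{1}(\boldsymbol{s};I)=L_{0}(\boldsymbol{s};I)+\sum_{\varnothing\neq J\subsetneq I}p^{\beta_{I\smallsetminus J}}L_{0}(\boldsymbol{s};J)\,Z_{1}(\boldsymbol{s};I\smallsetminus J).
\end{equation*}
By Corollary \ref{Cor_Lemma_L_1} and the induction hypothesis the right-hand side is already a rational function of the $p^{-s_{ij}}$ of the advertised form, so dividing by $1-p^{\beta_{I}}$ appends exactly one new $S_{1}$-type factor (the $J=I$ entry of $\mathcal{H}_{1}(I)$) while the $S_{2},S_{3},S_{4}$ denominator data are inherited from the $L_{0}$'s and the smaller $Z_{1}$'s.

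Convergence on the stated domain is verified alongside the recursion: $\operatorname{Re}(s_{ij})>-1$ for $i,j\in I$ guarantees holomorphy of each $L_{0}(\boldsymbol{s};J)$ by Loeser's theorem, while the hypotheses $1+\operatorname{Re}(\alpha_{i})<0$ for $i\in I$ control the $x_{i}\to 0$ singularities through the geometric-series bound driving the $(p\mathbb{Z}_{p})^{\left\vert H\right\vert }$ pieces; openness is clear from the strict inequalities and connectedness from convexity in the real parameters. The main obstacle I expect is the combinatorial bookkeeping: matching the scaling exponent $\beta_{I}$ with the $J=I$ entry of $S_{1}(\boldsymbol{s};I)$ via the rearrangement $\sum_{i\in I}\sum_{j\in T,\,j\neq i}s_{ij}=2\sum_{i<j,\,i,j\in I}s_{ij}+\sum_{i\in I,\,j\in T\smallsetminus I}s_{ij}$, and then tracking how the $\mathcal{G}(J),G_{J}$ data from each $L_{0}(\boldsymbol{s};J)$ and the $\mathcal{H}_{1},\mathcal{H}_{2},G^{(0)},G^{(1)}$ data from each $Z_{1}(\boldsymbol{s};I\smallsetminus J)$ assemble into the global families $\mathcal{H}_{1}(I),\mathcal{H}_{2}(J),G_{J}^{(0)},G_{I}^{(1)}$ of the statement. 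This is careful but essentially routine accounting once the scaling identity for $\beta_{H}$ and the ultrametric factorization on each $S_{J}^{\left\vert I\right\vert }$ are in place.
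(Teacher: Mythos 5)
Your proposal is correct and follows essentially the same route as the paper: the paper's proof of Proposition \ref{Proposition_1} uses precisely the stratification $\mathbb{Z}_{p}^{\left\vert I\right\vert }=(p\mathbb{Z}_{p})^{\left\vert I\right\vert }\sqcup\bigsqcup_{\varnothing\neq J\subseteq I}S_{J}^{\left\vert I\right\vert }$, the ultrametric factorization on each $S_{J}^{\left\vert I\right\vert }$ into $L_{0}(\boldsymbol{s};J)\times$(a $(p\mathbb{Z}_{p})^{\left\vert I\smallsetminus J\right\vert }$ integral), the rescaling that produces $p^{M(\boldsymbol{s},J)}Z_{1}(\boldsymbol{s};I\smallsetminus J)$, and the resulting functional equation (\ref{Int2}) whose denominator $1-p^{\beta_{I}}$ matches your $S_{1}$ factor. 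The only difference is presentational: the paper phrases the argument as a recursive algorithm rather than an explicit strong induction on $\left\vert I\right\vert$, and it establishes the $L_{0}$ rationality via its own Lemma \ref{Lemma_L_0} and Corollary \ref{Cor_Lemma_L_1} rather than only invoking Loeser.
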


\begin{proof}
By using the partition $\mathbb{Z}_{p}^{\left\vert I\right\vert }%
=(p\mathbb{Z}_{p})^{\left\vert I\right\vert }\sqcup S_{0}^{\left\vert
I\right\vert }$ as in the proof of Lemma \ref{Lemma_L_1}, and a change of
variables, we get%
\begin{align*}
Z_{1}\left(  \boldsymbol{s};I\right)   &  =\frac{%
{\displaystyle\int\limits_{S_{0}^{\left\vert I\right\vert }}}
\frac{%
{\displaystyle\prod\limits_{\substack{2\leq i<j\leq N-2 \\i,j\in I }}}
\left\vert x_{i}-x_{j}\right\vert _{p}^{s_{ij}}}{%
{\displaystyle\prod\limits_{i\in I}}
\left\vert x_{i}\right\vert _{p}^{2+s_{1i}+s_{\left(  N-1\right)  i}%
+\sum_{2\leq j\leq N-2,j\neq i}s_{ij}}}%
{\displaystyle\prod\limits_{i\in I}}
dx_{i}}{1-p^{\left\vert I\right\vert +\sum_{i\in I}(s_{1}{}_{i}+s_{\left(
N-1\right)  i})+\sum_{\substack{2\leq i<j\leq N-2 \\i\in I}}s_{ij}%
+\sum_{\substack{2\leq i<j\leq N-2 \\i\in T\smallsetminus I,j\in I}}s_{ij}}}\\
&  =:\frac{C_{0}\left(  \boldsymbol{s}\right)  }{1-p^{\left\vert I\right\vert
+\sum_{i\in I}(s_{1}{}_{i}+s_{\left(  N-1\right)  i})+\sum_{\substack{2\leq
i<j\leq N-2 \\i\in I}}s_{ij}+\sum_{\substack{2\leq i<j\leq N-2 \\i\in
T\smallsetminus I,j\in I}}s_{ij}}}.
\end{align*}
We now use the partition $S_{0}^{\left\vert I\right\vert }=\sqcup_{J\subseteq
I,J\neq\varnothing}S_{J}^{\left\vert I\right\vert }$ to obtain
\[
C_{0}\left(  \boldsymbol{s}\right)  =\sum_{_{J\subseteq I,J\neq\varnothing}%
}C_{0,J}\left(  \boldsymbol{s}\right)  ,
\]
where
\[
C_{0,J}\left(  \boldsymbol{s}\right)  :=%
{\displaystyle\int\limits_{S_{J}^{\left\vert I\right\vert }}}
\frac{%
{\displaystyle\prod\limits_{\substack{2\leq i<j\leq N-2 \\i,j\in I }}}
\left\vert x_{i}-x_{j}\right\vert _{p}^{s_{ij}}}{%
{\displaystyle\prod\limits_{i\in I}}
\left\vert x_{i}\right\vert _{p}^{2+s_{1i}+s_{\left(  N-1\right)  i}%
+\sum_{2\leq j\leq N-2,j\neq i}s_{ij}}}%
{\displaystyle\prod\limits_{i\in I}}
dx_{i},
\]
and consequently,
\[
Z_{1}\left(  \boldsymbol{s};I\right)  =\frac{C_{0,I}\left(  \boldsymbol{s}%
\right)  +\sum_{_{J\subsetneqq I,J\neq\varnothing}}C_{0,J}\left(
\boldsymbol{s}\right)  }{1-p^{\left\vert I\right\vert +\sum_{i\in I}(s_{1}%
{}_{i}+s_{\left(  N-1\right)  i})+\sum_{\substack{2\leq i<j\leq N-2 \\i\in
I}}s_{ij}+\sum_{\substack{2\leq i<j\leq N-2 \\i\in T\smallsetminus I,j\in
I}}s_{ij}}}.
\]
On the other hand, by using (\ref{Calcul_abs}), we have $C_{0,I}\left(
\boldsymbol{s}\right)  =L_{0}\left(  \boldsymbol{s};I\right)  $, and if
$J\subsetneqq I$,%
\[
C_{0,J}\left(  \boldsymbol{s}\right)  =\left\{
{\displaystyle\int\limits_{(p\mathbb{Z}_{p})^{\left\vert I\smallsetminus
J\right\vert }}}
\frac{%
{\displaystyle\prod\limits_{\substack{2\leq i<j\leq N-2 \\i,j\in
I\smallsetminus J}}}
\left\vert x_{i}-x_{j}\right\vert _{p}^{s_{ij}}}{%
{\displaystyle\prod\limits_{i\in I\smallsetminus J}}
\left\vert x_{i}\right\vert _{p}^{2+s_{1i}+s_{\left(  N-1\right)  i}%
+\sum_{\substack{2\leq j\leq N-2 \\j\neq i}}s_{ij}}}%
{\displaystyle\prod\limits_{i\in I\smallsetminus J}}
dx_{i}\right\}  L_{0}\left(  \boldsymbol{s};J\right)
\]%
\[%
\begin{array}
[c]{c}%
=p^{\left\vert I\smallsetminus J\right\vert +\sum_{\text{ }i\in
I\smallsetminus J}(s_{1}{}_{i}+s_{\left(  N-1\right)  i})+\sum
_{\substack{2\leq i<j\leq N-2 \\i\in I\smallsetminus J\text{ }}}s_{ij}%
+\sum_{\substack{2\leq i<j\leq N-2 \\i\in T\diagdown\left(  I\smallsetminus
J\right)  ,j\in I\smallsetminus J}}s_{ij}}\times\\
\left\{
{\displaystyle\int\limits_{\mathbb{Z}_{p}^{\left\vert I\smallsetminus
J\right\vert }}}
\frac{%
{\displaystyle\prod\limits_{\substack{2\leq i<j\leq N-2 \\i,j\in
I\smallsetminus J}}}
\left\vert x_{i}-x_{j}\right\vert _{p}^{s_{ij}}}{%
{\displaystyle\prod\limits_{i\in I\smallsetminus J}}
\left\vert x_{i}\right\vert _{p}^{2+s_{1i}+s_{\left(  N-1\right)  i}%
+\sum_{\substack{2\leq j\leq N-2 \\j\neq i}}s_{ij}}}%
{\displaystyle\prod\limits_{i\in I\smallsetminus J}}
dx_{i}\right\}  L_{0}\left(  \boldsymbol{s};J\right) \\
=p^{\left\vert I\smallsetminus J\right\vert +\sum_{\text{ }i\in
I\smallsetminus J}(s_{1}{}_{i}+s_{\left(  N-1\right)  i})+\sum
_{\substack{2\leq i<j\leq N-2 \\i\in I\smallsetminus J}}s_{ij}+\sum
_{\substack{2\leq i<j\leq N-2 \\i\in T\diagdown\left(  I\smallsetminus
J\right)  ,j\in I\smallsetminus J}}s_{ij}}\times\\
Z_{1}\left(  \boldsymbol{s};I\smallsetminus J\right)  L_{0}\left(
\boldsymbol{s};J\right)  .
\end{array}
\]
Therefore
\begin{equation}
Z_{1}\left(  \boldsymbol{s};I\right)  =\frac{L_{0}\left(  \boldsymbol{s}%
;I\right)  +\sum\limits_{J\subsetneqq I,J\neq\varnothing}p^{M(\boldsymbol{s}%
,J)}Z_{1}\left(  \boldsymbol{s};I\smallsetminus J\right)  L_{0}\left(
\boldsymbol{s};J\right)  }{1-p^{\left\vert I\right\vert +\sum_{i\in I}(s_{1}%
{}_{i}+s_{\left(  N-1\right)  i})+\sum_{\substack{2\leq i<j\leq N-2 \\i\in
I}}s_{ij}+\sum_{\substack{2\leq i<j\leq N-2 \\i\in T\smallsetminus I,j\in
I}}s_{ij}}}, \label{Int2}%
\end{equation}
where
\begin{align*}
M(\boldsymbol{s},J)  &  :=\left\vert I\smallsetminus J\right\vert
+\sum_{\text{ }i\in I\smallsetminus J}(s_{1}{}_{i}+s_{\left(  N-1\right)
i})+\sum_{\substack{2\leq i<j\leq N-2 \\i\in I\smallsetminus J}}s_{ij}\\
&  +\sum_{\substack{2\leq i<j\leq N-2 \\i\in T\diagdown\left(  I\smallsetminus
J\right)  ,j\in I\smallsetminus J}}s_{ji}.
\end{align*}
Notice that in (\ref{Int2}), $Z_{1}\left(  \boldsymbol{s};I\smallsetminus
J\right)  $ may occur with $\left\vert I\smallsetminus J\right\vert =1$, say
$I\smallsetminus J=\left\{  i\right\}  $, in this case $Z_{1}\left(
\boldsymbol{s};I\right)  $ becomes
\begin{equation}%
{\displaystyle\int\limits_{\mathbb{Z}_{p}}}
\frac{1}{\left\vert x_{i}\right\vert _{p}^{2+s_{1i}+s_{\left(  N-1\right)
i}+\sum_{2\leq j\leq N-2,j\neq i}s_{ij}}}dx_{i}=\frac{1-p^{-1}}{1-p^{1+s_{1i}%
+s_{\left(  N-1\right)  i}+\sum_{2\leq j\leq N-2,j\neq i}s_{ij}}}
\label{special_case}%
\end{equation}
for $\operatorname{Re}(s_{1i})+\operatorname{Re}(s_{\left(  N-1\right)
i})+\sum_{2\leq j\leq N-2,j\neq i}\operatorname{Re}(s_{ij})<-1$.

Finally, formula (\ref{Int2}) gives a recursive algorithm for computing
$Z_{1}\left(  \boldsymbol{s};I\right)  $, since $I\smallsetminus J\subsetneqq
I\subseteq T$ and $L_{0}\left(  \boldsymbol{s};I\right)  $, $L_{0}\left(
\boldsymbol{s};J\right)  $ can be effectively computed, see Corollary
\ref{Cor_Lemma_L_1}, by using this algorithm and (\ref{special_case}), we
obtain (\ref{zeta_L_1}).
\end{proof}

\begin{remark}
\label{Remark_Lemma_3}Given positive integers $N_{i}$, $i\in I\subseteq T$,
$v$, and complex numbers $s_{i}$ for $i\in I$, we notice that the function
$\frac{1}{1-p^{-v-\sum_{i\in I}N_{i}s_{i}}}$ gives rise to a holomorphic
function of the $s_{i}$ on the half-plane $\sum_{i\in I}N_{i}\operatorname{Re}%
\left(  s_{i}\right)  +v>0$. As a consequence of Proposition
\ref{Proposition_1} there exist families $\mathfrak{F}_{1}$, $\mathfrak{F}%
_{2}$ of non-empty subsets of $T$, and a non-empty subset $\mathcal{G}$ of
$\left\{  ij;2\leq i<j\leq N-2,i,j\in T\right\}  $, such that all\ the
integrals $Z_{1}\left(  \boldsymbol{s};I\right)  $ for all $I\subseteq T$ are
holomorphic functions of $\boldsymbol{s}$ on the solution set of the
conditions:%
\begin{gather}
\left\vert J\right\vert +\sum_{i\in J}(\operatorname{Re}\left(  s_{1}{}%
_{i}\right)  +\operatorname{Re}\left(  s_{\left(  N-1\right)  i}\right)
)+\sum_{\substack{2\leq i<j\leq N-2 \\i\in J}}\operatorname{Re}\left(
s_{ij}\right) \tag{$C1$}\\
+\sum_{\substack{2\leq i<j\leq N-2 \\i\in T\smallsetminus J,j\in
J}}\operatorname{Re}\left(  s_{ij}\right)  <0\text{ for }J\in\mathfrak{F}%
_{1};\nonumber
\end{gather}%
\begin{equation}
\left\vert K\right\vert -1+\sum_{\substack{2\leq i<j\leq N-2 \\i,j\in
K}}\operatorname{Re}(s_{ij})>0\text{ for }K\in\mathfrak{F}_{2}\text{;}
\tag{$C2$}%
\end{equation}%
\begin{equation}
1+\operatorname{Re}(s_{ij})>0\text{ for }ij\in\mathcal{G\subseteq}\left\{
ij;2\leq i<j\leq N-2\right\}  . \tag{$C3$}%
\end{equation}
Notice that the condition
\[
1+\operatorname{Re}(s_{1i})+\operatorname{Re}(s_{\left(  N-1\right)  i}%
)+\sum_{2\leq j\leq N-2,j\neq i}\operatorname{Re}(s_{ij})<0
\]
is included in Condition C1 taking $\left\vert J\right\vert =1$. This fact
follows from the following identities:%
\begin{gather*}
\sum_{\substack{2\leq i<j\leq N-2 \\i\in J,j\in T}}s_{ij}+\sum
_{\substack{2\leq i<j\leq N-2 \\i\in T\smallsetminus J,j\in J}}s_{ij}%
=\sum_{\substack{2\leq i<j\leq N-2 \\i\in J,j\in T}}s_{ij}+\sum
_{\substack{2\leq i<j\leq N-2 \\i\in T,j\in J}}s_{ij}-\sum_{\substack{2\leq
i<j\leq N-2 \\i,j\in J }}s_{ij}=\\
\sum_{\substack{2\leq i<j\leq N-2 \\i\in J,j\in T\smallsetminus J}}s_{ij}%
+\sum_{\substack{2\leq i<j\leq N-2 \\i\in T,j\in J}}s_{ij}=\sum
_{\substack{2\leq i<j\leq N-2 \\i\in J,j\in T\smallsetminus J}}s_{ij}%
+\sum_{\substack{2\leq i<j\leq N-2 \\i,j\in J}}s_{ij}+\sum_{\substack{2\leq
i<j\leq N-2 \\i\in T\smallsetminus J,j\in J}}s_{ij}=\\
\sum_{\substack{2\leq i<j\leq N-2 \\i\in J,j\in T\smallsetminus J}}s_{ij}%
+\sum_{\substack{2\leq j<i\leq N-2 \\i\in J,j\in T\smallsetminus J}%
}s_{ij}+\sum_{\substack{2\leq i<j\leq N-2 \\i,j\in J}}s_{ij}=\sum
_{\substack{2\leq j\leq N-2 \\j\neq i,i\in J,j\in T\smallsetminus J}%
}s_{ij}+\sum_{\substack{2\leq i<j\leq N-2 \\i,j\in J}}s_{ij}.
\end{gather*}
Finally, by taking $J=\left\{  i\right\}  $, the last formula becomes
$\sum_{\substack{2\leq j\leq N-2 \\j\neq i}}s_{ij}$.
\end{remark}

Denote by $D_{I,1}$ the natural domain of definition of $Z_{1}\left(
\boldsymbol{s};I\right)  $, i.e. $D_{I,1}$\ is an open and connected subset of
$\mathbb{C}^{D}$ in which $Z_{1}\left(  \boldsymbol{s};I\right)  $ is
holomorphic and there no exists a larger domain where this property holds.

\begin{lemma}
\label{Lemma_3A}Take $I$ to be a non-empty subset of $T$ and set
$H_{I,1}(\mathbb{C})$ to be the solution set in $\mathbb{C}^{D}$ of the
following conditions:
\begin{equation}
1+\operatorname{Re}\left(  s_{1i}\right)  +\operatorname{Re}\left(  s_{\left(
N-1\right)  i}\right)  +\sum_{2\leq j\leq N-2,j\neq i}\operatorname{Re}\left(
s_{ij}\right)  <0\text{, for }i\in I. \label{conditions}%
\end{equation}
Then $D_{I,1}$ is contained in $H_{I,1}(\mathbb{C})$.
\end{lemma}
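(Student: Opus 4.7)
My plan is to work directly with the defining integral and show that if $\boldsymbol{s}_0 \in D_{I,1}$ --- so that $Z_1(\boldsymbol{s}_0;I)$ is defined as a convergent improper integral --- then the inequality in (\ref{conditions}) must hold at $\boldsymbol{s}_0$ for every $i_0 \in I$. The key idea is to restrict the integration to a subregion on which the ultrametric inequality collapses all the cross-terms coupling $x_{i_0}$ to the other variables, producing a clean product decomposition that isolates a one-dimensional divergence in the $x_{i_0}$ direction.

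Fix $i_0 \in I$ and set $\alpha_j := s_{1j} + s_{(N-1)j} + \sum_{k \in T,\, k \ne j} s_{jk}$, so that the $x_j$-exponent in the integrand is $-(2+\alpha_j)$. Consider
$$R := \bigl\{ \boldsymbol{x} \in \mathbb{Z}_p^{|I|} : x_{i_0} \in p\mathbb{Z}_p,\; x_j \in \mathbb{Z}_p^{\times} \text{ for } j \in I \smallsetminus \{i_0\} \bigr\}.$$
On $R$ we have $|x_{i_0}|_p \le p^{-1} < 1 = |x_j|_p$ for $j \ne i_0$, hence the ultrametric identity $|x_{i_0}-x_j|_p = 1$ holds, and $|x_j|_p^{-(2+\alpha_j)} = 1$ for $j \ne i_0$. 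Since absolute convergence reduces matters to the real parts of the exponents, Fubini gives
$$\int_R \bigl|F(\boldsymbol{s}_0,\boldsymbol{x};N)\bigr| \, d^{|I|}\boldsymbol{x} = \left( \int_{p\mathbb{Z}_p} |x|_p^{-(2+\operatorname{Re}(\alpha_{i_0}))}\, dx \right) \cdot L_0\bigl(\operatorname{Re}(\boldsymbol{s}_0);\,I\smallsetminus\{i_0\}\bigr),$$
with the convention $L_0(\,\cdot\,;\varnothing) := 1$ taking care of the case $|I|=1$. The second factor is the integral of a strictly positive function over a set of positive Haar measure, so it is a strictly positive element of $(0,+\infty]$.

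Since $\boldsymbol{s}_0 \in D_{I,1}$ forces $\int_R |F| < \infty$, and both factors are strictly positive, each factor must be finite. A direct evaluation
$$\int_{p\mathbb{Z}_p} |x|_p^{-(2+\operatorname{Re}(\alpha_{i_0}))}\, dx = (1 - p^{-1}) \sum_{k=1}^{\infty} p^{\,k(1+\operatorname{Re}(\alpha_{i_0}))}$$
shows that finiteness is equivalent to $1 + \operatorname{Re}(\alpha_{i_0}) < 0$, which is the desired inequality for the index $i_0$. As $i_0 \in I$ was arbitrary, $\boldsymbol{s}_0 \in H_{I,1}(\mathbb{C})$. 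The main technical step is the choice of $R$: everything else is routine. The only subtlety worth flagging is that $L_0(\operatorname{Re}(\boldsymbol{s}_0);\,I\smallsetminus\{i_0\})$ could itself be $+\infty$ at $\boldsymbol{s}_0$, but strict positivity of the $x_{i_0}$-integral still forces it to be finite whenever the product is finite, so this case does not obstruct the argument.
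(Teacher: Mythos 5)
Your proof is correct and follows essentially the same route as the paper's: you restrict to the region $R$ (the paper's $A_{0}$) where $x_{i_{0}}\in p\mathbb{Z}_{p}$ and all other $x_{j}\in\mathbb{Z}_{p}^{\times}$, use the ultrametric to see that $|x_{i_{0}}-x_{j}|_{p}=1$ so that $x_{i_{0}}$ decouples from the rest, and read off a one-dimensional divergence in the $x_{i_{0}}$-direction. The one place your write-up is actually more careful than the published argument is the treatment of the second factor $L_{0}(\operatorname{Re}(\boldsymbol{s}_{0});\,I\smallsetminus\{i_{0}\})$: the paper's proof inserts the extra hypothesis $\widetilde{s}_{ij}\geq 0$ for $i,j\in I$, which forces that factor to be a finite positive number so the product is manifestly $+\infty$, but this restriction is not justified (it does not cover all of $\mathbb{R}^{D}\smallsetminus H_{I,1}(\mathbb{R})$, nor all of $\operatorname{Re}(D_{I,1})$ a priori). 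You instead observe that both factors of the Tonelli decomposition lie in $(0,+\infty]$, so finiteness of the product forces finiteness of each factor, and in particular of the $x_{i_{0}}$-integral; this yields the desired inequality with no sign restriction on the $s_{ij}$ and thereby closes a small gap in the paper's argument.
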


\begin{proof}
Denote by $H_{I,1}(\mathbb{R})$ the solution set of (\ref{conditions}) in
\ $\mathbb{R}^{D}$. Set $\operatorname{Re}\left(  D_{I,1}\right)  =\left\{
\operatorname{Re}(s_{ij})\in\mathbb{R}^{D};(s_{ij})\in D_{I,1}\right\}  $.
With this notation, it is sufficient to show that $\operatorname{Re}\left(
D_{I,1}\right)  \subset H_{I,1}(\mathbb{R})$. In order to do this, we show
that the integral $Z_{1}\left(  \widetilde{\boldsymbol{s}};I\right)  $
diverges to $+\infty$ for any $\widetilde{\boldsymbol{s}}\in\mathbb{R}%
^{D}\smallsetminus H_{I,1}(\mathbb{R})$. We prove this last assertion by
contradiction. Assume that $Z_{1}\left(  \widetilde{\boldsymbol{s}};I\right)
<+\infty$ for $\widetilde{\boldsymbol{s}}=\left(  \widetilde{s}_{ij}\right)
\in\mathbb{R}^{D}$ with $\widetilde{s}_{ij}\geq0$ for $2\leq i<j\leq N-2$,
$i$, $j\in I$ and that $\widetilde{\boldsymbol{s}}\notin H_{I,1}(\mathbb{R})$.
This last condition implies that at least a condition of the form%
\begin{equation}
1+\widetilde{s}_{1i_{0}}+\widetilde{s}_{\left(  N-1\right)  i_{0}}+\sum_{2\leq
j\leq N-2,j\neq i_{0}}\widetilde{s}_{ij}\geq0 \label{condition1}%
\end{equation}
for some $i_{0}\in I$, holds. Then, from $Z_{1}\left(  \widetilde
{\boldsymbol{s}};I\right)  <+\infty$, we have
\[
I(\widetilde{\boldsymbol{s}};A):=%
{\displaystyle\int\limits_{A}}
\frac{%
{\displaystyle\prod\limits_{\substack{2\leq i<j\leq N-2 \\i,j\in I}}}
\left\vert x_{i}-x_{j}\right\vert _{p}^{\widetilde{s}_{ij}}}{%
{\displaystyle\prod\limits_{i\in I}}
\left\vert x_{i}\right\vert _{p}^{2+\widetilde{s}_{1i}+\widetilde{s}_{\left(
N-1\right)  i}+\sum_{2\leq j\leq N-2,j\neq i}\widetilde{s}_{ij}}}%
{\displaystyle\prod\limits_{i\in I}}
dx_{i}<+\infty
\]
for any $A\subset\mathbb{Z}_{p}^{\left\vert I\right\vert }$. Take%
\[
A_{0}=\left\{  \left(  x_{i}\right)  _{i\in I}\in\mathbb{Z}_{p}^{\left\vert
I\right\vert };\left\vert x_{i_{0}}\right\vert _{p}<1\text{ and }\left\vert
x_{i}\right\vert _{p}=1\text{ for }i\in I\smallsetminus\left\{  i_{0}\right\}
\right\}  .
\]
Then, by (\ref{condition1}) and some $\epsilon\geq0$,
\[
I(\widetilde{\boldsymbol{s}};A_{0})=%
{\displaystyle\int\limits_{A_{0}}}
\frac{%
{\displaystyle\prod\limits_{\substack{2\leq i<j\leq N-2 \\i,j\in
I\smallsetminus\left\{  i_{0}\right\}  }}}
\left\vert x_{i}-x_{j}\right\vert _{p}^{\widetilde{s}_{ij}}}{\left\vert
x_{i_{0}}\right\vert _{p}^{1+\epsilon}}%
{\displaystyle\prod\limits_{i\in I}}
dx_{i}=+\infty.
\]
Therefore, if $Z_{1}\left(  \widetilde{\boldsymbol{s}};I\right)  <+\infty$,
necessarily $\widetilde{\boldsymbol{s}}\in H_{I,1}(\mathbb{R})$.
\end{proof}

\begin{corollary}
\label{Corollary_Lemma_3A}If $\boldsymbol{s}=\left(  s_{ij}\right)
\in\mathbb{R}^{D}$, with $s_{ij}\geq0$ for $i,j\in\left\{  1,\ldots
,N-1\right\}  $, then the integral $Z_{1}\left(  \boldsymbol{s};I\right)  $
diverges to $+\infty$, for any non-empty subset $I$ of $T$.
\end{corollary}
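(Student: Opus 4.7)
The plan is to deduce this corollary almost immediately from the argument used in the proof of Lemma \ref{Lemma_3A}. First I would check that the hypothesis $\boldsymbol{s}\in\mathbb{R}^{D}$ with $s_{ij}\geq 0$ forces $\boldsymbol{s}\notin H_{I,1}(\mathbb{R})$: for each $i\in I$,
\[
1+s_{1i}+s_{(N-1)i}+\sum_{2\leq j\leq N-2,\,j\neq i}s_{ij}\;\geq\;1\;>\;0,
\]
so every condition in (\ref{conditions}) is violated (not just one).

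Next, I would note that by Lemma \ref{Lemma_3A} the domain of holomorphy $D_{I,1}$ is contained in $H_{I,1}(\mathbb{C})$, so the point $\boldsymbol{s}$ lies outside $D_{I,1}$. However, this alone only rules out holomorphy, not convergence to a finite value, so the real work is to re-run the divergence argument from Lemma \ref{Lemma_3A}. That argument proceeds by restricting the integral defining $Z_{1}(\boldsymbol{s};I)$ to the subset
\[
A_{0}=\bigl\{(x_{i})_{i\in I}\in\mathbb{Z}_{p}^{|I|};\ |x_{i_{0}}|_{p}<1\text{ and }|x_{i}|_{p}=1\text{ for }i\in I\smallsetminus\{i_{0}\}\bigr\},
\]
for some fixed $i_{0}\in I$. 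On $A_{0}$ we have $|x_{i}-x_{j}|_{p}=1$ whenever $i,j\in I\smallsetminus\{i_{0}\}$ and $|x_{i}-x_{j}|_{p}=1$ whenever exactly one of $i,j$ equals $i_{0}$; since the $s_{ij}\geq 0$, these factors are bounded below by constants and the integrand on $A_{0}$ is bounded below by $|x_{i_{0}}|_{p}^{-1-\varepsilon}$ for some $\varepsilon\geq 0$. Since $\int_{p\mathbb{Z}_{p}}|x_{i_{0}}|_{p}^{-1-\varepsilon}dx_{i_{0}}=+\infty$, the restricted integral diverges, and hence so does $Z_{1}(\boldsymbol{s};I)$.

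The main (and only) obstacle is that some of the $|x_{i}-x_{j}|_{p}^{s_{ij}}$ factors in the integrand may be small on $A_{0}$, which could a priori compensate for the blow-up of $|x_{i_{0}}|_{p}^{-\text{(stuff)}}$; this is precisely why one needs $s_{ij}\geq 0$, as then every $|x_{i}-x_{j}|_{p}^{s_{ij}}\leq 1$ does not help shrink the integrand -- equivalently we bound it by $|x_{i}-x_{j}|_{p}^{s_{ij}}\geq 0$ and retain only the unit sphere contributions, where these factors equal~$1$. Apart from this, the argument is a direct specialization of the one already carried out in Lemma \ref{Lemma_3A}, so no new machinery is required.
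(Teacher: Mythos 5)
Your overall strategy — reuse the divergence argument from the proof of Lemma~\ref{Lemma_3A}, noting that the hypothesis $s_{ij}\geq 0$ forces every condition of the form (\ref{condition1}) to hold with $i_{0}$ ranging over all of $I$ — is exactly what the paper does, and the corollary is indeed an immediate by-product of that lemma's proof. However, the way you justify the divergence over $A_{0}$ contains a genuine error. You assert that on $A_{0}$ one has $|x_{i}-x_{j}|_{p}=1$ for $i,j\in I\smallsetminus\{i_{0}\}$; this is false, since two elements of $\mathbb{Z}_{p}^{\times}$ can be arbitrarily close. As a consequence, your claimed pointwise lower bound ``the integrand on $A_{0}$ is bounded below by $|x_{i_{0}}|_{p}^{-1-\varepsilon}$'' is also false: for $s_{ij}>0$ the numerator factors $|x_{i}-x_{j}|_{p}^{s_{ij}}$ tend to $0$ near the diagonal, so no such pointwise lower bound holds. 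Your proposed remedy — further restrict to the locus where these factors equal $1$ — can fail entirely: for small $p$ (e.g.\ $p=2$) and $|I|\geq 3$, there simply is no point of $(\mathbb{Z}_{p}^{\times})^{|I|-1}$ with $|x_{i}-x_{j}|_{p}=1$ for all distinct $i,j$, since that would require $|I|-1$ pairwise-distinct nonzero residues mod $p$.

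The correct way to close the gap (and what the paper's Lemma~\ref{Lemma_3A} argument really rests on) is Tonelli rather than a pointwise bound: on $A_{0}$ all factors involving $x_{i_{0}}$ other than the pure power $|x_{i_{0}}|_{p}^{-(2+s_{1i_{0}}+s_{(N-1)i_{0}}+\sum_{j\neq i_{0}}s_{i_{0}j})}$ are identically $1$ by the ultrametric inequality, so the nonnegative integral factors as
\[
\Bigl(\int_{p\mathbb{Z}_{p}}\frac{dx_{i_{0}}}{|x_{i_{0}}|_{p}^{1+\epsilon}}\Bigr)\cdot\Bigl(\int_{(\mathbb{Z}_{p}^{\times})^{|I|-1}}\prod_{\substack{2\leq i<j\leq N-2\\ i,j\in I\smallsetminus\{i_{0}\}}}|x_{i}-x_{j}|_{p}^{s_{ij}}\prod_{i\in I\smallsetminus\{i_{0}\}}dx_{i}\Bigr),
\]
where the first factor is $+\infty$ because $\epsilon\geq 0$, and the hypothesis $s_{ij}\geq 0$ is used only to see that the second factor is a \emph{finite, strictly positive} number (the integrand lies in $(0,1]$ almost everywhere). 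That gives $I(\boldsymbol{s};A_{0})=+\infty$ and hence $Z_{1}(\boldsymbol{s};I)=+\infty$ without ever needing a pointwise lower bound on the integrand.
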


\subsection{Computation of $Z_{0}\left(  \boldsymbol{s};I\right)  $}

\begin{proposition}
\label{Proposition_2}Let $I$ be a subset of $T$ satisfying $\left\vert
I\right\vert \geq2$. Then, the integral%
\[
Z_{0}\left(  \boldsymbol{s};I\right)  =%
{\displaystyle\int\limits_{\mathbb{Z}_{p}^{\left\vert I\right\vert }}}
{\displaystyle\prod\limits_{i\in I}}
\left\vert x_{i}\right\vert _{p}^{s_{1i}}\left\vert 1-x_{i}\right\vert
_{p}^{s_{(N-1)i}}\text{ }%
{\displaystyle\prod\limits_{\substack{2\leq i<j\leq N-2 \\i,j\in I}}}
\left\vert x_{i}-x_{j}\right\vert _{p}^{s_{i}{}_{j}}%
{\displaystyle\prod\limits_{i\in I}}
dx_{i}%
\]
gives rise to a holomorphic function on
\begin{gather*}
H_{I,0}:=\left\{  \left(  s_{ij}\right)  \in\mathbb{C}^{D};\operatorname{Re}%
\left(  s_{ij}\right)  >0\text{ for }i,j\in I\right\}  \text{ }\cap\left\{
\left(  s_{ij}\right)  \in\mathbb{C}^{D};\operatorname{Re}(s_{1i})>0\text{ for
}i\in I\right\} \\
\cap\left\{  \left(  s_{ij}\right)  \in\mathbb{C}^{D};\operatorname{Re}%
(s_{\left(  N-1\right)  i})>0\text{ for }i\in I\right\}  ,
\end{gather*}
which is an open and connected subset of $\mathbb{C}^{D}$. Furthermore
$Z_{0}\left(  \boldsymbol{s};I\right)  $ has an analytic continuation as a
rational function of the form
\[
Z_{0}\left(  \boldsymbol{s};I\right)  =\frac{Q_{I,0}(\left\{  p^{-s_{1i}%
},p^{-s_{(N-1)i}},p^{-s_{ij}};i,j\in T\right\}  )}{\prod\limits_{i=0}^{2}%
R_{i}(\boldsymbol{s};I,I)%
{\textstyle\prod\nolimits_{i=1}^{3}}
U_{i}(\boldsymbol{s};I)},
\]
where $Q_{I,0}(\left\{  p^{-s_{1i}},p^{-s_{(N-1)i}},p^{-s_{ij}};i,j\in
T\right\}  $ is a polynomial in the variables $p^{-s_{1i}}$, $p^{-s_{(N-1)i}}%
$, $p^{-s_{ij}}$ for $i$, $j\in T$, $U_{i}(\boldsymbol{s};I)$, $i=1$, $2$, $3$
are as in Lemma \ref{Lemma_M_a_J},
\[
R_{1}(\boldsymbol{s};I,I)=\left(  1-p^{-1-s_{1i}}\right)  ^{h_{1}},
\]%
\[
R_{2}(\boldsymbol{s};I,K)=\prod\limits_{\left(  J,R\right)  \in\mathcal{G}%
_{2}\left(  I\times I\right)  }\left(  1-p^{-\left\vert J\right\vert
-\sum_{i\in R}s_{1i}-\sum\nolimits_{2\leq i<j\leq N-2,\text{ }i,j\in J}s_{ij}%
}\right)  ^{l_{\left(  J,R\right)  }},
\]
$R_{0}(\boldsymbol{s};I,I)$, $\mathcal{G}_{2}\left(  I\times I\right)  $ are
as in Lemma \ref{Lema_L_2}, and the $l_{\left(  J,R\right)  }$'s\ are positive integers.
\end{proposition}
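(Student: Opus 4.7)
The plan is to adapt to $Z_0(\boldsymbol{s};I)$ the partition-and-decouple strategy already used for $Z_1(\boldsymbol{s};I)$ in Proposition \ref{Proposition_1}, now exploiting the simultaneous presence of $|x_i|_p^{s_{1i}}$ and $|1-x_i|_p^{s_{(N-1)i}}$. First, I decompose
\[
\mathbb{Z}_p^{\left\vert I\right\vert} = \bigsqcup_{J \subseteq I} S_J^{\left\vert I\right\vert}, \qquad S_J^{\left\vert I\right\vert} := \{(x_i)_{i \in I} \in \mathbb{Z}_p^{\left\vert I\right\vert} : |x_i|_p = 1 \Leftrightarrow i \in J\},
\]
so that $Z_0(\boldsymbol{s};I) = \sum_{J \subseteq I} Z_{0,J}(\boldsymbol{s})$, where $Z_{0,J}$ denotes the restriction of the integral to $S_J^{\left\vert I\right\vert}$. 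The crucial ultrametric observation is that on $S_J^{\left\vert I\right\vert}$: $|x_i|_p^{s_{1i}} = 1$ for $i \in J$; $|1-x_i|_p^{s_{(N-1)i}} = 1$ for $i \in I \setminus J$ (since $x_i \in p\mathbb{Z}_p$ forces $1 - x_i \in 1 + p\mathbb{Z}_p$); and $|x_i - x_j|_p^{s_{ij}} = 1$ whenever the indices $i$ and $j$ lie in distinct parts of $\{J, I \setminus J\}$.

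These identities decouple $Z_{0,J}(\boldsymbol{s})$ into a product of an integral over $(\mathbb{Z}_p^{\times})^{\left\vert J\right\vert}$ involving only the $|1-x_i|_p$ factors and the $|x_i - x_j|_p$ factors with both indices in $J$, times an integral over $(p\mathbb{Z}_p)^{\left\vert I\setminus J\right\vert}$ involving only the $|x_i|_p$ factors and the $|x_i - x_j|_p$ factors with both indices in $I \setminus J$. The first factor is precisely $\boldsymbol{M}_1(\boldsymbol{s}; J)$ of Lemma \ref{Lemma_M_a_J} (extended by Remark \ref{Nota_Lemma_M_a_J} for $\left\vert J\right\vert = 1$, and taken to equal $1$ for $J = \emptyset$). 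For the second factor, the analytic change of variables $x_i = p y_i$ yields a monomial prefactor $p^{-\left\vert I \setminus J\right\vert - \sum_{i \in I \setminus J} s_{1i} - \sum_{2 \leq i < j \leq N-2,\,i,j\in I\setminus J} s_{ij}}$ times the Igusa-type integral $L_2(\boldsymbol{s}; I \setminus J, I \setminus J, T_{I \setminus J})$ with $t = 1$, treated in Lemma \ref{Lema_L_2}; the edge cases $\left\vert I\setminus J\right\vert \in \{0, 1\}$ are dealt with by direct evaluation.

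For the holomorphy statement, observe that on $H_{I,0}$ every exponent in the integrand has positive real part, hence the integrand is bounded above by $1$ on $\mathbb{Z}_p^{\left\vert I\right\vert}$ and the integral converges absolutely; since it is holomorphic in each variable separately by the standard Igusa argument (cf.\ Lemma 5.3.1 of \cite{Igusa}), it is jointly holomorphic there by Hartogs. Analytic continuation as a rational function in the variables $\{p^{-s_{1i}}, p^{-s_{(N-1)i}}, p^{-s_{ij}}\}$ then follows by inserting the explicit rational expressions supplied by Lemmas \ref{Lema_L_2} and \ref{Lemma_M_a_J} into the finite sum $Z_0(\boldsymbol{s}; I) = \sum_{J \subseteq I} Z_{0,J}(\boldsymbol{s})$. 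The main obstacle is bookkeeping: after placing all summands over a common denominator, one must verify that only the factors listed in $\prod_{i=0}^{2} R_i(\boldsymbol{s}; I, I) \prod_{i=1}^{3} U_i(\boldsymbol{s}; I)$ appear. The extremal pieces $J = \emptyset$ and $J = I$ contribute, respectively, the full $R_0 R_1 R_2$ arising from $L_2(\boldsymbol{s}; I, I, T_I)$ and the full $U_1 U_2 U_3$ arising from $\boldsymbol{M}_1(\boldsymbol{s}; I)$; intermediate $J$ produce denominator factors indexed by subsets of $I$ and pairs thereof, which one absorbs into this product by enlarging the families $\mathcal{G}_1, \mathcal{G}_2, \mathcal{F}_1, \mathcal{F}_2$ so as to include all requisite subsets.
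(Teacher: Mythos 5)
Your proof is correct and follows essentially the same route as the paper's: you partition $\mathbb{Z}_p^{|I|}$ into the sets $S_J^{|I|}$ (the paper first peels off $(p\mathbb{Z}_p)^{|I|}=S_\emptyset^{|I|}$ and then partitions $S_0^{|I|}$ over nonempty $J$, which is the same decomposition), exploit the identical ultrametric decoupling, and reduce to the same two auxiliary objects $L_2(\boldsymbol{s};I\smallsetminus J,I\smallsetminus J,T_{I\smallsetminus J})$ with $t=1$ and $\boldsymbol{M}_1(\boldsymbol{s};J)$, matching formulas (\ref{Formula_Z_I_0})--(\ref{Formula_Z_I_0_3}). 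The only difference is presentational; your holomorphy paragraph makes explicit what the paper delegates to the general theory of multivariate Igusa zeta functions.
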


\begin{proof}
By using that $\mathbb{Z}_{p}^{\left\vert I\right\vert }=(p\mathbb{Z}%
_{p})^{\left\vert I\right\vert }\sqcup S_{0}^{\left\vert I\right\vert }$, we
have
\begin{equation}
Z_{0}\left(  \boldsymbol{s};I\right)  =V_{1}\left(  \boldsymbol{s};I\right)
+V_{2}\left(  \boldsymbol{s};I\right)  , \label{Formula_Z_I_0}%
\end{equation}
where
\[
V_{1}\left(  \boldsymbol{s};I\right)  :=%
{\displaystyle\int\limits_{(p\mathbb{Z}_{p})^{\left\vert I\right\vert }}}
{\displaystyle\prod\limits_{i\in I}}
\left\vert x_{i}\right\vert _{p}^{s_{1i}}\text{ }%
{\displaystyle\prod\limits_{\substack{2\leq i<j\leq N-2 \\i,j\in I}}}
\left\vert x_{i}-x_{j}\right\vert _{p}^{s_{i}{}_{j}}%
{\displaystyle\prod\limits_{i\in I}}
dx,
\]%
\[
V_{2}\left(  \boldsymbol{s};I\right)  :=%
{\displaystyle\int\limits_{S_{0}^{\left\vert I\right\vert }}}
{\displaystyle\prod\limits_{i\in I}}
\left\vert x_{i}\right\vert _{p}^{s_{1i}}\left\vert 1-x_{i}\right\vert
_{p}^{s_{(N-1)i}}\text{ }%
{\displaystyle\prod\limits_{\substack{2\leq i<j\leq N-2 \\i,j\in I}}}
\left\vert x_{i}-x_{j}\right\vert _{p}^{s_{i}{}_{j}}%
{\displaystyle\prod\limits_{i\in I}}
dx.
\]
Now, by changing variables and using Lemma \ref{Lema_L_2} with $t=1$,
$V_{1}\left(  \boldsymbol{s};I\right)  $ equals
\begin{gather}
p^{-\left\vert I\right\vert -\sum_{i\in I}s_{1i}-\sum_{\substack{2\leq i<j\leq
N-2 \\i,j\in I}}s_{ij}}%
{\displaystyle\int\limits_{\mathbb{Z}_{p}^{\left\vert I\right\vert }}}
{\displaystyle\prod\limits_{i\in I}}
\left\vert x_{i}\right\vert _{p}^{s_{1i}}%
{\displaystyle\prod\limits_{\substack{2\leq i<j\leq N-2 \\i,j\in I}}}
\left\vert x_{i}-x_{j}\right\vert _{p}^{s_{ij}}%
{\displaystyle\prod\limits_{i\in I}}
dx_{i}\label{Formula_Z_I_0_1}\\
=p^{-\left\vert I\right\vert -\sum_{i\in I}s_{1i}-\sum_{\substack{2\leq
i<j\leq N-2 \\i,j\in I}}s_{ij}}L_{2}\left(  \boldsymbol{s};I,I,T_{I}\right)
.\nonumber
\end{gather}
To compute $V_{2}\left(  \boldsymbol{s};I\right)  $, we use the partition
$S_{0}^{\left\vert I\right\vert }=\sqcup_{J\subseteq I,J\neq\varnothing}%
S_{J}^{\left\vert I\right\vert }$, with $S_{J}^{\left\vert I\right\vert
}=\left\{  \left(  x_{i}\right)  _{i\in I}\in\mathbb{Z}_{p}^{\left\vert
I\right\vert };\left\vert x_{i}\right\vert _{p}=1\Leftrightarrow i\in
J\right\}  $, then $V_{2}\left(  \boldsymbol{s};I\right)  $ equals
\begin{align}
&  \sum_{\substack{J\subseteq I \\J\neq\varnothing}}%
{\displaystyle\int\limits_{S_{J}^{\left\vert I\right\vert }}}
{\displaystyle\prod\limits_{i\in I}}
\left\vert x_{i}\right\vert _{p}^{s_{1i}}\left\vert 1-x_{i}\right\vert
_{p}^{s_{(N-1)i}}\text{ }%
{\displaystyle\prod\limits_{\substack{2\leq i<j\leq N-2 \\i,j\in I}}}
\left\vert x_{i}-x_{j}\right\vert _{p}^{s_{i}{}_{j}}%
{\displaystyle\prod\limits_{i\in I}}
dx\label{Formula_Z_I_0_2}\\
&  =\sum_{\substack{J\subseteq I \\J\neq\varnothing}}M_{J}\left(
\boldsymbol{s}\right)  ,\nonumber
\end{align}
where
\begin{gather*}
M_{J}\left(  \boldsymbol{s}\right)  =%
{\displaystyle\int\limits_{S_{J}^{\left\vert I\right\vert }}}
{\displaystyle\prod\limits_{i\in I\smallsetminus J}}
\left\vert x_{i}\right\vert _{p}^{s_{1i}}%
{\displaystyle\prod\limits_{\substack{2\leq i<j\leq N-2 \\i,j\in
I\smallsetminus J}}}
\left\vert x_{i}-x_{j}\right\vert _{p}^{s_{ij}}%
{\displaystyle\prod\limits_{i\in J}}
\left\vert 1-x_{i}\right\vert _{p}^{s_{(N-1)i}}\times\\%
{\displaystyle\prod\limits_{\substack{2\leq i<j\leq N-2 \\i,j\in J}}}
\left\vert x_{i}-x_{j}\right\vert _{p}^{s_{ij}}%
{\displaystyle\prod\limits_{i\in I}}
dx=%
{\displaystyle\int\limits_{(p\mathbb{Z}_{p})^{\left\vert I\smallsetminus
J\right\vert }}}
{\displaystyle\prod\limits_{i\in I\smallsetminus J}}
\left\vert x_{i}\right\vert _{p}^{s_{1i}}%
{\displaystyle\prod\limits_{\substack{2\leq i<j\leq N-2 \\i,j\in
I\smallsetminus J}}}
\left\vert x_{i}-x_{j}\right\vert _{p}^{s_{ij}}%
{\displaystyle\prod\limits_{i\in I\smallsetminus J}}
dx_{i}\\
\times%
{\displaystyle\int\limits_{(\mathbb{Z}_{p}^{\times})^{\left\vert J\right\vert
}}}
{\displaystyle\prod\limits_{i\in J}}
\left\vert 1-x_{i}\right\vert _{p}^{s_{(N-1)i}}%
{\displaystyle\prod\limits_{\substack{2\leq i<j\leq N-2 \\i,j\in J}}}
\left\vert x_{i}-x_{j}\right\vert _{p}^{s_{ij}}%
{\displaystyle\prod\limits_{i\in J}}
dx_{i}:=H_{0}\left(  \boldsymbol{s};I\smallsetminus J\right)  M_{1}\left(
\boldsymbol{s};J\right)  .
\end{gather*}
We notice that if $J=I$, then, by convention, $H_{0}\left(  \boldsymbol{s}%
;I\smallsetminus J\right)  =1$. Now suppose that $J\subsetneqq I$. From Lemma
\ref{Lema_L_2} with $t=1$, we have
\begin{equation}
H_{0}\left(  \boldsymbol{s};I\smallsetminus J\right)  =p^{-\left\vert
I\smallsetminus J\right\vert -\sum_{i\in I\smallsetminus J}s_{1i}-\sum_{
_{\substack{2\leq i<j\leq N-2 \\i,j\in I\smallsetminus J}}}s_{ij}}L_{2}\left(
\boldsymbol{s};I\smallsetminus J,I\smallsetminus J,T_{I\smallsetminus
J}\right)  . \label{Formula_Z_I_0_3}%
\end{equation}
The announced result follows from formulas (\ref{Formula_Z_I_0}%
)-(\ref{Formula_Z_I_0_3}), and $M_{1}\left(  \boldsymbol{s};J\right)  $ by
using Lemmas \ref{Lema_L_2}-\ref{Lemma_M_a_J} and Remark
\ref{Nota_Lemma_M_a_J}.
\end{proof}

\begin{remark}
\label{Remark_Lemma_5}As a consequence of Proposition \ref{Proposition_2} all
the integrals $Z_{0}\left(  \boldsymbol{s};I\right)  $ for all $I\subseteq T$
\ are holomorphic functions of $\boldsymbol{s}$ on the solution set in
$\mathbb{C}^{D}$ of the following conditions:%
\begin{equation}
\left\vert J\right\vert +\sum_{i\in S}\operatorname{Re}(s_{ti})+\sum
\nolimits_{2\leq i<j\leq N-2,\text{ }i,j\in J}\operatorname{Re}(s_{ij}%
)>0\text{ for }J\times S\in\mathfrak{F}_{3}\text{,} \tag{$C4$}%
\end{equation}
with $S\subseteq J$, $t\in\left\{  1,N-1\right\}  $,and $\mathfrak{F}_{3}$ a
family of non-empty subsets of $I\times I$;
\begin{equation}
\left\vert K\right\vert -1+\sum_{\substack{2\leq i<j\leq N-2 \\i,j\in
K}}\operatorname{Re}(s_{ij})>0\text{ for }K\in\mathfrak{F}_{4}\text{,}
\tag{$C5$}%
\end{equation}
where $\mathfrak{F}_{4}$ s a family of non-empty subsets of $I$;%
\begin{equation}
1+\operatorname{Re}(s_{ij})>0\text{ for }ij\in G_{T}, \tag{$C6$}%
\end{equation}
where $G_{T}$ is a non-empty subset of $\left\{  2\leq i<j\leq N-2,\text{
}i,j\in J\right\}  $ with $(N-1)i$, $1i\in G_{T}$.
\end{remark}

\begin{remark}
\label{Remark_B_Lemma_5}If $\boldsymbol{s}=\left(  0\right)  _{ij}$ for $i$,
$j\in\left\{  1,\ldots,N-1\right\}  $, then $Z_{0}\left(  \boldsymbol{0}%
;I\right)  =1$, for any non-empty subset $I$ of $T$.
\end{remark}

\begin{definition}
Denote by $H(\mathbb{R})$, respectively by $H(\mathbb{C})$, the solution set
of conditions C1-C6 in $\mathbb{R}^{D}$, respectively in $\mathbb{C}^{D}$.
\end{definition}

\subsection{Main Theorem}

\begin{lemma}
\label{Lemma_H_0}Consider the following conditions:%
\begin{gather}
\left\vert J\right\vert +\sum_{i\in J}(\operatorname{Re}\left(  s_{1}{}%
_{i}\right)  +\operatorname{Re}\left(  s_{\left(  N-1\right)  i}\right)
)+\sum_{\substack{2\leq i<j\leq N-2 \\i\in J}}\operatorname{Re}\left(
s_{ij}\right) \tag{$C1^\prime$}\\
+\sum_{\substack{2\leq i<j\leq N-2 \\i\in T\smallsetminus J,j\in
J}}\operatorname{Re}\left(  s_{ij}\right)  <0\text{ for }J\subseteq T\text{,
}\left\vert J\right\vert \geq1;\nonumber
\end{gather}%
\begin{equation}
\left\vert J\right\vert -1+\sum_{\substack{2\leq i<j\leq N-2 \\i,j\in
J}}\operatorname{Re}(s_{ij})>0\text{ for }J\subseteq T\text{, }\left\vert
J\right\vert \geq2\text{;} \tag{$C2^\prime$}%
\end{equation}%
\begin{gather}
\left\vert J\right\vert +\sum_{i\in S}\operatorname{Re}(s_{ti})+\sum
\nolimits_{2\leq i<j\leq N-2,\text{ }i,j\in J}\operatorname{Re}(s_{ij}%
)>0\text{ }\tag{$C3^\prime$}\\
\text{for }t\in\left\{  1,N-1\right\}  \text{, }J\times S\subseteq T\times
T\text{ with }\left\vert J\right\vert \geq2\text{ or }\left\vert S\right\vert
\geq1,S\subseteq J\text{;}\nonumber
\end{gather}%
\begin{equation}
1+\operatorname{Re}(s_{ij})>0\text{ for }ij\in\left\{  (i,j);1\leq i<j\leq
N-1\right\}  . \tag{$C4^\prime$}%
\end{equation}
Denote by $H_{0}(\mathbb{R})$, respectively by $H_{0}(\mathbb{C})$, the
solution set of conditions $C1^{\prime}$-$C4^{\prime}$ in $\mathbb{R}^{D}$,
respectively in $\mathbb{C}^{D}$. Then $H_{0}(\mathbb{R})$ is convex and
bounded set with non-empty interior, and $H_{0}(\mathbb{C})$ contains an open
and connected subset of $\mathbb{C}^{D}$. Furthermore, $H_{0}(\mathbb{R}%
)\subset H(\mathbb{R})$ and $H_{0}(\mathbb{C})\subset H(\mathbb{C})$.
\end{lemma}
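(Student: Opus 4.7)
The plan is first to observe that conditions $C1'$--$C4'$ are strict linear inequalities involving only the real parts $\operatorname{Re}(s_{ij})$, so $H_{0}(\mathbb{C})$ is the tube $H_{0}(\mathbb{R}) + i\mathbb{R}^{D}$. Consequently, any non-empty open subset of $H_{0}(\mathbb{R})$ produces an open and connected subset of $H_{0}(\mathbb{C})$, and convexity of $H_{0}(\mathbb{R})$ is immediate from its description as an intersection of finitely many open half-spaces.

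To exhibit an interior point I would take $s_{1i} = s_{(N-1)i} = -\tfrac{1}{2} - \epsilon$ for every $i \in T$ and $s_{ij} = 0$ for $i,j \in T$ with $i<j$, where $\epsilon \in (0,\tfrac{1}{2})$. A direct substitution reduces $C1'$ to $-2|J|\epsilon < 0$, $C2'$ to $|J|-1 > 0$, $C3'$ to $|J| - |S|(\tfrac{1}{2}+\epsilon) \geq |J|(\tfrac{1}{2}-\epsilon) > 0$ (using $|S| \leq |J|$), and $C4'$ to $\tfrac{1}{2} - \epsilon > 0$. All inequalities being strict, a small Euclidean ball around this point lies in $H_{0}(\mathbb{R})$. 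For boundedness, $C4'$ supplies the uniform lower bound $\operatorname{Re}(s_{ij}) > -1$ for every variable; applying $C1'$ with $J = \{i_{0}\}$ gives $1 + \operatorname{Re}(s_{1,i_{0}}) + \operatorname{Re}(s_{(N-1),i_{0}}) + \sum_{j \in T \setminus \{i_{0}\}} \operatorname{Re}(s_{i_{0}j}) < 0$, and substituting the lower bound $-1$ on all but one term produces an explicit upper bound of order $N-4$ on any chosen variable. Hence $H_{0}(\mathbb{R})$ is bounded.

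For the containment $H_{0} \subset H$, I would simply match indexing. The families $\mathfrak{F}_{1}, \mathfrak{F}_{2}$ from Remark~\ref{Remark_Lemma_3} and $\mathfrak{F}_{3}, \mathfrak{F}_{4}$ from Remark~\ref{Remark_Lemma_5} are families of non-empty subsets of $T$ or of $T \times T$ with $S \subseteq J$, while $\mathcal{G}$ and $G_{T}$ are subsets of $\{(i,j) : 1 \leq i < j \leq N-1\}$. Each of $C1$--$C6$ is therefore the restriction of one of $C1'$--$C4'$ (which quantify over the \emph{full} ranges) to a sub-family, and so is automatically implied. This yields both $H_{0}(\mathbb{R}) \subset H(\mathbb{R})$ and, by taking real parts, $H_{0}(\mathbb{C}) \subset H(\mathbb{C})$. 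The main obstacle is the boundedness step, where the upper-bound inequalities $C1'$ and the lower-bound inequalities $C4'$ must be combined simultaneously and compatibly; the rest is elementary convex geometry and bookkeeping.
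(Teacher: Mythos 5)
Your proof is correct and on the whole follows the paper's outline (exhibit an interior point to show non-emptiness, use convexity of half-space intersections, and observe that $C1'$--$C4'$ imply $C1$--$C6$ because the latter range over sub-families). Two genuine differences are worth noting. First, your interior point $s_{1i}=s_{(N-1)i}=-\tfrac{1}{2}-\epsilon$, $s_{ij}=0$ ($i,j\in T$) is considerably simpler than the paper's: the paper uses the strip $-\tfrac{2}{3N_1}<\operatorname{Re}(s_{ij})<0$, $-\tfrac{2}{3}<\operatorname{Re}(s_{1i}),\operatorname{Re}(s_{(N-1)i})<-\tfrac{1}{2}$ with $N_1=\tfrac{(N-4)(N-3)}{2}$, and the resulting verification of $C3'$ there is delicate (it works only because the component inequalities are strict, since the crude estimate $|J|-\tfrac{2}{3}|S|-\tfrac{2}{3}$ can vanish for $|J|=|S|$). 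Your computation $|J|-|S|(\tfrac{1}{2}+\epsilon)\geq|J|(\tfrac{1}{2}-\epsilon)>0$ avoids this subtlety entirely. Second, and more importantly, you actually prove boundedness: you combine the lower bounds from $C4'$ with the $|J|=1$ cases of $C1'$ to bound every coordinate above, checking that every variable appears in some singleton instance of $C1'$. The paper's proof, as written, only exhibits a bounded subset inside $H_0(\mathbb{R})$ (the cube $C1''$--$C3''$), which establishes non-emptiness but does not by itself show $H_0(\mathbb{R})$ is bounded; boundedness is asserted in the concluding sentence without an argument. So your boundedness step fills a genuine gap in the paper's exposition, and your interior point makes the non-emptiness verification more transparent.
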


\begin{proof}
We first notice that for all $N\geq4$, the solution set $H_{0}(\mathbb{R})$ is
an open convex set because it is a finite intersection of open half-spaces.

\textbf{Claim }$H_{0}(\mathbb{R})$ is a non-empty bounded subset\textbf{. }We
consider the case\textbf{\ }$N\geqslant5$ in which $|T|\geqslant2$. Set
$N_{1}=\frac{(N-4)(N-3)}{2}$. We define, for $i,j\in\left\{
2,...,N-2\right\}  $, the following conditions:%
\begin{gather}
-\frac{2}{3N_{1}}<\operatorname{Re}(s_{ij})<0\text{ },\tag{$C1\prime\prime$}\\
-\frac{2}{3}<\operatorname{Re}(s_{1i})<-\frac{1}{2}\text{,}\tag{$%
C2\prime\prime$}\\
-\frac{2}{3}<\operatorname{Re}(s_{\left(  N-1\right)  i})<-\frac{1}{2}%
.\tag{$C3\prime\prime$}%
\end{gather}
We notice that the solution set of conditions $C1\prime\prime$--$C3\prime
\prime$ is a non-empty open and connected subset in $\mathbb{R}^{D}$. We now
verify that the conditions $C1\prime\prime$--$C2\prime\prime$ imply conditions
$C1^{\prime}$-$C4^{\prime}$. First, consider $J\subseteq T$ such that $|J|=1.$
We can assume that $J=\left\{  i_{0}\right\}  $ for some $i_{0}\in T$. By
conditions $C1\prime\prime$--$C3\prime\prime$, we have%
\begin{equation}
1+\operatorname{Re}\left(  s_{1i_{0}}\right)  +\operatorname{Re}s_{\left(
N-1\right)  i_{0}}<1-1/2-1/2=0,\label{Eq_1}%
\end{equation}%
\begin{equation}
\sum_{2\leqslant i_{0}<j\leq N-2}\operatorname{Re}\left(  s_{i_{0}j}\right)
+\sum_{\substack{2\leq i<i_{0}\leq N-2,\text{ }\\i\in T\backslash
J}}\operatorname{Re}\left(  s_{ii_{0}}\right)  <0,\label{Eq_2}%
\end{equation}
thus, \ $C1^{\prime}$ follows from (\ref{Eq_1}) and (\ref{Eq_2}). Conditions
$C2^{\prime}$, $C3^{\prime}$ and $C4^{\prime}$ follow directly from
$C1^{\prime\prime}$--$C3^{\prime\prime}$.

We now consider $J\subseteq T$ such that $|J|\geq2$. Condition $C1^{\prime}$
is obtained with a similar calculation to (\ref{Eq_1}) and (\ref{Eq_2}). Now,
by condition $C1^{\prime\prime}$, we get%
\[
\left\vert J\right\vert -1+\sum_{2\leq i<j\leq N-2,\text{ }i,j\in
J}\operatorname{Re}\left(  s_{ij}\right)  >\left\vert J\right\vert -1-\frac
{2}{3}>\left\vert J\right\vert -\frac{5}{3}>0,
\]
which implies $C2^{\prime}$. We now verify Condition $C3^{\prime}$. Let
$t\in\left\{  1,N-1\right\}  $, by using conditions $C2\prime\prime$ and
$C3\prime\prime$,
\begin{align*}
&  \left\vert J\right\vert +\sum_{i\in S}\operatorname{Re}\left(
s_{ti}\right)  +\sum_{2\leq i<j\leq N-2,\text{ }i,j\in J}\operatorname{Re}%
\left(  s_{ij}\right)  \\
&  >\left\vert J\right\vert -\frac{2}{3}|S|-\frac{2\left\vert \left(
i,j\right)  ;2\leq i<j\leq N-2,\text{ }i,j\in J\right\vert }{3N_{1}}\\
&  \geq\left\vert J\right\vert -\frac{2}{3}|S|-\frac{2}{3}\text{.}%
\end{align*}
There are two cases. First, $\left\vert S\right\vert =1$. In this case
$\left\vert J\right\vert -\frac{2}{3}|S|-\frac{2}{3}>0$. If $\left\vert
S\right\vert \geq2$, by using $\frac{-2}{3}\left\vert S\right\vert -\frac
{2}{3}\geq-\left\vert S\right\vert $ and $\left\vert J\right\vert
\geq\left\vert S\right\vert ,$ then $\left\vert J\right\vert -\frac{2}%
{3}|S|-\frac{2}{3}\geq\left\vert J\right\vert -\left\vert S\right\vert \geq0$.

Finally, conditions $C4^{\prime}$ follows from conditions $C1\prime\prime
$-$C3\prime\prime$. Therefore, $H_{0}(\mathbb{R})$ is convex and bounded set
with non-empty interior, and $H_{0}(\mathbb{C})$ contains an open and
connected subset of $\mathbb{C}^{D}$. Finally, since conditions $C1^{\prime}%
$-$C4^{\prime}$ imply conditions $C1$-$C6$, we conclude that $H_{0}%
(\mathbb{R})\subset H(\mathbb{R})$ and that $H_{0}(\mathbb{C})\subset
H(\mathbb{C})$.

In the case $N=4$, $|T|=1$, the verification of the claim is straightforward.
\end{proof}

\begin{theorem}
\label{TheoremA}(1) The $p$\textit{-adic open string }$N$\textit{-point zeta
function}, $Z^{\left(  N\right)  }\left(  \boldsymbol{s}\right)  $, gives rise
to a holomorphic function on $H(\mathbb{C})$, which contains an open and
connected subset of $\mathbb{C}^{D}$. Furthermore, $Z^{\left(  N\right)
}\left(  \boldsymbol{s}\right)  $ admits an analytic continuation to
$\mathbb{C}^{D}$, denoted also as $Z^{\left(  N\right)  }\left(
\boldsymbol{s}\right)  $, as a rational function in the variables $p^{-s_{ij}%
},i,j\in\left\{  1,\ldots,N-1\right\}  $. The real parts of the poles of
$Z^{\left(  N\right)  }\left(  \boldsymbol{s}\right)  $\ belong to a finite
union of hyperplanes, the equations of these hyperplanes have the form
$C1$-$C6$ \ with the symbols `$< $', `$>$' replaced by `$=$'. (2) If
$\boldsymbol{s}=\left(  s_{ij}\right)  \in\mathbb{C}^{D}$, with
$\operatorname{Re}(s_{ij})\geq0$ for $i,j\in\left\{  1,\ldots,N-1\right\}  $,
then the integral $Z^{\left(  N\right)  }\left(  \boldsymbol{s}\right)  $
diverges to $+\infty$.
\end{theorem}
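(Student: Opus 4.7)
The plan is to assemble Theorem \ref{TheoremA} from the sector decomposition in Lemma \ref{Lema_A} together with Propositions \ref{Proposition_1} and \ref{Proposition_2}; the key combinatorial input will be Lemma \ref{Lemma_H_0}, which produces a non-empty common domain of holomorphy.

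First I would invoke formula (\ref{Formula_Zeta_amplirtude}),
\[
Z^{(N)}\left( \boldsymbol{s}\right) = \sum_{I\subseteq T} p^{M(\boldsymbol{s})}\, Z_{0}\left( \boldsymbol{s};I\right) Z_{1}\left( \boldsymbol{s};T\smallsetminus I\right),
\]
valid under the joint hypotheses of Lemma \ref{Lema_A}(iii). The prefactor $p^{M(\boldsymbol{s})}$ is entire in $\boldsymbol{s}$. By Proposition \ref{Proposition_2} and Remark \ref{Remark_Lemma_5}, each $Z_{0}(\boldsymbol{s};I)$ extends to a rational function in the $p^{-s_{ij}}$, holomorphic on a region defined by inequalities of types $C4$--$C6$; by Proposition \ref{Proposition_1} and Remark \ref{Remark_Lemma_3}, each $Z_{1}(\boldsymbol{s};T\smallsetminus I)$ likewise extends rationally and is holomorphic on a region of types $C1$--$C3$ (with the convention that it equals $1$ when $T\smallsetminus I$ is empty). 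Consequently the right-hand side of the decomposition extends to a rational function of the variables $p^{-s_{ij}}$ on all of $\mathbb{C}^{D}$.

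The key step is Lemma \ref{Lemma_H_0}: it refines $C1$--$C6$ into conditions $C1'$--$C4'$ whose real solution set $H_{0}(\mathbb{R})$ is non-empty, bounded and convex, so that $H_{0}(\mathbb{C})\subseteq H(\mathbb{C})$ contains an open connected subset of $\mathbb{C}^{D}$. On this common domain all summands are simultaneously holomorphic, and the integral (\ref{Zeta_1}) is absolutely convergent by the estimates already used in the proofs of Lemmas \ref{Lema_A}, \ref{Lemma_L_1}, \ref{Lema_L_2} and \ref{Lemma_M_a_J}. Therefore the two sides of the decomposition agree on this non-empty open set, and the principle of analytic continuation then identifies $Z^{(N)}(\boldsymbol{s})$ with the rational expression on all of $\mathbb{C}^{D}$. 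The real parts of its poles are read off from the explicit denominators $S_{i}(\boldsymbol{s};I)$, $R_{i}(\boldsymbol{s};I,K)$, $U_{i}(\boldsymbol{s};J)$ produced in Propositions \ref{Proposition_1} and \ref{Proposition_2}, which are precisely the equality versions of $C1$--$C6$.

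For part (2) I would argue by reduction to a single sector. With $\operatorname{Re}(s_{ij})\geq 0$, the modulus of the integrand of (\ref{Zeta_1}) equals $F(\operatorname{Re}\boldsymbol{s},\boldsymbol{x};N)\geq 0$, so it suffices to prove divergence for real $s_{ij}\geq 0$. Restricting to the sector $Sect(\varnothing)=(\mathbb{Q}_{p}\smallsetminus\mathbb{Z}_{p})^{N-3}$ and applying the change of variables $y_{i}=1/x_{i}$ used in the proof of Lemma \ref{Lema_A}(ii) turns the sector integral into $p^{M(\boldsymbol{s})} Z_{1}(\boldsymbol{s};T)$, understood as a non-negative integral possibly equal to $+\infty$. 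By Corollary \ref{Corollary_Lemma_3A} this integral equals $+\infty$, and since $p^{M(\boldsymbol{s})}>0$, the full integral $Z^{(N)}(\boldsymbol{s})$ diverges to $+\infty$. The main obstacle in the plan is Lemma \ref{Lemma_H_0}: verifying that explicit strict inequalities can be chosen to satisfy the support conditions coming from \emph{every} auxiliary integral $Z_{0}(\boldsymbol{s};I)$ and $Z_{1}(\boldsymbol{s};T\smallsetminus I)$ simultaneously is the delicate combinatorial core of the argument.
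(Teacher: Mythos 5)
Your proposal is correct and follows essentially the same route as the paper: the sector decomposition of Lemma \ref{Lema_A}, holomorphy and rationality of $Z_0$ and $Z_1$ from Propositions \ref{Proposition_1} and \ref{Proposition_2}, the non-empty common domain $H_0(\mathbb{C})$ from Lemma \ref{Lemma_H_0}, and analytic continuation; for part (2) the paper cites Corollary \ref{Corollary_Lemma_3A} together with Remark \ref{Remark_B_Lemma_5}, whereas you isolate the $I=\varnothing$ sector (where $Z_0(\boldsymbol{s};\varnothing)\equiv 1$ by convention) and apply Corollary \ref{Corollary_Lemma_3A} directly, which is the same idea spelled out a bit more explicitly.
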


\begin{proof}
(1) We recall that
\begin{equation}
Z^{(N)}\left(  \boldsymbol{s}\right)  =\sum_{I\subseteq T}Z^{(N)}\left(
\boldsymbol{s};I\right)  =\sum_{I\subseteq T}p^{M(\boldsymbol{s})}Z_{0}%
^{(N)}\left(  \boldsymbol{s};I\right)  Z_{1}^{(N)}\left(  \boldsymbol{s}%
;T\smallsetminus I\right)  , \label{Formula_zeta}%
\end{equation}
see Remark \ref{Nota_Lemma_A}. Now, by Propositions \ref{Proposition_1}%
-\ref{Proposition_2} and Lemma \ref{Lemma_H_0}, for any $I\subseteq T$,
$Z_{0}^{(N)}\left(  \boldsymbol{s};I\right)  $ and $Z_{1}^{(N)}\left(
\boldsymbol{s};T\smallsetminus I\right)  $ are holomorphic functions of
$\boldsymbol{s}\in H_{0}(\mathbb{C})$, which is an open and connected subset,
and consequently the analytic continuations of \ the integrals $Z_{0}%
^{(N)}\left(  \boldsymbol{s};I\right)  $ and $Z_{1}^{(N)}\left(
\boldsymbol{s};T\smallsetminus I\right)  $ and formula (\ref{Formula_zeta})
give rise to an analytic continuation of $Z^{(N)}\left(  \boldsymbol{s}%
\right)  $\ with the announced properties.

(2) It follows from formula (\ref{Formula_zeta}) by Corollary
\ref{Corollary_Lemma_3A} and Remark \ref{Remark_B_Lemma_5}.
\end{proof}

\begin{acknowledgement}
The authors wish to thank to the referees for their careful reviewing of the
original manuscript, and for many useful suggestions and comments that help us
to improve the original manuscript.
\end{acknowledgement}

\bigskip

\end{document}